\tikzstyle{vertex}=[circle, draw, inner sep=0pt, minimum size=3pt, fill]
\declaretheorem[
	style=remark,
	qed=$\triangle$,
	name=Example,
	numberwithin=section]{example}
\declaretheorem[
	style=remark,
	qed=$\triangle$,
	name=Example: Simple Loop]{loopexample}
\theoremstyle{plain}
\newtheorem{thm}{Theorem}[section]
\newtheorem{prop}[thm]{Proposition}
\newtheorem{lemma}[thm]{Lemma}
\newtheorem{corol}[thm]{Corollary}
\theoremstyle{definition}
\newtheorem{defn}[thm]{Definition}
	\renewcommand{\itemautorefname}{\@gobble}
	\def\equationautorefname~#1\null{Equation~(#1)\null}
\numberwithin{equation}{section}
\def\l{\ell}
\def\N{\mathbb{N}}
\def\Q{\mathbb{Q}}
\def\R{\mathbb{R}}
\def\C{\mathbb{C}}
\def\DD{\mathcal{D}}
\def\V{\mathscr{V}}
\def\H{\mathcal{H}}
\def\L{\mathcal{L}}
\def\U{\mathcal{U}}
\def\ones{\mathbbm{1}}
\DeclareMathOperator{\dom}{dom}
\DeclareMathOperator{\range}{range}
\DeclareMathOperator{\diag}{diag}
\newcommand{\abs}[1]{\left| #1 \right|}
\newcommand{\norm}[1]{\left\Vert #1 \right\Vert}
\newcommand{\carrow}[1]{\overset{\curvearrowright}{#1}}
\newcommand{\veval}[1]{\underline{#1}}
\newcommand{\oveval}[1]{\carrow{\underline{#1}}}
\def\D{\diff}
\newcommand{\Dt}[1][]{\D[#1]{}{t}}
\newcommand{\Dx}[1][]{\D[#1]{}{x}}
\let\colon\relax
\DeclareMathSymbol{:}{\mathpunct}{operators}{"3A}
\DeclareMathSymbol{\colon}{\mathrel}{operators}{"3A}
\renewcommand{\mid}{\colon}
\definecolor{azulUC3M}{RGB}{0,0,102}
\renewcommand*{\title}{Quantum Control at the Boundary: \\ an application to quantum circuits}
\renewcommand*{\author}{Á. Aitor Balmaseda Martín}
\newcommand{\advisor}{Alberto Ibort Latre \\ Juan Manuel Pérez Pardo}
\renewcommand*{\date}{September, 2018}
\begin{document}
	\begin{titlepage}
	\begin{figure}[!t]
		\begin{center}
			\includegraphics[width=15cm]{./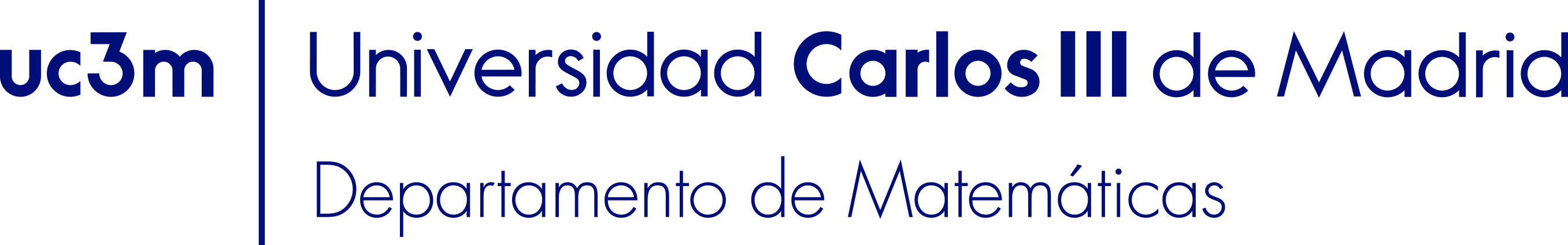}
		\end{center}
	\end{figure}
	\begin{center}
		\color{azulUC3M}
		\vspace{0.4444in}
		\vspace*{0.222in}
		\rule{130mm}{0.5mm}\\
		\vspace*{0.2in}
		\begin{LARGE}
			\textsc{\title} \\
		\end{LARGE}
		\vspace*{0.2in}
		\rule{130mm}{0.5mm}\\
		\vspace*{0.5in}

		\begin{large}
			Master thesis by \\
		\end{large}
		\begin{Large}
			\author \\
		\end{Large}

		\vspace*{0.3in}
		\rule{40mm}{0.1mm}\\
		\vspace*{0.15in}
		\begin{large}
			Work Submitted in Partial Fulfillment of the\\
			Requirements for the Master Degree of \\
			Mathematical Engineering\\
		\end{large}
		\vspace*{0.1in}
		\rule{40mm}{0.1mm}\\
		\vspace*{0.3in}
		\begin{large}
		Supervised by:\\
		\advisor\\
		\end{large}
		\vspace*{1.3in}
		\begin{Large}
			\date
		\end{Large}
	\end{center}
\end{titlepage}

	\thispagestyle{empty}
	\tableofcontents
	\newpage
	\setcounter{page}{1}
	\section{Introduction}
		Approximate controllability for a quantum system on a graph using as control parameters boundary conditions will be proven. This establishes a first theoretical proof of the feasibility of the quantum control at the boundary paradigm.

		It is difficult to express the long-term vision embracing the research aimed to develop ideas around quantum information better than in the opening of the recently launched ``Quantum Manifest'':\footnote{\url{http://qurope.eu/system/files/u567/Quantum\%20Manifesto.pdf}}

		\begin{quote}
			With quantum theory now fully established, we are required to look at the world in a fundamentally new way: objects can be in different states at the same time (superposition) and can be deeply connected without any direct physical interaction (entanglement)... The developments in the leading areas of quantum technologies: communications, simulators, sensors and computers, can be expected to produce transformative applications with real practical impact on ordinary people. The technology tracks showing the underlying scientific and engineering milestones paving the way for disruptive applications are based on predictions from leading scientists in Europe.
		\end{quote}

		A basic requirement for an effective quantum information processing system, quantum sensor, or simulator is the ability to control the quantum state of the system at the individual level, with much better precision than has been achieved before. There have been many attempts to develop technologies that would allow for such delicate control: from atoms in optical lattices and atom chips to trapped ions and superconducting charge qubits. In particular, in the domain of quantum computation, quantum error correction enables fault-tolerant quantum computation to be performed, provided that each elementary operation meets a certain fidelity threshold. Unfortunately, this puts extremely demanding constraints on the allowable errors. Qubits are encoded in the quantum states of physical systems and even if nano-fabricated quantum qubits permit large-scale integration, they usually suffer from short coherence times due to interactions with their environment. The outstanding challenge is to engineer the environment so that it minimally affects the qubit, but still allows qubit control \cite{kawakami_electrical_2014}. On the other hand, superposition states are naturally very sensitive to the environment, and can therefore be used to make very accurate sensors. However, because of the strict operating conditions, such as low temperature, these devices are difficult to integrate with standard solid state devices to be used in industrial and/or commercial applications.

		In spite of all this, quantum control (optimal or not) of coupled (or standalone) quantum spin systems is becoming more and more relevant because of their experimental implementation achievements. For qubits based on spin, a universal single-qubit gate has been realized by a rotation of the spin by any angle about an arbitrary axis \cite{press_complete_2008} and the control and high-fidelity readout of a nuclear spin qubit was shown in \cite{pla_high-fidelity_2013}. Even more, individual spins, associated with vacancies in the silicon carbide lattice, have also been observed and coherently manipulated \cite{morello_quantum_2015}. In the same vein electrical control of a long-lived spin qubit in a Si/SiGe quantum dot was shown in \cite{malissa_room-temperature_2014}. Let us emphasize that interacting or coupled spin systems are fundamental in quantum computation as a network of interacting and controllable spin qubits can act as a quantum computer. However, because of their magnetic and quantum-mechanical nature, the spin qubits must be controlled and measured using radically different techniques as compared to classical, transistor-based bits. Further developments will aim at measuring and controlling the exchange interaction between pairs of spins, to demonstrate a fully functional 2-qubit quantum logic gate (see for instance the analysis of continuous feedback control in \cite{wiseman_quantum_2012}).

		Geometrical control theory has provided the mathematical background to deal with quantum spin control. Khaneja et al.\ showed how to obtain efficient RF pulse trains for two-spin and three-spin NMR systems by finding sub-Riemannian geodesics on a quotient space of $SU(4)$ \cite{khaneja_sub-riemannian_2002} and the subsequent numerical implementations \cite{khaneja_optimal_2005}. We should also mention \cite{moseley_geometric_2004} for a geometric control study of quantum spin systems and \cite{schulte-herbruggen_optimal_2005} for an optimal control discussion of blocks of quantum algorithms (see also \cite[Chps. 5,6]{dalessandro_introduction_2007} and the recent review of geometric optimal control for quantum systems in NMR by Bonnard  et al.\ \cite{bonnard_review_2012} and references therein). More recently a class of optimal control problems for coupled spin systems using geometrical tools was described in \cite{delgado-tellez_optimal_2016}. A review on the current state-of-the-art on quantum optimal control can be found in \cite{glaser_training_2015}.

		However, geometric control theory and its extension to optimal control problems suffers serious drawbacks when extended to genuine infinite dimensional quantum systems and only a few results on controllability of bilinear systems are known (see for instance Beauchard et al.\ \cite{beauchard_local_2005,beauchard_controllability_2006}, and Chambrion et al.\ \cite{chambrion_controllability_2009} and references therein). We should also mention that alternative to geometric control theory is the use of quadrature operators in control theory (see \cite{weedbrook_gaussian_2012} and references therein). There, the conventional approach is introduction of quadrature operators, and then studying and controlling the dynamics of these quadratures, for example through their Wigner functions (see for instance \cite{agarwal_quantum_2012}, \cite{carlini_time-optimal_2014}, \cite{genoni_optimal_2013}).

		The quantum control at the boundary (QCB) method is a radically different approach to the problem of controlling the state of a qubit. Instead of seeking the control of the quantum state by directly interacting with it using external magnetic or electric fields, the control of the state will be achieved by manipulating the boundary conditions of the system. The spectrum of a quantum system, for instance an electron moving in a box, depends on the boundary conditions imposed on it, either Dirichlet or Neumann in most cases. A modification of such boundary conditions modifies the state of the system allowing for its manipulation and, eventually, its control \cite{ibort_quantum_2010}.

		The QCB paradigm has been used to show how to generate entangled states in composite systems by suitable modifications of the boundary conditions \cite{ibort_boundary_2014}. The relation of QCB and topology change has been explored in \cite{perez-pardo_boundary_2015} and recently used to describe the physical properties of systems with moving walls (\cite{facchi_moving_2016}, \cite{facchi_boundaries_2018}, \cite{facchi_quantum_2018}, \cite{facchi_self-adjoint_2018}, \cite{garnero_quantum_2018}), but in spite of its intrinsic interest some basic issues such as the QCB controllability of simple systems has never been addressed.

		This work is devoted to explore the (approximate) controllability of a quantum circuit, that is, of a free quantum system on a graph, by using QCB. Notice that in addition to the relevance of the problem in the context discussed already, quantum systems on graphs provide a setting for universal computation \cite{childs_universal_2009}, thus QCB opens a new way to implement quantum computation.

		In developing the theory it will be shown first, by means of a suitable chosen time-dependent unitary transformation, that the variation of the boundary conditions of the system can be implemented as a time-dependent family of Hamiltonian operators, an idea that was already anticipated in \cite{perez-pardo_boundary_2015}. The particular instance of quasi-periodic boundary conditions on graphs will be worked out explicitly and it will be shown that the system reduces to a bilinear system similar to those studied by Chambrion et al.\ \cite{chambrion_controllability_2009}.

		This work is organized as follows. In Section \ref{sec:preliminaries} we will review some basic aspect both of Hilbert spaces theory and control theory in order to fix the notation we will use. Section \ref{sec:quantum-graphs} introduces some basic ideas of graph theory before reviewing some quantum graphs theory, providing the language we are going to use to describe the spatial setting of the systems studied in this work. On the other hand, Section \ref{sec:magnetic-laplacian} is devoted to the study of magnetic Laplacians, establishing the relation between them and quasi-periodic boundary conditions (which correspond with what we call quasi-$\delta$-type vertex conditions) that will be key to prove controllability using QCB. After introducing this basic concepts related with the spatial description of the system, Section \ref{sec:time-evolution} covers some results on time-dependent Hamiltonians and their associated evolution which will be needed in Section \ref{sec:boundary-driven-dynamics} to prove the (approximate) controllability of some quantum circuit by using QCB.

	\section{Preliminaries} \label{sec:preliminaries}
		\subsection{Hilbert space notions}\label{subsec:preliminaries-hilbert}
			Since our aim is treating quantum mechanical problems, it is clear that Hilbert spaces play a central role throughout the text, as do self-adjoint operators. In order to remind the main concepts we use in the following sections, this section is devoted to summarize basic ideas from Hilbert spaces theory while fixing the notation we use.

			Through this work we will use the letter $\H$ to denote Hilbert spaces, that is, vector spaces endowed with an inner product $\langle \cdot, \cdot \rangle$. We denote by $\L(\H)$ the space of linear (possibly unbounded) operators from $\H$ into itself.

			Let $T \in \L(\H)$ be densely defined. We define its adjoint as the linear operator $T^\dagger \in \L(\H)$ with domain
			\[
				\dom T^\dagger = \{\varphi \in \H \mid \exists \eta \in \H,
				\forall \psi \in \dom T,
				\langle \varphi, T\psi \rangle = \langle \eta, \psi \rangle\}
			\]
			and for every $\varphi \in \dom T^\dagger$ we define $T^\dagger \varphi = \eta$. By the Riesz Lemma, $\varphi$ being in $\dom T^\dagger$ is equivalent to have $\abs{\langle \varphi, T\psi \rangle} \leq C \norm{\psi}$ for all $\psi \in \dom T$.

			Given two operators $T_1, T \in \L(\H)$, $T_1$ is said to be an extension of $T$ if $\dom T_1 \supset \dom T$ and $T_1 \varphi = T \varphi$ for all $\varphi \in \dom T$; we denote it by $T_1 \supset T$.

			We say that $T \in \L(\H)$ is closed if so is its graph,
			\[ \Gamma(T) = \{(\psi, T\psi) \in \H \times \H \mid \psi \in \dom T \},\]
			and we say that it is closable if there exists an extension of $T$ which is closed. It is easy to see that adjoint operators are always closed \cite[Thm. VIII.1]{reed_methods_2012}. For a closed operator $T$ we define its resolvent set, $\rho(T)$, made of the complex numbers $\lambda$ such that $\lambda I - T$ is a bijection of $\dom T$ onto $\H$ with bounded inverse; and for any $\lambda \in \rho(T)$ we define the resolvent of $T$ at $\lambda$, $R_\lambda(T) = (\lambda I - T)^{-1}$. A complex number $\lambda$ is in the spectrum of $T$, $\sigma(T)$, if it is not in $\rho(T)$. The subset of $\sigma(T)$ made of those values $\lambda \in \C$ such that $\lambda I - T$ is not injective (that is, there exists $\H \ni \psi \neq 0$ such that $T\psi = \lambda\psi$) is called the point spectrum of $T$. In this case, $\lambda$ is said to be an eigenvalue and $\psi$ an associated eigenvector (or eigenstate). If $\lambda I - T$ is injective but not surjective with $\range(\lambda I - T)$ dense, $\lambda$ is said to be in the continuous spectrum of $T$, while in the case where $\lambda I - T$ is injective but $\range(\lambda I - T)$ is not dense in $\H$, $\lambda$ is said to be in the residual spectrum of $T$. The essential spectrum of $T$ is the union of both the continuous and residual spectrum.

			We say that $T \in \L(\H)$ is a symmetric operator if $T^\dagger \supset T$, and we say that $T$ is self-adjoint if $T = T^\dagger$ (notice that this implicitly requires that $\dom T = \dom T^\dagger$. It can be proved that every symmetric operator is closable and every self-adjoint operator is closed, \cite[\S VIII.2]{reed_methods_2012}.

			In order to study the evolution of quantum systems one needs to address the problem of finding self-adjoint extensions of a symmetric operator which fits the physical problem under study. For doing that, Von Neumann developed a general theory of self-adjoint extensions of symmetric operators in Hilbert spaces \cite{neumann_allgemeine_1930}. An approach to the problem using Von Neumann's Theorem can be found, for example, in \cite[\S X.1]{reed_methods_1975} but in this work we are going to follow the approach introduced by M. Asorey, A. Ibort and G. Marmo \cite{asorey2005global,asorey2015topology} for finding self-adjoint extensions of first and second order elliptic differential operators which relates boundary conditions and self-adjoint extensions. Other references on the relation of boundary conditions and self-adjoint extensions are the works by G. Grubb \cite{Grubb1968}, A.N. Kochubei \cite{Kochubei1975} and J. Brüning et al.\ \cite{bruning_spectra_2008}. An illustration of the procedure for the one-dimensional case can be found in \autoref{subsec:selfadjoint-extensions}.

			Finally, we will deal with time-dependent Hamiltonians and for doing so we will use notions of continuity and differentiability for operator-valued functions. We will use those given by the so-called strong (operator) topology. The strong topology can be defined in the more general context of convex spaces (see \cite[\S V.7]{reed_methods_2012}) and it is the coarsest topology such that the evaluation maps $T \in \L(\H) \mapsto T \psi \in \H$ are continuous in $\H$ for any fixed $\psi$. Therefore, the notions of continuity and differentiability we are going to use can be stated as follows.

			\begin{defn}\label{def:strong-continuity-differentiability}
				Let $T: \R \to \L(\H)$ be a real, operator-valued function such  that $\DD \coloneqq \dom T(t)$ does not depend on $t$.\footnote{We assumed that $\dom T(t)$ does not depends on $t$ just for simplicity. For a more general definition one could take $\DD = \bigcap_t \dom T(t)$.}
				\begin{enumerate}[label=\textit{(\roman*)},nosep]
					\item We say that $T$ is strongly continuous (or continuous in the strong sense) if it is continuous with respect to the strong operator topology; that is, if the mapping $t \mapsto T(t) \psi$ is continuous (in the sense of the Hilbert space) for any $\psi \in \DD$.
					\item We say that $T$ is strongly differentiable if the mapping $t \mapsto T(t) \psi$ is differentiable (in the sense of the Hilbert space) for every $\psi \in \DD$; that is, $T$ is strongly differentiable if for every $t$ there exists an operator $\D{T}{t}(t) \in \L(\H)$, which we call strong derivative of $T(t)$, with domain $\dom \D{T}{t}(t) \subset \mathcal{D}$ and defined by the limit
					\[
						\lim_{h \to 0} \frac{\norm{T(t + h)\psi - T(t)\psi - h \D{T}{t}(t) \psi}}{\abs{h}} = 0, \qquad \text{for every } \psi \in \dom \D{T}{t}(t).
					\]
				\end{enumerate}
			\end{defn}

		\subsection{Controllability notions}\label{subsec:controllability-notions}
			Since this work intends to discuss controllability of some quantum systems, it is convenient to introduce briefly the notions of controllability we use. From a mathematical point of view, a quantum system is just a dynamical system on a vector space which evolves linearly. Thus, the ideas in classical control can be applied straightforwardly. This approach has been successful for finite-dimensional quantum systems (or finite-dimensional approximations to infinite-dimensional ones), leading to complete characterizations of the various notions of controllability of classical systems in the (finite-dimensional) quantum case (see, for instance, \cite{dalessandro_introduction_2007} and references therein). However, it is not straightforward to extend that approach to infinite-dimensional quantum systems for several reasons such as the analytical difficulties raising from the appearance of unbounded operators.

			Here we will address the problem of state controllability of some quantum systems, defined as follows:
			\begin{defn}
				Let $\psi(t;u)$ denote the evolution of a quantum system with controls $u$ starting in some initial state $\psi_0$; we say that such a quantum system is (state) controllable if for every initial state $\psi_0$ and target state $\psi_T$, there exists a time $T>0$ and a control $u$ such that $\psi(T;u) = \psi_T$.
			\end{defn}
			There are well-known results on state controllability for finite-dimensional quantum systems. For instance, it has been considered from a Lie-theoretical point of view, determining necessary and sufficient conditions for a given dynamical group to act transitively on the set of normalized pure states (see for instance \cite{dalessandro_introduction_2007}). However, these results doesn't generalize naively to the infinite-dimensional case even when they apply finite-dimensional approximations of it. It can be easily checked that the harmonic oscillator control problem is not controllable (see for instance \cite{mirrahimi_controllability_2004}), while its truncation up to the first $n+1$ energy eigenstates is controllable for every $n$ \cite{ramakrishna_controllability_1995}.

			Although there is not a result giving necessary and sufficient conditions on (state) controllability for infinite-dimensional quantum systems, there are some results on controllability of the 1D Schrödinger equation (for example, \cite{beauchard_local_2005,beauchard_controllability_2006}).

			A less restrictive notion is that of approximate (state) controllability:
			\begin{defn}
				Let $\psi(t;u)$ denote the evolution of a quantum system with controls $u$ starting in some initial state $\psi_0$; we say that such a quantum system is approximately (state) controllable if for every initial state $\psi_0$, every target state $\psi_T$ and every $\varepsilon > 0$, there exists a time $T>0$ and a control $u$ such that $\norm{\psi(T;u) - \psi_T} < \varepsilon$.
			\end{defn}
			This is the notion of controllability we consider in this work, because of two main reasons: first, it is suitable for experimental purposes where absolute precision cannot be achieved; second, it is what one should expect in a case where all the finite-dimensional approximations of an infinite-dimensional system are exactly controllable. In this sense, it is noticeable the result by Chambrion et al.\ in \cite{chambrion_controllability_2009}, which is a key piece on the proof of the main result in this work (\autoref{thm:controllability-piecewise-smooth}).

	\section{Quantum graphs and quantum wires}\label{sec:quantum-graphs}
		\subsection{Basic notions of graph theory}\label{subsec:graph-theory}
			A directed graph (sometimes abbreviated as digraph) $G = (V, E, \partial)$ consists on a set of vertices $V = V(G)$, a set of edges $E = E(G)$ and an orientation $\partial = \partial_G : E \to V \times V$ that maps any edge $e\in E$ to the pair of vertices $(\partial_- e, \partial_+ e) \in V \times V$ composed by its starting vertex $\partial_- e$ and the target vertex $\partial_+ e$. In what follows we will assume that the graphs we work with are finite, that is, the vertices and edges sets have a finite number of elements.

			For two subsets of vertices $A, B \subset V$, we denote
			\[
				E^+ (A,B) \coloneqq \{e \in E \mid \partial_- e \in A, \partial_+ e \in B\}.
			\]
			the set of edges starting on a vertex in $A$ which ends on a vertex in $B$. Likewise we denote $E^-(A,B) = E^+(B,A)$ and its disjoint union $E(A,B) = E^+(A,B) \sqcup E^-(A,B)$ whose elements are all the edges between $A$ and $B$ regardless of their orientation. Following the same idea we denote the set of edges between two vertices by $E(v,w)$ and
			\[ E^\pm_v \coloneqq E^\pm(V,v) = \{e\in E \mid \partial_\pm e = v\} \]
			denotes the set of edges starting ($-$) or ending ($+$) on $v$. Finally we will denote by $E_v = E^+_v \sqcup E^-_v$ the set of edges incident to $v$, and by $\deg v \coloneqq |E_v|$ the degree of a vertex. Note that we have by definition $E_v$ contains twice every edge with $\partial_+ e = \partial_- e$, and thus loops appears twice on $E$.

			A concept extensively used in this work is that of vertex spaces.
			\begin{defn}[Definition 2.1.3 in \cite{post2012spectral}]\label{def:vertex-space}
				Let $G = (V, E, \partial)$ be a directed graph.
				\begin{enumerate}[label={\itshape(\roman*)}]
					\item We denote the maximal vertex space at vertex $v\in V$ by $\V_v \coloneqq \C^{E_v}$, that is the vector space $\V_v \ni F(v) = \{F_e(v)\}_{e\in E_v}$ with $\deg v$ complex components (one for each edge incident to $v$).
					\item The corresponding total vertex space associated with $G$ are
					\[
						\V \coloneqq \bigoplus_{v\in V} \V_v.
					\]
					This space is endowed with the norm
					\[ \norm{F}_\V^2 \coloneqq \sum_{v \in V} \sum_{e \in E_v} |F_e(v)|^2. \]
				\end{enumerate}
			\end{defn}
			A general vertex space at the vertex $v$ can be defined as a linear subspace of $\V_v$, and one can define the associated total vertex space analogously; however, we will only use along this work the maximal vertex spaces.

			Until now we have only introduced discrete directed graphs in which the vertices play a major role while edges are understood just as relationships between vertices, but it is turn to introduce metric graphs in which edges play the central role.

			\begin{defn}
				A graph $G$ (directed or not) is a metric graph if associated to every edge $e \in E$ there is a length $\l_e \in (0, \infty)$.
			\end{defn}
			Even if sometimes it might be convenient to allow edges to have infinite length, the aim of this work is to study one dimensional quantum systems with boundary, and so we restrict to compact graphs; i.e. finite metric graphs whose edges have finite length.

			In a metric graph every edge $e\in E$ can be identified with an interval $I_e = [0,\l_e]$ and doing so a coordinate $x_e$ is introduced over that edge. If the graph $G$ is a digraph, that coordinate must be consistent with the edge orientation (meaning $x_e = 0$ at $\partial_- e$ and $x_e = \l_e$ at $\partial_+ e$), while if there weren't an orientation on $G$ we inevitably induce one when choosing the vertex in which $x_e = 0$. When there is no confusion we will denote by $x$ that coordinate, dropping the subscript. Note that by identifying edges with real intervals, $G$ is endowed with a natural metric space structure.\footnote{Here we have introduced a slight abuse of notation which will continue for the rest of this dissertation: we are intentionally denoting by $G$ both the underlying digraph and the metric space associated to it.} The distance between two points in $G$ (i.e. two points in the edges of $G$) is defined as the accumulative length induced by coordinates $x_e$ of the shortest path in $G$ between them.

			The identification between edges and intervals allows us to describe functions over the graph $\psi: G \to \C$ as a family of functions of a real variable $\{\psi_e: I_e \to \C\}_{e\in E}$ labeled by the edges. Therefore, notions of analysis can be brought to $G$, defining the space of square-integrable functions over the graph as
			\[
				L_2(G) \coloneqq \bigoplus_{e \in E} L_2(I_e),
				\quad
				\norm{\psi}_{L_2(G)}^2 \coloneqq \sum_{e\in E} \norm{\psi_e}_{L_2(I_e)}^2 = \sum_{e\in E}\int_{I_e} |\psi_e(x)|^2 \, dx;
			\]
			that is, functions over the graph whose edge restrictions $\{f_e\}_{e\in E}$ are square-integrable. Likewise we define Sobolev type spaces over $G$:
			\[
				\tilde{H}^k(G) \coloneqq \bigoplus_{e \in E} H^k(I_e), \qquad
				\norm{\psi}_{\tilde{H}^k(G)}^2 \coloneqq \sum_{e\in E} \norm{\psi_e}_{H^k(I_e)}^2.
			\]
			Note that functions in this Sobolev type spaces are not assumed to be regular at the vertices in any sense.

			In next sections we will use the vertex evaluation of a function $\psi: G \to \C$, both oriented and unoriented:
			\[
				\veval{\psi} \coloneqq \{\{\veval{\psi}_e(v)\}_{e \in E_v}\}_{v \in V} \in \V, \qquad
				\oveval{\psi} \coloneqq \{\{\oveval{\psi}_e(v)\}_{e \in E_v}\}_{v \in V} \in \V,
			\]
			where
			\[
				\veval{\psi}_e(v) \coloneqq \begin{cases}
					\psi_e(0)		&	\text{if } v = \partial_- e, \\
					\psi_e(\l_e)	&	\text{if } v = \partial_+ e,
				\end{cases}
				\qquad\qquad
				\oveval{\psi}_e(v) \coloneqq \begin{cases}
					-\psi_e(0)		&	\text{if } v = \partial_- e, \\
					\psi_e(\l_e)	&	\text{if } v = \partial_+ e.
				\end{cases}
			\]

			It is easy to check that, using the notation we have introduced, the integration by parts formula for functions $\psi,\varphi \in \tilde{H}^1(G)$ can be written as
			\[ \langle \psi', \varphi \rangle_{L_2(G)} = \langle \veval{\psi}, \oveval{\varphi} \rangle_\V - \langle \psi, \varphi' \rangle_{L_2(G)}, \]
			where $\psi' = \{\psi_e'\}_{e \in E}$ denotes the edge-by-edge derivative of $\psi$, and analogously for $\varphi$.

		\subsection{Quantum graph and vertex conditions}\label{subsec:selfadjoint-extensions}
			Now that we have defined metric graphs, it is natural to define differential operators over them, which lead us to quantum graphs.
			\begin{defn}
				A quantum graph $(G, H)$ consists on a metric graph $G$ and a self-adjoint differential operator $H: \DD \subset \tilde{H}^k(G) \to L_2(G)$ for some $k$.
			\end{defn}

			In this work we will consider two families of differential operators: the standard Laplacian and the magnetic Laplacian. In this subsection, we will use the former to introduce the notion of vertex conditions and in the next section we will cover the properties of magnetic Laplacians. Given a metric graph, the standard Laplacian over it takes $\psi \in \DD \subset L_2(G)$ and maps it into $\Delta\psi = \psi'' = \{\psi_e''\}_{e \in E}$; that is, it acts as the usual Laplacian on every interval $I_e$ associated to each edge. We have yet to determine the domain $\DD$ of $\Delta$, and for $(G, \Delta)$ to be a quantum graph we have to chose $\DD$ such that $\Delta$ is self-adjoint. That is the goal of this subsection: to find the conditions defining $\DD$ for $\Delta$ to be self-adjoint.

			Since we are dealing with a second-order differential operator, it is natural to take $\DD \subset \tilde{H}^2(G)$. Then, by the integration by parts formula, we have
			\[
				\langle \varphi, \Delta \psi \rangle_{L_2(G)}
				- \langle \Delta \varphi, \psi \rangle_{L_2(G)}
				= \langle \veval{\varphi}, \oveval{\psi'} \rangle_\V
				- \langle \veval{\varphi'}, \oveval{\psi} \rangle_\V
				\eqqcolon \Sigma(\psi,\varphi).
			\]
			Therefore, in order to have $\Delta$ self-adjoint $\DD$ must be such that $\dom \Delta^\dagger = \DD$ and hence the boundary term $\Sigma(\psi, \varphi) = 0$ for any $\psi, \varphi \in \DD$. It is clear that $\psi \in \tilde{H}^2_0(G) = \bigoplus_{e \in E} H_0^2(I_e)$ implies $\Sigma(\psi, \varphi) = 0$ for any $\varphi \in \tilde{H}^2(G)$; that is, taking $\dom \Delta = \tilde{H}^2_0(G)$ we have a symmetric operator. In order to find self-adjoint extensions of $\Delta$ we follow the method introduced by M. Asorey, A. Ibort and G. Marmo \cite{asorey2005global,asorey2015topology}. In the case we are studying, the Laplacian over a one-dimensional manifold, the procedure is quite simple since boundary data form a finite dimensional vector space, $\V$. Here we will describe the method without going into the details, for what we refer to the original papers \cite{asorey2005global,asorey2015topology,bruning_spectra_2008,Kochubei1975}.

			As we have defined $\psi'$, we have
			\[
				\oveval{\psi'}_e(v) = \begin{cases}
					-\psi'_e(0)	&	\text{if } v = \partial_- e, \\
					\psi'_e(\l_e)	&	\text{if } v = \partial_+ e;
				\end{cases}
			\]
			i.e. $\oveval{\psi'}_e(v)$ is the normal derivative of $\psi_e$ evaluated on the point of $I_e$ corresponding to the vertex $v$. Therefore, we will denote it by $\veval{\dot{\psi}} = \oveval{\psi'}$ in order to get a notation similar to that used by Asorey et al.\ With this notation we have
			\begin{equation}\label{eq:boundary-terms}
				\Sigma(\psi,\varphi) =
				\langle \veval{\varphi}, \veval{\dot{\psi}} \rangle_\V
				- \langle \veval{\dot{\varphi}}, \veval{\psi} \rangle_\V,
			\end{equation}
			and it is easy to check
			\[
				i\Sigma(\psi, \varphi)
				= \langle \veval{\varphi} - i\veval{\dot{\varphi}}, \veval{\psi} - i\veval{\dot{\psi}} \rangle _\V
				- \langle \veval{\varphi} + i\veval{\dot{\varphi}}, \veval{\psi} + i\veval{\dot{\psi}} \rangle_\V.
			\]
			Hence, the domains of self-adjoint extensions are those maximal subspaces such that $\Sigma$ vanishes on them. It can be proved that those domains are characterized by a unitary operator $U \in \U(\V)$,
			\[
				\dom \Delta_U = \{\psi\in \tilde{H}^2(G) :
				\veval{\psi} - i \veval{\dot{\psi}} = U (\veval{\psi} + i \veval{\dot{\psi}})\}.
			\]
			A complete proof of this fact can be found in \cite{ibort_self-adjoint_2015}, \cite{bruning_spectra_2008} and \cite{Kochubei1975}.

			The condition $\veval{\psi} - i \veval{\dot{\psi}} = U (\veval{\psi} + i \veval{\dot{\psi}})$ enforces some relations between the values at the vertices of $\psi$ and its derivative, and that is why this kind of conditions are called vertex conditions. Note that in the most general case the conditions on $\psi$ enforces relations between values of the function and its derivative at different vertices; i.e. the condition $\veval{\psi} - i \veval{\dot{\psi}} = U (\veval{\psi} + i \veval{\dot{\psi}})$ is not a local vertex condition but a global one. It can be shown that global vertex conditions can be transformed into local ones by modifying the graph (see \cite[\S1.4.5]{berkolaiko2013introduction}).

			Local vertex conditions are those whose unitary operator $U$ is such that it does not relate values of $\psi$ and its derivative at different vertices; that is, $U$ decomposes as a direct sum of unitary operators over $\V_v$:
			\[ U = \bigoplus_{v\in V} U_v, \qquad U_v \in \U(\V_v). \]
			Therefore, local vertex conditions can be expressed as
			\[
				\veval{\psi}(v) - i \veval{\dot{\psi}}(v) = U_v [\veval{\psi}(v) + i \veval{\dot{\psi}}(v)],
				\qquad \forall v \in V,
			\]
			where $\veval{\psi}(v) = \{\veval{\psi}_e(v)\}_{e \in E_v} \in \V_v$ for any function $\psi$ on $G$. Local vertex conditions that make the Laplacian on $G$ a self-adjoint operator can be written in other equivalent ways, and a more complete characterization of them can be found in \cite[Thm. 1.4.4]{berkolaiko2013introduction}.

			Finally, we conclude the subsection with a standard example of local vertex conditions: the \emph{Kirchhoff-Neumann} vertex conditions. A function $\psi$ satisfies Kirchhoff-Neumann vertex conditions if it is continuous at every vertex $v \in V$ and it holds $\sum_{e \in E_v} \veval{\dot{\psi}}_e(v) = 0$ for all vertices. Notice that, since the domains that we consider are always a subset of $\tilde{H}^1(G)$, $\psi$ is guaranteed to be continuous except, at most, at the vertices due to Sobolev embeddings (cf. \cite{adams_sobolev_2003}). Therefore, Kirchhoff-Neumann conditions require $\psi \in \dom \Delta$ to be continuous on $G$. In order to check that this vertex condition defines a self-adjoint extension of the Laplacian we should, for example, show that they are given by a unitary $U$ as described above; however, we will have to wait until the next subsection to introduce the ideas used to obtain such unitary operator. For now, let us just check that the boundary term $\Sigma(\psi,\varphi)$ vanishes. For checking that, it suffices to see that for $\psi, \varphi$ satisfying Kirchhoff-Neumann vertex conditions it holds
			\[
				\langle \veval{\psi}, \veval{\dot{\varphi}} \rangle_\V
				= \sum_{v \in V} \sum_{e \in E_v} \overline{\veval{\psi}_e(v)} \veval{\dot{\varphi}}_e(v)
				= \sum_{v \in V} \overline{\psi(v)} \sum_{e \in E_v} \veval{\dot{\varphi}}_e(v) = 0,
			\]
			where we have used continuity of $\psi$ at $v$. Substituting into \autoref{eq:boundary-terms} one gets $\Sigma(\psi, \varphi) = 0$.

		\subsection{Quantum wires}\label{subsec:quantum-wires}
			In the previous subsection we have reviewed how self-adjoint extensions of the Laplacian are parametrized by unitary operators, so that given a unitary operator in $\U(\V)$ one has the associated self-adjoint extension. But one is often interested on finding all the self-adjoint extensions such that functions on their domains satisfy certain (vertex) conditions, and this subsection will provide the ideas needed for solving that kind of problems. Before we introduce quantum wires, the particular problem we are interested in, let us write vertex conditions in a more convenient way. Considering vertex conditions given by $U \in \U(\V)$, let $P_1$ and $P_{-1}$ be the orthogonal projectors over the eigenspaces of $U$ associated with eigenvalues $1$ and $-1$ respectively, and denote by $P^\perp = I - P_1 - P_{-1}$ the orthogonal projection onto the space spanned by the eigenvectors of $U$ associated to other eigenvalues. Note that, since $\V$ is a linear space of finite dimension, $U$ has no essential spectrum. Multiplying the vertex condition
			\[ \veval{\psi} - i \veval{\dot{\psi}} = U(\veval{\psi} + i\veval{\dot{\psi}}) \]
			by $P_{-1}$ and using $P_{-1} U = - P_{-1}$ it follows
			\[ 2P_{-1} \veval{\psi} = 0. \]
			Multiplying by $P_1$ and applying $P_1 U = P_1$ one gets
			\[ 2iP_1 \veval{\dot{\psi}} = 0. \]
			Finally, multiplying the vertex condition by $P^\perp$ it is easy to find
			\[ P^\perp \veval{\dot{\psi}} = -i (I + U)^{-1}_\perp(I - U) P^\perp \veval{\psi}, \]
			where $(I + U)_\perp = (I + U) P^\perp$ is the restriction of $(I+U)$ to the subspace $P^\perp \V$ and thus it is invertible since $-1 \notin \sigma(P^\perp U)$. With the notation introduced, vertex conditions can be expressed as
			\[ \left\{ \begin{alignedat}{2}
				&P_{-1} \veval{\psi} = 0,	\\
				&P_1 \veval{\dot{\psi}} = 0, \\
				&P^\perp \veval{\dot{\psi}} = -i (I + U)^{-1}_\perp(I - U) P^\perp \veval{\psi}.
			\end{alignedat} \right. \]
			Therefore, it is clear that conditions on the vertex values of functions fix the eigenspace of $U$ associated with -1, while conditions on the vertex values of normal derivatives fix the eigenspace of $U$ associated with 1 and conditions that mix vertex values of functions and their normal derivatives fix the rest of $U$.

			With this description of the vertex conditions we can \emph{easily} find self-adjoint extensions whose vertex conditions impose some relations to the vertex values of functions (and their derivatives) in its domain, and we are able now to study the announced particular case: quantum wires. Quantum graphs are appropriate models to describe the dynamics of one-dimensional quantum systems, and so they are appropriate to study the dynamics of a free\footnote{In this section, we have focused on free particles (i.e. $H = -\Delta$) for simplicity, but all the reasoning applies to more general Hamiltonians describing particles moving under the effect of a potential.} particle moving on a family of intervals $\{I_e\}_{e=1}^m$, each one of the form $I_e = [a_e, b_e]$. It is obvious that $G = \bigcup_e I_e$ is a metric graph whose vertices are extremes of each $I_e$ and edges between them are identified with the intervals. Thus, the state of the quantum particle can be described by a wave function $\psi \in L_2(G)$ and the theory reviewed in the previous subsection can be applied in order to find self-adjoint extensions of the Hamiltonian $H = -\Delta$.

			Quantum wires are the analogous of classic wires, one-dimensional spaces on which a particle can move, and therefore the dynamics of such a particle can be represented by a quantum graph. But now, for a given graph $G$ representing a quantum wire, not all self-adjoint extensions of the Laplacian are allowed, since it is natural to ask for continuity of the wave function on $G$, including its vertices. Also, only local vertex conditions are allowed since a non-local vertex condition would imply that a probability current appears between edges which are not connected. In order to find the class of all self-adjoint extensions of the Laplacian whose vertex conditions are both local and compatible with continuity of $\psi$, we will use the projections introduced before. Using the ideas introduced at the beginning of this subsection, for every $v$ denote now by $P_v$ the projector onto $\ker(U_v + I)$, with $U_v$ the unitary operators defining the local vertex conditions which enforce continuity of the wave functions. Since continuity only implies vertex values of $\psi$, $P_v$ must be such that $P_v \veval{\psi}(v) = 0$ is satisfied if and only if $\psi$ is continuous at $v$; in other words, $P_v \veval{\psi}(v) = 0$ must be satisfied if and only if $\veval{\psi}(v)$ is a uniform vector, that is,
			\[
				\veval{\psi}(v) = c \veval{1}(v)
			\]
			for some $c \in \mathbb{R}$, where $\veval{1}(v) = \{1\}_{e \in E_v}$ represents the uniform vector on $\V_v$ with ones in all its entries. Therefore $P_v$ is the orthogonal projector whose kernel consists on uniform vectors, and it is then clear that the orthogonal projector over $\ker(U_v + I)^\perp$, $P_v^\perp = I - P_v$, must be
			\[
				P_v^\perp = \frac{\veval{1}(v) \veval{1}(v)^\dagger}{\norm{\veval{1}(v)}_{\V_v}} = (\deg v)^{-1} \ones,
			\]
			with $\ones$ the square matrix of size $\deg v$ whose entries are ones.\footnote{Note we have simplified the notation introduced above for the projectors: here $P_v$ corresponds with $P_{-1}$ above and $P_v^\perp = P_1 + P^\perp$ on the vertex $v$.} Note that $P^\perp_v$ is a rank-one projector, and therefore $U_v$ has to be of the form
			\[ U_v = e^{i\delta} P_v^\perp - (I - P^\perp_v) = \frac{e^{i\delta} + 1}{\deg v} \ones - I \]
			where $e^{i\delta}$, with $\delta \in \R$, is the eigenvalue of $U_v$ associated to the projector $P_v^\perp$.

			Using $U_v = e^{i\delta} P_v^\perp - P_v$ and the fact that $\veval{\psi}(v) \in \range(P_v^\perp)$ it is easy to see that the vertex conditions associated with $U_v$ are such that
			\[ P_v^\perp \veval{\dot{\psi}}(v) = \tan \frac{\delta}{2} \psi(v). \]
			Substituting into that equation $P_v^\perp = (\deg v)^{-1} \ones$ it follows
			\[ \sum_{e \in E_v} \veval{\dot{\psi}}_e(v) = -\deg v \tan \frac{\delta}{2} \psi(v), \]
			and thus the vertex condition associated to $U_v$ are such that for every $v \in V$,
			\begin{equation}\label{eq:delta-vertex-cond}
				\left\{ \begin{alignedat}{2}
					&\psi \text{ is continouos at } v, \\
					&\sum_{e \in E_v} \veval{\dot{\psi}}_e(v) = -\deg v \tan \frac{\delta}{2} \psi(v).
				\end{alignedat} \right.
			\end{equation}
			This vertex conditions are called $\delta$-type vertex conditions and the particular case $\delta=0$ corresponds to Kirchhoff-Neumann vertex conditions introduced in the previous subsection. Since we have found the unitary operator defining Kirchhoff-Neumann vertex conditions, it is now proven that they define a self-adjoint extension of the Laplacian.

			Hence, we have showed that assuming local vertex conditions so that no particle transference occurs between unconnected edges, quantum wires are described by quantum graphs whose vertex conditions are of $\delta$-type.

	\section{Magnetic Laplacian} \label{sec:magnetic-laplacian}
		\subsection{Definition and some properties}
			Another family of Hamiltonian operators that appears often consists on those Hamiltonians which describe quantum systems on which magnetic fields are applied. In one dimensional systems, the appropriate Hamiltonian takes the form
			\[ H = -\left( \frac{d}{dx} - i A(x) \right)^2 \eqqcolon -D^2, \]
			where $A: G \to \R$ stands for the magnetic vector potential (more precisely, its projection onto the direction tangent to the edges of $G$)\footnote{The definition of $D$ must be understood \emph{edge by edge}; that is, for $\varphi = \{\varphi_e\}_{e\in E(G)}$, we have $D \varphi = \{\frac{d \varphi_e}{dx_e} - iA_e(x)\varphi_e\}_{e \in E(G)}$} and so we will consider that the restriction of $A$ to any edge is continuous (we will call a function over $G$ satisfying this \emph{edge-continuous}). From its definition it is clear that the role previously played by $\Delta$ is now played by $D^2$, and that is why $D^2$ is called \emph{magnetic Laplacian}.

			If we proceed as we did for the standard Laplacian, integrating by parts it is straightforward to find the \emph{boundary term}
			\[
				i\tilde{\Sigma}(\varphi,\psi) \coloneqq i( \langle \psi, D^2 \varphi \rangle_{L_2(G)} - \langle D^2 \psi, \varphi \rangle_{L_2(G)})
				= \langle \veval{\psi} + i\oveval{D \psi}, \veval{\varphi} + i\oveval{D \varphi} \rangle_\V
				- \langle \veval{\psi} - i\oveval{D \psi}, \veval{\varphi} - i\oveval{D \varphi} \rangle _\V.
			\]
			Therefore, analogously to the case of the standard Laplacian, self-adjoint extensions of $D^2$ are parametrized by an unitary operator $U \in \U(\V)$ with
			\[
				\dom D^2_U= \{\varphi\in \tilde{H}^2(G) :
				\veval{\varphi} - i \oveval{D\varphi} = U (\veval{\varphi} + i \oveval{D\varphi})\},
			\]
			where $\tilde{H}^2(G)$ is the Sobolev space defined on $G$ (see \autoref{subsec:graph-theory}).

			It is going to be convenient for the next section to keep in mind the following well-known property about magnetic Laplacians (see, e.g., \cite{kostrykin2003quantum} for a more detailed study of this properties).

			\begin{prop}\label{prop:magnetic-Laplacian-equiv}
				Let $D^2_U$ be a self-adjoint extension of the magnetic Laplacian associated to a vector potential $A$. Then, for any $\tilde{A}$ there exists a self-adjoint extension of the associated magnetic Laplacian, $\tilde{D}^2_V$, and an isometry $T$ on $L_2(G)$ mapping $\dom \tilde{D}^2_V$ into $\dom D^2_U$ such that
				\[ T^{-1} D^2_U T = \tilde{D}^2_V. \]
				Moreover, $V = \veval{T}^{-1}U\veval{T}$ with $\veval{T}$ the restriction at vertices\footnote{By restriction to the edges of $T$ we mean the operator $\veval{T}: \V \to \V$ such that $\veval{T\vphantom{\varphi}} \,\veval{\varphi} = \veval{T\varphi}$ for any $\varphi$.} of $T$.
			\end{prop}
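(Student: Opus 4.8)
The plan is to realise the passage from the potential $A$ to the potential $\tilde A$ as a \emph{gauge transformation}, i.e.\ as multiplication by a suitable unimodular function, and then to track exactly how this transformation acts on the boundary data $\veval{\varphi}$ and $\oveval{D\varphi}$ that parametrise the self-adjoint extensions.

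First I would fix, on each edge $e$ identified with $I_e=[0,\ell_e]$, the function $\theta_e(x)=\int_0^x\bigl(A_e(s)-\tilde A_e(s)\bigr)\,ds$, which is $C^1$ because $A$ and $\tilde A$ are edge-continuous and satisfies $\theta_e'=A_e-\tilde A_e$ with $\theta_e(0)=0$. I would then define $T$ edge by edge as multiplication by $e^{i\theta_e}$, that is $(T\varphi)_e(x)=e^{i\theta_e(x)}\varphi_e(x)$. Since $\abs{e^{i\theta_e(x)}}=1$, $T$ is a surjective isometry of $L_2(G)$, and (given the regularity assumed on the potentials) it maps $\tilde H^2(G)$ onto itself. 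The key computation is the intertwining of the first-order operators: a direct differentiation on each edge gives
\[
 T^{-1}D\,T\varphi_e = e^{-i\theta_e}\left(\frac{d}{dx}-iA_e\right)\!\bigl(e^{i\theta_e}\varphi_e\bigr) = \left(\frac{d}{dx}-i(A_e-\theta_e')\right)\varphi_e = \left(\frac{d}{dx}-i\tilde A_e\right)\varphi_e = \tilde D\varphi_e,
\]
so that $T^{-1}D^2T=\tilde D^2$ as differential expressions, equivalently $D\,T=T\,\tilde D$.

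It remains to check that $T$ matches the domains and to read off $V$. From $(T\varphi)_e(v)=e^{i\theta_e(v)}\varphi_e(v)$ at the two endpoints, the unoriented vertex evaluation transforms as $\veval{T\varphi}=\veval{T}\,\veval{\varphi}$, where $\veval{T}$ is the diagonal unitary on $\V$ with entries $e^{i\theta_e(v)}$; this is precisely the restriction of $T$ at the vertices in the sense of the footnote. Using $D\,T\varphi=T\,\tilde D\varphi=e^{i\theta}\tilde D\varphi$ together with the sign convention built into the oriented evaluation, the \emph{same} diagonal unitary appears for the magnetic normal derivative, $\oveval{D\,T\varphi}=\veval{T}\,\oveval{\tilde D\varphi}$. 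Substituting both relations into the boundary condition $\veval{T\varphi}-i\,\oveval{D\,T\varphi}=U\bigl(\veval{T\varphi}+i\,\oveval{D\,T\varphi}\bigr)$ defining $\dom D^2_U$, and factoring out the invertible $\veval{T}$, I obtain
\[
 \veval{\varphi}-i\,\oveval{\tilde D\varphi} = \veval{T}^{-1}U\,\veval{T}\,\bigl(\veval{\varphi}+i\,\oveval{\tilde D\varphi}\bigr),
\]
which is exactly the boundary condition of $\tilde D^2_V$ with $V\coloneqq\veval{T}^{-1}U\,\veval{T}\in\U(\V)$ (unitary, as a unitary conjugate of $U$). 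Hence $T$ maps $\dom\tilde D^2_V$ bijectively onto $\dom D^2_U$ and $T^{-1}D^2_UT=\tilde D^2_V$.

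The two differentiations are routine, and the bijection of domains is immediate once the boundary conditions are seen to be equivalent. The step requiring care—and the one I expect to be the main obstacle—is the orientation and sign bookkeeping distinguishing $\veval{\,\cdot\,}$ from $\oveval{\,\cdot\,}$: one must verify that the function values and the magnetic normal derivatives are transformed by \emph{one and the same} diagonal unitary $\veval{T}$, since it is this coincidence that lets $\veval{T}$ factor cleanly out of both sides of the vertex condition and produces the conjugation formula for $V$.
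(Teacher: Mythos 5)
Your proposal is correct and follows essentially the same route as the paper: the same gauge transformation $T$ (your explicit antiderivative $\theta_e$ plays the role of the paper's $\chi$ obtained via the Poincaré Lemma), the same intertwining relation $DT = T\tilde{D}$, and the same conjugation of the vertex condition by the diagonal unitary $\veval{T}$ to identify $V = \veval{T}^{-1}U\veval{T}$. If anything, you spell out more carefully the step the paper dismisses as straightforward, namely that both $\veval{\,\cdot\,}$ and the oriented evaluation $\oveval{D\,\cdot\,}$ transform by the \emph{same} diagonal unitary, which is indeed the crux of the domain correspondence.
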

			\begin{proof}
				As we already said, the magnetic vector potential has to be continuous edge by edge and therefore, by the Poincaré Lemma, there exists $\chi: G \to \R$ differentiable edge by edge such that $\chi' = A - \tilde{A}$. Let $T$ denote the multiplication map defined by
				\begin{equation}\label{eq:def-T}
					T: \{\varphi_e\}_{e \in E} \in L_2(G) \mapsto \{e^{i \chi_e} \varphi_e\}_{e \in E} \in L_2(G).
				\end{equation}
				It follows directly from this definition that $T$ is an isometry on $L_2(G)$. Using the product rule, it is easy to check that
				\[
					\left(\frac{d}{dx} - iA(x)\right) T \psi = T\left(\frac{d}{dx} - i\tilde{A}(x)\right)\psi.
				\]
				Evaluating at the vertices, it follows
				\begin{equation*}\label{eq:prop1.1-1}
					\veval{D_U T \psi} = \veval{T \tilde{D}_V \psi} = \veval{T\vphantom{\psi}}\,\veval{\tilde{D}_V \psi},
				\end{equation*}
				where $\veval{T}$ is the diagonal matrix $\veval{T} = \diag(\{e^{i\underline{\chi}_e(v)}\}_{e \in E, v\in V})$.

				Using this, it is straightforward to show that for any $\varphi \in \dom D^2_U$, $\psi = T^{-1}\varphi$ is in $\dom \tilde{D}^2_V$, if $V = \veval{T}^{-1}U\veval{T}$. Moreover,
				\[ D^2_U T \psi = T \tilde{D}^2_{\veval{T}^{-1}U\veval{T}}, \]
				which concludes the proof.
			\end{proof}

			As a consequence of the previous property, we show the following result which will allow us to consider not only edge-continuous vector potential, but edge-constant ones (i.e., with a constant value on each edge).

			\begin{corol}\label{corol:magnetic-Laplacian-constant}
				Every self-adjoint extension of a magnetic Laplacian $D^2_U$, associated with potential $A$, is equivalent to one associated with a edge-constant potential such that $\dom D^2_U = \dom \tilde{D}^2_U$.
			\end{corol}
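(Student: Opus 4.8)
The plan is to deduce the corollary directly from \autoref{prop:magnetic-Laplacian-equiv}, the only freedom I need to exploit being the choice of the target potential $\tilde{A}$ and of the additive constants in the gauge function $\chi$. Recall that, for any prescribed $\tilde{A}$, the proposition produces the isometry $T$ of multiplication by $e^{i\chi}$ with $\chi' = A - \tilde{A}$, and the equivalent extension $\tilde{D}^2_V$ with $V = \veval{T}^{-1} U \veval{T}$, where $\veval{T} = \diag(\{e^{i\veval{\chi}_e(v)}\})$. Since the corollary asks for the \emph{same} unitary $U$ (so that the vertex conditions, and hence the domain, are unchanged), the whole task reduces to choosing $\tilde{A}$ edge-constant while forcing $V = U$. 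This holds as soon as $\veval{T} = I$, i.e. as soon as $e^{i\veval{\chi}_e(v)} = 1$ at every endpoint of every edge.

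First I would translate this requirement into a condition on $\chi$. On each edge $e \simeq I_e = [0, \l_e]$ the gauge is determined by $\chi_e' = A_e - \tilde{A}_e$ up to an additive constant, which I fix by setting $\chi_e(0) = 0$; then automatically $e^{i\veval{\chi}_e(\partial_- e)} = 1$. It remains to impose $e^{i\veval{\chi}_e(\partial_+ e)} = 1$, i.e. $\chi_e(\l_e) = \int_0^{\l_e}(A_e(x) - \tilde{A}_e)\,dx \in 2\pi\mathbb{Z}$. The canonical way to kill this \emph{edge flux} is to take it to be zero, which forces
\[ \tilde{A}_e = \frac{1}{\l_e}\int_0^{\l_e} A_e(x)\,dx, \]
the mean value of $A_e$ over the edge. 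This $\tilde{A}_e$ is manifestly real and edge-constant, and with it the gauge $\chi_e(x) = \int_0^x (A_e(s) - \tilde{A}_e)\,ds$ vanishes at both endpoints of $e$, so that indeed $\veval{T} = I$. (Any choice differing from the average by a multiple of $2\pi/\l_e$ would serve equally well, since only the accumulated phase matters.)

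With $\veval{T} = I$ in hand the conclusion is immediate: \autoref{prop:magnetic-Laplacian-equiv} yields $T^{-1} D^2_U T = \tilde{D}^2_V$ with $V = \veval{T}^{-1} U \veval{T} = U$, so the equivalent operator $\tilde{D}^2_U$ carries the edge-constant potential $\tilde{A}$ and exactly the same unitary $U$. Moreover, since $e^{i\veval{\chi}_e(v)} = 1$ at every vertex, $T$ acts trivially on the vertex evaluations, and through the intertwining $D(T\varphi) = T\tilde{D}\varphi$ established in the proof of \autoref{prop:magnetic-Laplacian-equiv} it acts trivially as well on the covariant normal derivatives entering the boundary condition. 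Hence the vertex condition $\veval{\varphi} - i\oveval{D\varphi} = U(\veval{\varphi} + i\oveval{D\varphi})$ defining $\dom D^2_U$ is carried verbatim into the corresponding condition for $\tilde{D}^2_U$, giving $\dom D^2_U = \dom \tilde{D}^2_U$.

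The main obstacle — and the sole reason for averaging rather than taking any convenient constant — is precisely the requirement $V = U$. For a generic edge-constant $\tilde{A}$ the proposition only delivers equivalence to $\tilde{D}^2_V$ with a \emph{conjugated} unitary $V \neq U$, which alters the vertex conditions and therefore the domain; what pins $\tilde{A}$ down to the edge-average is the demand that the net phase $\int_{I_e}(A_e - \tilde{A}_e)$ accumulated across each edge be trivial modulo $2\pi$. I expect the only points needing genuine care to be the short verification that this choice yields $\veval{\chi}_e(v) = 0$ at the endpoints (hence $\veval{T} = I$) and the observation that the boundary data entering the vertex condition are left invariant by $T$; everything else is a direct appeal to \autoref{prop:magnetic-Laplacian-equiv}.
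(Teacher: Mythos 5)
Your proposal is correct and follows essentially the same route as the paper's own proof: you choose the edge-averaged potential $\tilde{A}_e = \l_e^{-1}\int_0^{\l_e} A_e(x)\,dx$ and the gauge $\chi_e(x) = \int_0^x (A_e(s) - \tilde{A}_e)\,ds$, observe that $\chi$ vanishes at both endpoints so that $\veval{T} = I$, and then invoke \autoref{prop:magnetic-Laplacian-equiv} to get $V = \veval{T}^{-1}U\veval{T} = U$. Your additional remarks (the mod-$2\pi$ freedom in the edge flux and the explicit check that the boundary data are preserved) only elaborate on the paper's terser argument rather than changing it.
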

			\begin{proof}
				Take $\tilde{A}_e = \l_e^{-1} \int_0^{\l_e} A_e(x) \, dx$ and $\chi_e(x) = \int_0^{x} (A_e(x) - \tilde{A}_e) \, dx$. Define
				\[ \tilde{D}^2_U = \left(\frac{d}{dx} - i \tilde{A}\right)^2 \]
				and $T$ as in \autoref{prop:magnetic-Laplacian-equiv}; it follows that $\veval{T} = I$ and therefore by \autoref{prop:magnetic-Laplacian-equiv}
				\[ T^{-1} D^2_U T = \tilde{D}^2_U. \]
			\end{proof}

			Finally, it follows straightforwardly from \autoref{prop:magnetic-Laplacian-equiv} a result which will be the base for the next subsection.
			\begin{corol}\label{corol:magnetic-standard-equiv}
				For every magnetic Laplacian, $D_U^2$, there is an equivalent self-adjoint extension of the standard Laplacian. Moreover, if $T$ is the multiplication operator defined on Equation \eqref{eq:def-T} with $\chi$ such that $\chi' = A$, then
				\[ T^{-1} D^2_U T = \Delta_{\underline{T}^{-1}U\underline{T}}. \]
			\end{corol}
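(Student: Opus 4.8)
The plan is to recognise the standard Laplacian as the degenerate instance of the magnetic Laplacian in which the vector potential vanishes, and then to read the statement straight off \autoref{prop:magnetic-Laplacian-equiv}. Setting $\tilde{A} \equiv 0$ in the definition $-\left(\frac{d}{dx} - i A\right)^2$ recovers exactly the operator acting edge by edge as $\frac{d^2}{dx^2}$, so for any unitary $V \in \U(\V)$ the self-adjoint extension $\tilde{D}^2_V$ associated with the vanishing potential is precisely $\Delta_V$. This is the only conceptual observation needed; the rest is bookkeeping.

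With this identification, I would apply \autoref{prop:magnetic-Laplacian-equiv} to the given $D^2_U$ with the target potential $\tilde{A} = 0$. The auxiliary function supplied by the proposition then satisfies $\chi' = A - \tilde{A} = A$, which is exactly the normalisation of $\chi$ required in the corollary; consequently the multiplication operator $T$ of \eqref{eq:def-T} built from this $\chi$ coincides with the operator named in the statement. The proposition provides that $T$ is an isometry mapping $\dom \tilde{D}^2_V$ into $\dom D^2_U$, that $T^{-1} D^2_U T = \tilde{D}^2_V$, and that the transported vertex condition is governed by $V = \veval{T}^{-1} U \veval{T}$.

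It then remains only to substitute $\tilde{D}^2_V = \Delta_V$ and $V = \veval{T}^{-1} U \veval{T}$ into the intertwining relation to obtain $T^{-1} D^2_U T = \Delta_{\veval{T}^{-1} U \veval{T}}$, which is the asserted identity; since $\tilde{D}^2_V$ is a self-adjoint extension by the proposition, so is $\Delta_{\veval{T}^{-1} U \veval{T}}$, establishing the equivalence with an extension of the standard Laplacian. I do not expect a genuine obstacle here: the entire content is already carried by \autoref{prop:magnetic-Laplacian-equiv}, and the sole point deserving care is verifying that the gauge function forced by $\tilde{A} = 0$ agrees with the $\chi$ prescribed in the corollary, which is immediate.
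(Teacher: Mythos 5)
Your proposal is correct and is exactly the argument the paper intends: the corollary is stated as following ``straightforwardly'' from \autoref{prop:magnetic-Laplacian-equiv}, and specializing that proposition to $\tilde{A} = 0$ (so that $\chi' = A$ and the zero-potential magnetic Laplacian with vertex parametrization $V$ is literally $\Delta_V$, since $\oveval{D\varphi} = \veval{\dot{\varphi}}$ when $A = 0$) is precisely that route. No gaps; your one point of care --- checking the gauge function forced by $\tilde{A}=0$ matches the $\chi$ in the statement --- is the right one.
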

		\newpage
		\subsection{Relationship with quasi-periodic boundary conditions} \label{subsec:quasi-periodic-BC}
			\begin{wrapfigure}[8]{r}{0.2\textwidth}
				\vspace{-3.5mm}
				\centering
			    \includegraphics[width=0.2\textwidth]{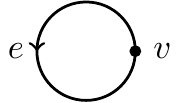}
			    \caption{Simple loop graph, $G_0$.}
			    \label{fig:example0}
			\end{wrapfigure}
			The relationship between standard and magnetic Laplacians established in \autoref{corol:magnetic-standard-equiv} can be used to study easily the generalization of quasi-periodic boundary conditions to the general graph setting. Let us first introduce standard quasi-periodic boundary conditions in \autoref{example:0-1}.

			\begin{loopexample}\label{example:0-1}
				Let us consider the graph $G_0$ in \autoref{fig:example0}, which is the most simple quantum graph with just one vertex and one edge. In what follows, we will come back to this simple case in order to show the techniques we will develop.

				Note that in this case $E_v = E$ and remember that loops (edges starting and ending on the same vertex) appears twice in $E$. For clearness of the notation, we will denote this edges $e_+$ or $e_-$, considering $E_v^\pm = \{e_\pm\}$. This way, when we write $\phi_{e_-}(v)$ (resp. $\phi_{e_+}(v)$) we mean $\phi_e(0)$ (resp. $\phi_e(\l_e)$).

				A function $\phi: G_0 \to \mathbb{C}$ is said to satisfy \emph{quasi-periodic boundary conditions} if
				\[
					\psi_{e_+}(v) = e^{-i \alpha} \psi_{e_-}(v), \qquad
					\psi_{e_+}'(v) = e^{-i \alpha} \psi_{e_-}'(v),
				\]
				where $\alpha$ is some real number. Note that $\alpha = 0$ reduces to the periodic boundary condition, which in fact corresponds with Kirchhoff-Neumann vertex conditions introduced in the previous section.
			\end{loopexample}

			In order to figure out how to generalize quasi-periodic boundary conditions to a general graph, let us consider now a quantum graph $G$ with $H = -\Delta$ and local vertex conditions of the form
			\begin{equation}\label{eq:quasiperiodic-BC1}
				\veval{\psi}_e(v) = e^{-i\veval{\chi}_e(v)} \veval{\psi}_{e_0}(v), \qquad (\forall e\in E_v),
			\end{equation}
			with $e_0 \in E_v$ an arbitrary reference edge, for every vertex $v \in V$. Note that since $e_0$ is the reference edge, it is implicit in \autoref{eq:quasiperiodic-BC1} that $\veval{\chi}_{e_0}(v) = 0$. As we stated in \autoref{subsec:quantum-wires}, vertex conditions involving only the vertex evaluation of the function (as opposed to those involving vertex evaluation of its derivative) fix the eigenspace of $U_v$ associated to -1: the projector over that eigenspace, $P_v$, must be such that the conditions on the function $\psi$ can be expressed as $P_v \veval{\psi}(v) = 0$. Proceeding as discussed in \autoref{subsec:quantum-wires} one concludes that $P_v^\perp = I - P_v$ is the rank-one projector over the span of $\veval{e^{-i\chi}}(v) \coloneqq \{e^{-i\veval{\chi}_e(v)}\}_{e \in E_v}$; that is,
			\[
				P_v = I - P_v^\perp, \qquad
				P_v^\perp = \frac{\veval{e^{-i\chi}}(v) \veval{e^{-i\chi}}(v)^\dagger}{\norm{\veval{e^{-i\chi}}(v)}_\V^2}.
			\]
			As a consequence of $P_v^\perp$ being rank-one, we have that all self-adjoint extensions of $-\Delta$ with the form we are interested in can be parametrized by
			\[ U_v = e^{i\delta} P_v^\perp - (I - P_v^\perp) = (e^{i\delta} + 1)P_v^\perp - I, \]
			where $e^{i\delta}$ is the eigenvalue of $U_v$ associated to $P_v^\perp$, with $\delta \in \R$. Those vertex conditions are what we call \emph{quasi-$\delta$-type vertex conditions}, because they generalize $\delta$-type vertex conditions in the same sense that quasi-periodic boundary conditions generalize periodic ones. That is, by allowing a phase shift of the functions on the vertices instead of enforcing continuity. Note that in the case $\delta = 0$ we would have \emph{quasi-Kirchhoff-Neumann vertex conditions} i.e., a generalization of Kirchhoff-Neumann vertex conditions that allows the functions to have a phase shift.

			As said before, the relationship between magnetic and standard Laplacians can be used to study the vertex conditions we have just introduced. Note that \autoref{eq:quasiperiodic-BC1} can be rewritten as
			\[
				\veval{\varphi}(v) = \veval{T}(v) \veval{\psi}(v),
			\]
			where again $\veval{T}(v)$ is the diagonal matrix $\veval{T} = \diag(\{e^{i \veval{\chi}_e(v)}\}_{e\in E_v})$ and the uniform vector $\veval{\varphi}(v) = \{\veval{\psi}_{e_0}(v)\}_{e \in E_v}$. This motivates the definition of a global transformation $T$ as in \autoref{corol:magnetic-standard-equiv} such that its restriction to each vertex is the operator $\veval{T}(v)$ defined above. Therefore, with the change $\psi \to \varphi = T \psi$ we can work with the magnetic Laplacian and the local vertex conditions given at each vertex by the unitary
			\[ \tilde{U}_v = \veval{T}(v) U_v \veval{T}(v)^{-1}. \]
			A simple computation leads to $\tilde{P}_v^\perp = \veval{T}(v) P_v^\perp \veval{T}(v)^{-1} = (\deg v)^{-1} \ones$, where $\ones$ is the square matrix of size $\deg v$ made of ones. That is, vertex conditions on $\varphi$ are the $\delta$-type vertex conditions\footnote{Note that now, since we are working with a magnetic Laplacian vertex conditions involve $\oveval{D \varphi}$ instead of $\veval{\dot{\varphi}}$.} described in \autoref{eq:delta-vertex-cond}.

			We have yet to define the global transformation $T$ we used above (see Eq. \eqref{eq:def-T}). At the moment, we have only fixed its restriction to the vertices and therefore we are left with some freedom to choose the functional form of $\chi = \{\chi_e\}_{e \in E}$. However, as stated by \autoref{corol:magnetic-Laplacian-constant}, the functional dependence of $A$ (and therefore that of $\chi$) has no relevance as the resulting magnetic Laplacian will be equivalent to one with edge-constant vector potential, both with the same vertex conditions. Thus, we chose for simplicity a linear dependence for $\chi$,
			\begin{equation}\label{eq:form-of-chi}
				\chi_e(x_e) = \veval{\chi}_e(\partial_- e) + \frac{1}{\l_e} \left[\veval{\chi}_e(\partial_+ e) - \veval{\chi}_e(\partial_- e)\right]x_e,
			\end{equation}
			and consequently an edge-constant magnetic vector potential $A = \{\l_e^{-1} [\chi_e(\partial_+ e) - \chi_e(\partial_-e)]\}_{e \in E}$. If we now define the edge-constant function $b = \{\veval{\chi}_e(\partial_-e)\}_{e \in E}$, we can denote shortly $\chi = Ax + b$.

			Let us apply these results to the Simple Loop Example.
			\begin{loopexample}\label{example:0-2}
				For simplicity we consider $\l_e = 1$ in the continuation of this example. In this case, we have obviously $\chi_{e_-}(v) = 0$, $\chi_{e_+}(v) = \alpha$ and $\delta = 0$. A simple computation shows that
				\[
					U_v = \begin{bmatrix}
						0			&	e^{-i\alpha}	\\
						e^{i\alpha}	&	0
					\end{bmatrix},
					\qquad
					\veval{T}(v) = \begin{bmatrix}
						1	&	0			\\
						0	&	e^{i\alpha}
					\end{bmatrix}.
				\]
				Finally, from Equation \eqref{eq:form-of-chi} it follows
				\[
					\chi_e(x_e) = \alpha x_e.
				\]

			\end{loopexample}

	\section{Time evolution}\label{sec:time-evolution}
		Now we have finished developing the tools we will be using to describe \emph{spatially} the systems we are working on. In this section we are going to focus on the system's evolution. Quantum systems' evolution is given by a Hamiltonian operator $H(t)$, which in the most general setting depends itself on the time $t$, and according to Schrödinger equation
		\begin{equation}\label{eq:schrodinger}
			i \Dt \psi(t) = H(t) \psi(t).
		\end{equation}
		In the case we are interested on, where the spatial description of the system is given by a metric graph $G$, $\{H(t)\}_t$ is a family of Hamiltonians in $L_2(G)$ and $\psi(t)$ is a curve in the state space, $\H = L_2(G)$.

		Concerning the existence of solutions for the Schrödinger equation with a given Hamiltonian, there are several results establishing conditions for solutions to exist \cite{kisynski_sur_1964, reed_methods_1975}. It is customary to search for solutions using the idea of unitary propagators, which are families of operators which allow us to write the solution of the Schrödinger equation with initial state $\psi_s$ at $t=s$ as $\psi(t) = U(t,s)\psi_s$ for $t > s$. A proper definition of a unitary propagator would be as follows:

		\begin{defn}\label{def:unitary-propagator}
			A two-parameter family of unitary operators $U(s,t)$, with $s,t \in \R$, that satisfies:
			\begin{enumerate}[label=\textit{(\roman*)},nosep]
				\item $U(r, s)U(s, t) = U(r, t)$
				\item $U(t,t) = I$
				\item $U(s, t)$ is jointly strongly continuous in $s$ and $t$
			\end{enumerate}
			is called a unitary propagator.
		\end{defn}

		After unitary propagators are introduced, the existence of the solutions for the associated Cauchy problems is equivalent to the existence of a unitary propagator for the \autoref{eq:schrodinger}. For the most general setting, in which $\dom H(t)$ varies with $t$, J. Kisyński gave conditions that $H(t)$ must satisfy for the unitary propagator to exist \cite{kisynski_sur_1964}. However, we will be interested in the less general case in which $\DD = \dom H(t)$ is the same for every $t$ and thus it is enough to consider a less general result by M. Reed and B. Simon \cite[\S X.12]{reed_methods_1975}. Instead of treating the case of families of self-adjoint operators, they study the more general case of families of generators of contraction semigroups, which can be directly applied to the case of families of self-adjoint families since for $H$ self-adjoint, $\pm iH$ is the generator of a contraction semigroup (see Theorem X.47a and Example 1 on \S{X}.8 of \cite{reed_methods_1975}). Let $A(t)$ denote a family of generators of a contraction semigroup. For such a case, Reed and Simon define an approximation for the propagator $U(t,s)$ solving the equation
		\[
			\Dt \varphi(t) = -A(t) \varphi(t), \qquad \varphi(s) = \varphi_s
		\]
		in the following way: considering a partition of the time interval, taking a generator which is constant on each element of the partition and providing conditions ensuring that it converges to solution. If, for example, the time interval we are interested in is $I = [0, 1]$, they take the partition made of $k$ elements $I_j = [\frac{j-1}{k}, \frac{j}{k}]$, $1 \leq j \leq k$, and define the approximate propagator
		\[
			U_k(t, s) = \exp\left(-i(t-s) A\left( \frac{j - 1}{k} \right)\right) \quad
			\text{if } \frac{j - 1}{k} \leq s \leq t \leq \frac{j}{k}
		\]
		and
		\[ U_k(t, s) = U_k(t, r)U_k(r, s)	\quad	\text{if } 0 \leq s \leq r \leq t \leq 1. \]
		That is, if $s,t$ lie in the same interval $I_j$ they consider the evolution operator given by the action of the contraction semigroup generated by $A(\frac{j-1}{k})$ and if $t,s$ lie in different intervals, they use the product property of the semigroup to define it.

		For convenience of the notation, assuming $0 \in \rho(A(t))$ for all $t$, they define a two-parameter family of operators
		\[ C(t, s) = A(t)A(s)^{-1} - I, \]
		which is a bounded operator by the Closed Graph Theorem. Moreover, note that $0 \in \rho(A(t))$ for all $t$ implies that $A(t)$ is a bijection of $\DD$ onto $\H$, and therefore for every $\varphi \in \H$ there exists $\psi \in \DD$ such that $\varphi = A(s) \psi$. Thus, for that $\varphi$, $C(t,s) \varphi = A(t) \psi - A(s)\psi$. That is, studying the behavior of $C(t,s)\varphi$ for any $\varphi \in \H$ can be understood as studying that of $A(t) \psi - A(s) \psi$ for any $\psi \in \DD$.
		Now we have the ingredients to state the result by M. Reed and B. Simon we have been advancing:\footnote{Actually, Reed and Simon state the theorem for general Banach spaces.}

		\begin{thm}[M. Reed and B. Simon, {\cite[Thm. X.70]{reed_methods_1975}}]\label{thm:reed-simon}
			Let $\H$ be a Hilbert space and let $I$ be an open interval in $\R$. For each $t \in I$, let $A(t)$ be the generator of a contraction semigroup on $\H$ so that $0 \in \rho(A(t))$ and
			\begin{enumerate}[label={\it (\alph*)},nosep]
				\item\label{enum:reedSimon-i} The $A(t)$ have common domain $\DD$.
				\item\label{enum:reedSimon-ii} For each $\varphi \in \H, (t - s)^{-1} C(t, s)\varphi$ is uniformly strongly continuous and uniformly bounded in $s$ and $t$ for $t \neq s$ lying in any fixed compact subinterval of $I$.
				\item\label{enum:reedSimon-iii} For each $\varphi \in \H$, $C(t)\varphi = \lim_{s \nearrow t} (t - s)^{-1}C(t, s)\varphi$ exists uniformly for $t$ in each compact subinterval and $C(t)$ is bounded and strongly continuous in $t$.
			\end{enumerate}
		Then for all $s \leq t$ in any compact subinterval of $I$ and any $\varphi \in \H$,
		\[ U(t, s) \varphi = \lim_{k \to \infty} U_k(t, s)\varphi \]
		exists uniformly in $s$ and $t$. Further, if $\varphi_s \in \DD$, then $\varphi(t) = U(t, s)\varphi_s$ is in $\DD$ for all $t$ and satisfies
		\[ \Dt \varphi(t) = -A(t) \varphi(t), \qquad \varphi(s) = \varphi_s \]
		and $\norm{\varphi(t)} \leq \norm{\varphi_s}$ for all $t \geq s$.
		\end{thm}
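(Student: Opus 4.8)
This is a classical theorem, and I would prove it by reducing everything to the piecewise-frozen propagators $U_k$. Write $A_k(r) \coloneqq A(\frac{j-1}{k})$ for $r$ in the $j$-th cell of the $k$-th partition, so that $A_k$ is the piecewise-constant interpolation of the generator and $U_k(\cdot, s)$ is the genuine propagator it generates: for $\varphi \in \DD$ and $r$ away from the partition points one has $\frac{d}{dr} U_k(r,s)\varphi = -A_k(r) U_k(r,s)\varphi$ and $\frac{d}{dr} U_k(t,r)\varphi = U_k(t,r) A_k(r)\varphi$. Each factor is the value of a contraction semigroup, so every $U_k(t,s)$ is a contraction and leaves $\DD$ invariant. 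The plan is: (1) show $\{U_k(t,s)\varphi\}_k$ is strongly Cauchy for $\varphi \in \DD$, uniformly for $s \leq t$ in a compact subinterval; (2) extend the limit $U(t,s)$ to all of $\H$ by density of $\DD$ and the uniform contraction bound; (3) read the propagator axioms and the differential equation off the convergence.

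For step (1) I would compare two approximants through the telescoping identity obtained by differentiating the continuous, piecewise-$C^1$ map $r \mapsto U_k(t,r) U_m(r,s)\varphi$ (legitimate since $U_m(r,s)\varphi \in \DD$) and integrating:
\[ U_k(t,s)\varphi - U_m(t,s)\varphi = -\int_s^t U_k(t,r)\,[A_k(r) - A_m(r)]\,U_m(r,s)\varphi\,dr. \]
Setting $\eta_r \coloneqq U_m(r,s)\varphi$ and using $C(\sigma, r) + I = A(\sigma) A(r)^{-1}$, the bracket becomes $C(\tau_k(r), r) A(r)\eta_r - C(\tau_m(r), r) A(r)\eta_r$, where $\tau_k(r)$ denotes the left endpoint of the $k$-th partition cell containing $r$. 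Hypothesis (b), together with the uniform boundedness principle, upgrades the pointwise bounds to a uniform operator estimate $\norm{C(\sigma, r)} \leq M\abs{\sigma - r}$; since $\abs{\tau_k(r) - r} \leq 1/k$ and $\norm{U_k(t,r)} \leq 1$, the integrand is controlled by $M(1/k + 1/m)\,\norm{A(r) U_m(r,s)\varphi}$.

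The crux, and the step I expect to be the main obstacle, is the a priori bound $\norm{A(r) U_m(r,s)\varphi} \leq K$ uniform in $m$ and $r$, with $K$ depending only on $\varphi$ and the length $t - s$. I would obtain it by a cell-by-cell recursion: inside a cell the frozen generator commutes with its own semigroup, so $\norm{A(\tau_j) U_m(r,s)\varphi}$ does not increase there; crossing from cell $j-1$ to cell $j$, the identity $A(\tau_j)\psi = (C(\tau_j, \tau_{j-1}) + I) A(\tau_{j-1})\psi$ costs a factor $1 + M/m$. Chaining over the at most $m(t-s) + O(1)$ cells meeting $[s,t]$ and using $(1 + M/m)^m \leq e^{M}$ gives the desired $K$, after passing from $A(\tau_j)$ back to $A(r)$ with one more factor $1 + M/m$. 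Feeding this into the previous display yields $\norm{U_k(t,s)\varphi - U_m(t,s)\varphi} \leq (t-s)\,M\,K\,(1/k + 1/m)$, which tends to $0$ uniformly, so the sequence is Cauchy on $\DD$; notice that hypothesis (c) is not needed for convergence.

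Finally I would pass to the limit. Density of $\DD$ and $\norm{U_k} \leq 1$ extend $U(t,s)\varphi = \lim_k U_k(t,s)\varphi$ to all $\varphi \in \H$, and $U(t,s)$ is a contraction; the composition law, $U(t,t) = I$, and joint strong continuity descend from the corresponding properties of the $U_k$ and the uniformity of the limit. For $\varphi_s \in \DD$, integrating $\frac{d}{dr} U_k(r,s)\varphi_s = -A_k(r) U_k(r,s)\varphi_s$ gives $U_k(t,s)\varphi_s = \varphi_s - \int_s^t A_k(r) U_k(r,s)\varphi_s\,dr$; the uniform bound of the crux step provides the domination needed to pass to the limit, while hypothesis (c) and $A(\sigma) = (C(\sigma, r) + I) A(r)$ on $\DD$ identify the pointwise limit of the integrand as $A(r) U(r,s)\varphi_s$. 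Closedness of $A(t)$ then yields $\varphi(t) = U(t,s)\varphi_s \in \DD$ with $\frac{d}{dt}\varphi(t) = -A(t)\varphi(t)$, and $\norm{\varphi(t)} \leq \norm{\varphi_s}$ is immediate because contractivity passes to strong limits.
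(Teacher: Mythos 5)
This theorem is imported verbatim from Reed and Simon \cite[Thm.\ X.70]{reed_methods_1975} and the paper supplies no proof of its own, so the only meaningful comparison is with the cited source: your argument is essentially Reed--Simon's, with the same frozen-coefficient approximants $U_k$, the same telescoping Duhamel identity $U_k(t,s)\varphi - U_m(t,s)\varphi = -\int_s^t U_k(t,r)[A_k(r)-A_m(r)]U_m(r,s)\varphi\,dr$, the same a priori bound $\norm{A(r)U_m(r,s)\varphi} \leq K$ via the cell-by-cell product estimate $(1+M/m)^{\sim m(t-s)} \leq e^{M(t-s)}$ with $M$ extracted from hypothesis \textit{(b)} by the uniform boundedness principle, and hypothesis \textit{(c)} correctly deferred to the identification of the limit as a solution of the differential equation. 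The one compressed point is the final passage to the limit in $\int_s^t A_k(r)U_k(r,s)\varphi_s\,dr$: to identify the pointwise limit of the integrand as $A(r)U(r,s)\varphi_s$ you must first show $U(r,s)\varphi_s \in \DD$ and that $A(r)U_k(r,s)\varphi_s$ converges, which follows from your uniform bound together with weak compactness of bounded sequences and the weak closedness of the graph of the closed operator $A(r)$ --- ingredients you name (closedness, domination, hypothesis \textit{(c)}) but do not fully assemble.
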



		In the following sections, we are going to work with Hamiltonians whose time-dependent structure can be written as
		\[ H(t) = \sum_{i = 1}^n f_i(t) H_i, \]
		with $f_i: \R \to \R$ containing all the time dependence and $H_i$ being constant self-adjoint operators. Applying \autoref{thm:reed-simon} to this type of Hamiltonians is the purpose of \autoref{thm:timedependent-linearCombination-Hamiltonian}, which establishes sufficient conditions to be fulfilled so that the existence of a unitary propagator is guaranteed.

		\begin{thm}\label{thm:timedependent-linearCombination-Hamiltonian}
			Let $\{H_i\}_{i=1}^n$ be a family of symmetric operators on a dense subset $\mathcal{D}$ of a Hilbert space $\H$ and let $f_i: I \subset \R \to \R$ be real valued functions for $1 \leq i \leq n$. Define the time-dependent operator
			\[
				H(t) = \sum_{i = 1}^n f_i(t) H_i, \qquad
				\dom H(t) = \DD.
			\]
			If it holds
			\begin{enumerate}[label=\textit{(\roman*)},nosep]
				\item\label{enum:H1-thm-linCombHamiltonian} $H(t)$ is self-adjoint,
				\item\label{enum:H2-thm-linCombHamiltonian} $f_i \in C^1(I)$ for every $i$, and
				\item\label{enum:H3-thm-linCombHamiltonian} there exists a $K > 0$ (not depending on $t$) such that for every $\psi \in \mathcal{D}$, $\norm{H_i \psi} \leq K (\norm{H(t)\psi} + \norm{\psi})$ for every $t \in I$,
			\end{enumerate}
			then there exists a strongly differentiable unitary propagator $U(t,s)$ with $s,t \in I$ such that, for any $\psi_s \in \DD$, $\psi(t) = U(t,s)\psi_s$ satisfies
			\[ \Dt \psi(t) = -i H(t) \psi(t), \qquad \psi(s) = \psi_s. \]
		\end{thm}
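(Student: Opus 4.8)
The plan is to reduce Theorem~\ref{thm:timedependent-linearCombination-Hamiltonian} to Theorem~\ref{thm:reed-simon} by setting $A(t) = iH(t)$ and verifying the three hypotheses \ref{enum:reedSimon-i}--\ref{enum:reedSimon-iii}. First I would justify applying the Reed--Simon framework at all: since each $H(t)$ is self-adjoint by hypothesis~\ref{enum:H1-thm-linCombHamiltonian}, both $\pm iH(t)$ generate contraction semigroups (indeed unitary groups by Stone's theorem), so $A(t) = iH(t)$ is a legitimate generator of a contraction semigroup. The condition $0 \in \rho(A(t))$ is not automatic, but one can work modulo a shift: replacing $H(t)$ by $H(t) + c\,I$ for a suitable constant does not change the solvability of the Schr\"odinger equation (it only multiplies solutions by a phase $e^{-ict}$), and for self-adjoint $H(t)$ the value $i$ lies in $\rho(H(t))$, so $0 \in \rho\bigl(i(H(t) - iI)\bigr)$; I would remark that such a reduction lets us assume $0 \in \rho(A(t))$ without loss of generality. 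Hypothesis~\ref{enum:reedSimon-i} (common domain $\DD$) is immediate from the definition of $H(t)$, since $\dom H(t) = \DD$ for all $t$.

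The heart of the argument is controlling the operator $C(t,s) = A(t)A(s)^{-1} - I$ to establish \ref{enum:reedSimon-ii} and \ref{enum:reedSimon-iii}. The key computation is that for $\varphi \in \H$, writing $\varphi = A(s)\psi$ with $\psi \in \DD$, we have
\[
	C(t,s)\varphi = A(t)\psi - A(s)\psi = i\bigl(H(t) - H(s)\bigr)\psi = i\sum_{i=1}^n \bigl(f_i(t) - f_i(s)\bigr) H_i \psi.
\]
This is the crucial simplification afforded by the linear-combination structure: the difference $H(t) - H(s)$ has all its $t$-dependence in the scalar coefficients $f_i(t) - f_i(s)$, and the operators $H_i$ are fixed. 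To bound this uniformly I would invoke hypothesis~\ref{enum:H3-thm-linCombHamiltonian}: $\norm{H_i \psi} \leq K(\norm{H(s)\psi} + \norm{\psi}) = K(\norm{A(s)\psi} + \norm{\psi}) \leq K(\norm{\varphi} + \norm{A(s)^{-1}\varphi})$, and since $A(s)^{-1}$ is bounded (by the closed graph theorem, as noted before the statement of Theorem~\ref{thm:reed-simon}), this gives $\norm{H_i\psi} \leq K'\norm{\varphi}$ for a constant $K'$. Combining,
\[
	\norm{(t-s)^{-1} C(t,s)\varphi} \leq \sum_{i=1}^n \left| \frac{f_i(t) - f_i(s)}{t-s} \right| \norm{H_i\psi} \leq K' \norm{\varphi} \sum_{i=1}^n \left| \frac{f_i(t) - f_i(s)}{t-s} \right|.
\]
By hypothesis~\ref{enum:H2-thm-linCombHamiltonian} the difference quotients $(f_i(t)-f_i(s))/(t-s)$ are continuous and, on a compact subinterval, uniformly bounded (being continuous extensions of $f_i'$ by the mean value theorem), which yields the uniform boundedness required in \ref{enum:reedSimon-ii}. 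The uniform strong continuity follows similarly, treating the difference quotients as jointly continuous functions of $(s,t)$ on the compact region and using that $A(s)^{-1}$ depends continuously on $s$ in the appropriate sense.

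For \ref{enum:reedSimon-iii} I would take the limit $s \nearrow t$: the difference quotients converge to $f_i'(t)$ by \ref{enum:H2-thm-linCombHamiltonian}, so $C(t)\varphi = \lim_{s\nearrow t}(t-s)^{-1}C(t,s)\varphi = i\sum_i f_i'(t) H_i A(t)^{-1}\varphi$, which is bounded (again using \ref{enum:H3-thm-linCombHamiltonian} and boundedness of $A(t)^{-1}$) and strongly continuous in $t$ because the $f_i'$ are continuous and $A(t)^{-1}$ varies continuously; the convergence is uniform on compact subintervals because $C^1$ functions have uniformly continuous derivatives there. Once all three hypotheses are checked, Theorem~\ref{thm:reed-simon} delivers a propagator $U(t,s)$ solving $\Dt \varphi(t) = -A(t)\varphi(t) = -iH(t)\varphi(t)$, which is exactly the Schr\"odinger equation in the statement; the norm bound $\norm{\varphi(t)} \leq \norm{\varphi_s}$ combined with unitarity (from self-adjointness of $H(t)$) upgrades the contraction to an isometry, confirming $U(t,s)$ is unitary. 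I expect the main obstacle to be the technical bookkeeping around the hypothesis $0 \in \rho(A(t))$ and ensuring that the continuity of $A(s)^{-1}$ in $s$ is rigorously justified, since the Reed--Simon theorem presupposes invertibility and our self-adjoint generators do not satisfy it without the shift argument; getting the uniform (not merely pointwise) strong continuity in \ref{enum:reedSimon-ii} also requires care, but the linear structure makes it essentially a statement about the scalar difference quotients, which is where hypothesis~\ref{enum:H2-thm-linCombHamiltonian} does its work.
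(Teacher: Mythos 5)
Your proposal is correct and follows essentially the same route as the paper: reduction to \autoref{thm:reed-simon} via the shifted generator $\tilde{A}(t) = iH(t) + I$ (which also yields the uniform bound $\norm{\tilde{A}(t)^{-1}} \leq 1$ for free, since $\sigma(\tilde{A}(t)) \subset 1 + i\R$ lies at distance at least $1$ from the origin), followed by the observation that $C(t,s)$ collapses into scalar difference quotients $g_i(t,s) = (f_i(t)-f_i(s))/(t-s)$ multiplying the fixed operators $H_i \tilde{A}(s)^{-1}$, with hypothesis \textit{(iii)} supplying the uniform operator bound. The two points you flag as needing care are exactly what the paper isolates as \autoref{lemma:1} and \autoref{lemma:2}: in particular, the continuity of $s \mapsto H_i \tilde{A}(s)^{-1}\varphi$ follows from writing $H_i\tilde{A}(s_0)^{-1}\varphi - H_i\tilde{A}(s)^{-1}\varphi = H_i\tilde{A}(s)^{-1}\bigl[\tilde{A}(s)\tilde{A}(s_0)^{-1}\varphi - \varphi\bigr]$ and applying the same uniform bound, so your outline closes without any new idea.
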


		Before we proof \autoref{thm:timedependent-linearCombination-Hamiltonian} it is useful to introduce the following lemmas.

		\begin{lemma}\label{lemma:1}
			Let $H(t)$ be as in \autoref{thm:timedependent-linearCombination-Hamiltonian} and define $A_i = iH_i$ and $\tilde{A}(t) = iH(t) + I$. Then, for every $\varphi \in \mathcal{H}$, it holds
			\[
				\norm{A_i \tilde{A}(t)^{-1} \varphi} \leq 3K \norm{\varphi},
			\]
			with $K$ independent of $\varphi$ and $t$.
		\end{lemma}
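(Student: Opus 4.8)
The plan is to reduce the claimed operator bound to a pointwise estimate on a single vector. Given $\varphi \in \H$, I would set $\psi \coloneqq \tilde{A}(t)^{-1}\varphi$, so that $\psi \in \DD$ and $\tilde{A}(t)\psi = \varphi$. Then $\norm{A_i\tilde{A}(t)^{-1}\varphi} = \norm{A_i\psi} = \norm{H_i\psi}$, and the whole estimate hinges on controlling $\norm{H_i\psi}$ via hypothesis \ref{enum:H3-thm-linCombHamiltonian} of \autoref{thm:timedependent-linearCombination-Hamiltonian}, which bounds it by $K(\norm{H(t)\psi} + \norm{\psi})$. Thus the two quantities I must estimate in terms of $\norm{\varphi}$ are $\norm{\psi}$ and $\norm{H(t)\psi}$.

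The first preliminary step — and the one I would treat most carefully — is to check that $\tilde{A}(t)^{-1}$ is a well-defined bounded operator on all of $\H$ with $\norm{\tilde{A}(t)^{-1}} \le 1$, uniformly in $t$. This is exactly where self-adjointness of $H(t)$ (hypothesis \ref{enum:H1-thm-linCombHamiltonian}) enters: since $H(t)$ is self-adjoint, $iH(t)$ has purely imaginary spectrum, so $0 \in \rho(\tilde{A}(t))$ and $\tilde{A}(t)$ is a bijection of $\DD$ onto $\H$. The norm bound I would read off from the identity
\[ \norm{\tilde{A}(t)\psi}^2 = \norm{(I + iH(t))\psi}^2 = \norm{\psi}^2 + \norm{H(t)\psi}^2, \]
valid for $\psi \in \DD$ because the cross terms $\langle \psi, iH(t)\psi\rangle + \langle iH(t)\psi, \psi\rangle$ cancel by self-adjointness (each $\langle H(t)\psi,\psi\rangle$ is real). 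Equivalently, $iH(t)$ generates a unitary, hence contraction, semigroup, so $\norm{\tilde{A}(t)^{-1}}\le 1$ is just the Hille--Yosida resolvent estimate at $\lambda = 1$. Specializing to $\tilde{A}(t)\psi = \varphi$, the identity immediately yields $\norm{\psi} \le \norm{\varphi}$.

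To estimate $\norm{H(t)\psi}$ I would rearrange $\tilde{A}(t)\psi = \varphi$ as $iH(t)\psi = \varphi - \psi$ and combine the triangle inequality with the previous bound:
\[ \norm{H(t)\psi} = \norm{\varphi - \psi} \le \norm{\varphi} + \norm{\psi} \le 2\norm{\varphi}. \]
Feeding both estimates into hypothesis \ref{enum:H3-thm-linCombHamiltonian} then gives $\norm{H_i\psi} \le K(\norm{H(t)\psi} + \norm{\psi}) \le K(2\norm{\varphi} + \norm{\varphi}) = 3K\norm{\varphi}$, which is the asserted inequality. Since $K$ is the constant of hypothesis \ref{enum:H3-thm-linCombHamiltonian}, independent of both $\varphi$ and $t$, the final bound inherits this uniformity. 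The only genuinely structural input is the boundedness and invertibility of $\tilde{A}(t)$ secured in the second step; once that is in place the remainder is a two-line manipulation, so I expect no substantive obstacle beyond correctly invoking self-adjointness.
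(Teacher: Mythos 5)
Your proposal is correct and follows essentially the same route as the paper's proof: reduce to $\psi = \tilde{A}(t)^{-1}\varphi \in \DD$, invoke hypothesis \textit{(iii)} of the theorem, and combine $\norm{H(t)\psi} \le \norm{\varphi} + \norm{\psi}$ with the uniform resolvent bound $\norm{\tilde{A}(t)^{-1}} \le 1$ to land on the constant $3K$. The only cosmetic difference is that you justify $\norm{\tilde{A}(t)^{-1}} \le 1$ by the explicit Pythagorean identity $\norm{(I+iH(t))\psi}^2 = \norm{\psi}^2 + \norm{H(t)\psi}^2$, whereas the paper reads it off from $\sigma(\tilde{A}(t)) \subset i\R + 1$ having distance at least $1$ from the origin; both are valid and the arguments are otherwise identical.
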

		\begin{proof}
			Since $H(t)$ is self-adjoint, the spectrum of $\tilde{A}(t)$ is a subset of $i \mathbb{R} + 1 = \{i \alpha + 1: \alpha \in \mathbb{R}\} \subset \mathbb{C}$. Thus $\tilde{A}^{-1}$ is bounded and maps $\mathcal{H}$ into $\mathcal{D}$.

			That said, this lemma is a direct consequence of hypothesis \autoref{enum:H3-thm-linCombHamiltonian} of \autoref{thm:timedependent-linearCombination-Hamiltonian}. For every $\psi \in \mathcal{D}$,
			\[
				\norm{A_i \psi} = \norm{H_i \psi} \leq K (\norm{H(t) \psi} + \norm{\psi})
				= K\norm{\tilde{A}(t)\psi - \psi} + K \norm{\psi}
				\leq K \left(\norm{\tilde{A}(t)\psi} + 2\norm{\psi} \right)
			\]
			for every $t$, and since $\psi = \tilde{A}(t)^{-1}\varphi \in \mathcal{D}$ for every $\varphi \in \mathcal{H}$ it holds:
			\[
				\norm{A_i \tilde{A}(t)^{-1}\varphi}
				\leq K \left(\norm{\varphi} + 2\norm{\tilde{A}(t)^{-1}\varphi} \right).
			\]
			The distance from 0 to $\sigma(\tilde{A}(t))$ is at least 1, and thus $\norm{\tilde{A}(t)^{-1}} \leq 1$. Hence,
			\[
				\norm{A_i \tilde{A}(t)^{-1}\varphi} \leq 3K \norm{\varphi}.
			\]
		\end{proof}
		\begin{lemma}\label{lemma:2}
			Let $H(t)$ be as in \autoref{thm:timedependent-linearCombination-Hamiltonian} and define $A_i = iH_i$ and $\tilde{A}(t) = iH(t) + I$. Then, for $s,t$ lying in a compact subset of $I$ and for every $\varphi \in \mathcal{H}$, $\lim_{t \to s} \tilde{A}(t)\tilde{A}(s)^{-1} \varphi = \varphi$ uniformly on $s$.
		\end{lemma}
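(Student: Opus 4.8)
The plan is to reduce the statement to an estimate on the difference $\tilde{A}(t) - \tilde{A}(s)$ and then exploit the regularity of the coefficient functions $f_i$ together with the uniform bound provided by \autoref{lemma:1}. First I would fix a compact subinterval $J \subset I$ and, for $s,t \in J$ and arbitrary $\varphi \in \mathcal{H}$, set $\psi = \tilde{A}(s)^{-1}\varphi$, which lies in $\mathcal{D}$ since $\tilde{A}(s)^{-1}$ maps $\mathcal{H}$ into $\mathcal{D}$ (as observed in the proof of \autoref{lemma:1}). Then $\varphi = \tilde{A}(s)\psi$, and hence
\[
	\tilde{A}(t)\tilde{A}(s)^{-1}\varphi - \varphi = \tilde{A}(t)\psi - \tilde{A}(s)\psi = \bigl(\tilde{A}(t) - \tilde{A}(s)\bigr)\psi.
\]

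Next I would use the specific linear structure of $H(t)$. Since $\tilde{A}(t) - \tilde{A}(s) = i(H(t) - H(s)) = i\sum_{i=1}^n (f_i(t) - f_i(s)) H_i$, the difference above becomes $i\sum_i (f_i(t) - f_i(s)) H_i \tilde{A}(s)^{-1}\varphi$. Taking norms and applying the triangle inequality yields
\[
	\norm{\tilde{A}(t)\tilde{A}(s)^{-1}\varphi - \varphi} \leq \sum_{i=1}^n \abs{f_i(t) - f_i(s)}\, \norm{H_i \tilde{A}(s)^{-1}\varphi}.
\]
Here \autoref{lemma:1} is the crucial ingredient: since $A_i = iH_i$ we have $\norm{H_i \tilde{A}(s)^{-1}\varphi} = \norm{A_i \tilde{A}(s)^{-1}\varphi} \leq 3K\norm{\varphi}$, with the constant independent of both $s$ and $\varphi$.

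Finally, I would control the scalar differences using hypothesis \autoref{enum:H2-thm-linCombHamiltonian}. Because each $f_i \in C^1(I)$, its derivative is continuous and therefore bounded on the compact interval $J$, say by $M_i$; the mean value theorem then gives $\abs{f_i(t) - f_i(s)} \leq M_i \abs{t - s}$ for all $s,t \in J$. Combining the estimates,
\[
	\norm{\tilde{A}(t)\tilde{A}(s)^{-1}\varphi - \varphi} \leq 3K \norm{\varphi}\,\abs{t - s}\sum_{i=1}^n M_i,
\]
and since the right-hand side tends to $0$ as $t \to s$ with a bound that does not depend on $s$, the convergence $\tilde{A}(t)\tilde{A}(s)^{-1}\varphi \to \varphi$ holds uniformly in $s$ on $J$. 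I do not expect a serious obstacle here; the only point requiring care is that every estimate be uniform in $s$, which is precisely what \autoref{lemma:1} secures on the operator side and what compactness of $J$ secures on the scalar side.
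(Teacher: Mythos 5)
Your proposal is correct and follows essentially the same route as the paper: factor out $\tilde{A}(s)^{-1}$, expand $\tilde{A}(t)-\tilde{A}(s)$ using the linear structure of $H(t)$, and apply \autoref{lemma:1} to bound $\norm{A_i\tilde{A}(s)^{-1}\varphi}$ by $3K\norm{\varphi}$ uniformly in $s$. The only cosmetic difference is that you invoke the mean value theorem to get a Lipschitz bound $\abs{f_i(t)-f_i(s)}\leq M_i\abs{t-s}$, whereas the paper simply cites uniform continuity of the $f_i$ on the compact subinterval; both follow from hypothesis \autoref{enum:H2-thm-linCombHamiltonian} and compactness.
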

		\begin{proof}
			We need to prove the limit uniformly on $s$, i.e., that
			\[
				\lim_{t \to s} \norm{\tilde{A}(t)\tilde{A}(s)^{-1} \varphi - \varphi} = 0
			\]
			uniformly on $s$. We have
			\[
				\norm{\tilde{A}(t)\tilde{A}(s)^{-1} \varphi - \varphi}
				= \norm{\sum_{i=1}^n[f_i(t) - f_i(s)]A_i \tilde{A}(s)^{-1} \varphi}
				\leq \sum_{i=1}^n |f_i(t) - f_i(s)| \norm{A_i \tilde{A}(s)^{-1}\varphi}.
			\]
			Now, using \autoref{lemma:1} one gets
			\[
				\norm{\tilde{A}(t)\tilde{A}(s)^{-1} \varphi - \varphi}
				\leq 3K \sum_{i=1}^n |f_i(t) - f_i(s)| \norm{\varphi}.
			\]
			Hence, $\lim_{t \to s} \norm{\tilde{A}(t)\tilde{A}(s)^{-1} \varphi - \varphi} = 0$ uniformly on $s$ since every $f_i$ is uniformly continuous on $I$ (which follows from \autoref{enum:H2-thm-linCombHamiltonian} and the fact that we are considering a fixed, compact subset of $I$).
		\end{proof}

		\begin{proof}[Proof of \autoref{thm:timedependent-linearCombination-Hamiltonian}]
			This theorem is a consequence of \autoref{thm:reed-simon} and the fact that $A(t) = i H(t)$ is the generator of a contraction semigroup by Hille--Yoshida theorem (see \cite{reed_methods_1975}, Theorem X.47a and Example 1 on \S{X}.8).
			In order to apply \autoref{thm:reed-simon} we need to have $0 \in \rho(iH(t))$ for every $t$, which is not satisfied in general. However, since $H(t)$ is self-adjoint $i \in \rho(H(t))$ for every $t$ and therefore $-1 \in \rho(A(t))$ which implies $\tilde{A} = A(t) + I$ has 0 in its resolvent set. Note that if $\varphi(t) = \tilde{U}(t,s) \xi$ satisfies
			\[ \Dt \varphi(t) = -\tilde{A}(t) \varphi(t), \qquad \varphi(s) = \xi,\]
			then $\psi(t) = U(t,s) \xi$ with $U(t,s) \coloneqq \tilde{U}(t,s) e^{-i(s-t)}$ satisfies, by the product rule,
			\[ \Dt \psi(t) = - A(t) \psi(t), \qquad \psi(s) = \xi. \]
			Thus, existence of $\tilde{U}(t,s)$ with the properties in the statement of \autoref{thm:timedependent-linearCombination-Hamiltonian} guarantee the existence of $U(t,s)$ with the same properties.

			Hence, it is enough to show that $\tilde{A}(t)$ satisfies the hypothesis of \autoref{thm:reed-simon}. It is clear that $\tilde{A}(t)$ can be written as
			\[
				\tilde{A}(t) = I + i\sum_{j=1}^{n} f_j(t) H_j
				= \sum_{j=1}^{n+1} f_j(t) A_j
			\]
			with $A_i = iH_j$, $f_{n+1} = 1$ and $A_{n+1} = I$. Also is easy to check using Hille--Yoshida theorem that $\tilde{A}(t)$ is the generator of a contraction semigroup.

			Hypothesis \autoref{enum:reedSimon-i} of \autoref{thm:reed-simon} is satisfied since, by definition, every $H(t)$ (and therefore $\tilde{A}(t)$) has the same domain $\DD$.

			Regarding \autoref{enum:reedSimon-ii} and \autoref{enum:reedSimon-iii}, it is useful to write
			\begin{equation}\label{eq:thm-5.4-1}
				(t-s)^{-1}C(t,s) = (t-s)^{-1} [\tilde{A}(t) - \tilde{A}(s)] \tilde{A}(s)^{-1}
				= \sum_{i=1}^{n} \frac{f_i(t) - f_i(s)}{t - s} A_i \tilde{A}(s)^{-1}.
			\end{equation}
			For the shake of notation, let us denote $g_i(t,s) = \frac{f_i(t) - f_i(s)}{t-s}$, which clearly is $C^1$ in $s$ and $t$ for $t \neq s$ in $I$. Moreover, for $s\neq t$ lying in any fixed compact subinterval of $I$, $g_i$ is uniformly continuous because $f_i(t)$ is $C^1(I)$.

			From the previous equation it follows that
			\[
				\norm{(t-s)^{-1}C(t,s)\varphi}
				\leq \sum_{i=1}^{n+1} |g_i(t,s)| \norm{A_i \tilde{A}(s)^{-1}\varphi}
				\leq 3K \sum_{i=1}^{n+1} |g_i(t,s)| \norm{\varphi},
			\]
			where we have used \autoref{lemma:1} in the last inequality. For $s\neq t$ lying in any fixed compact subinterval of $I$, $|g_i(t,s)|$ is bounded uniformly on $s$ and $t$ since it is continuous and thus $\norm{(t-s)^{-1}C(t,s)\varphi}$ is uniformly bounded for such $s,t$.

			For the uniform strong continuity respect to $t$, it is clear that
			\[\begin{alignedat}{2}
				\norm{(t_0-s)^{-1}C(t_0,s)\varphi - (t-s)^{-1}C(t,s)\varphi}
				& \leq \sum_{i=1}^{n+1} |g_i(t_0,s) - g_i(t,s)| \norm{A_i \tilde{A}(s)^{-1}\varphi} \\
				& \leq 3K \sum_{i=1}^{n+1} |g_i(t_0,s) - g_i(t,s)| \norm{\varphi}
			\end{alignedat}\]
			and, thus, uniform continuity of $t \mapsto g_i(t,s)$ implies uniform strong continuity of the operator-valued function $t \mapsto (t-s)^{-1}C(t,s)$.

			On the other hand, regarding uniform strong continuity respect to $s$ we have
			\begin{equation}\label{eq:thm-5.4-2}
			\begin{alignedat}{2}
				\norm{(t-s_0)^{-1}C(t,s_0)\varphi - (t-s)^{-1}C(t,s)\varphi}
				&\leq \sum_{i=1}^{n+1} \norm{g_i(t,s_0) A_i\tilde{A}(s_0)^{-1}\varphi - g_i(t,s) A_i \tilde{A}(s)^{-1} \varphi}  \\
				&\leq \sum_{i=1}^{n+1} |g_i(t,s_0)| \norm{A_i\tilde{A}(s_0)^{-1}\varphi - A_i \tilde{A}(s)^{-1} \varphi} + \\
				& \phantom{leq} + \sum_{i=1}^{n+1}|g_i(t,s_0) - g_i(t,s)| \norm{A_i\tilde{A}(s)^{-1}\varphi}.
			\end{alignedat}
			\end{equation}
			Let us examine separately the two terms on the right-hand side. First,
			\[
				\sum_{i=1}^{n+1}|g_i(t,s_0) - g_i(t,s)| \norm{A_i \tilde{A}(s)^{-1}\varphi}
				\leq 3K \sum_{i=1}^{n+1}|g_i(t,s_0) - g_i(t,s)| \norm{\varphi}
			\]
			and therefore because $g_i$ is uniformly continuous for $s\neq t$ in a compact subinterval of $I$, for every $s \neq t$ and every $\varepsilon > 0$ there exists $\delta_1 > 0$ such that for $|s_0 - s| < \delta_1$ it holds
			\[
				\sum_{i=1}^{n+1}|g_i(t,s_0) - g_i(t,s)| \norm{A_i \tilde{A}(s)^{-1}\varphi}
				\leq \frac{\varepsilon}{2}.
			\]

			For the first term in Eq. \eqref{eq:thm-5.4-2},
			\[
				\sum_{i=1}^{n+1} |g_i(t,s_0)| \norm{A_i \tilde{A}(s_0)^{-1}\varphi - A_i \tilde{A}(s)^{-1} \varphi} \leq 3K\sum_{i=1}^{n+1} |g_i(t,s_0)| \norm{\tilde{A}(s)\tilde{A}(s_0)^{-1}\varphi - \varphi}
			\]
			and thus by \autoref{lemma:2} and the fact that $g_i$ is uniformly bounded for $s \neq t$, for every $s \neq t$ and every $\varepsilon > 0$ there exists $\delta_2 > 0$ such that for $|s_0 - s| < \delta_2$ it holds
			\[
				\sum_{i=1}^{n+1} |g_i(t,s_0)| \norm{A_i \tilde{A}(s_0)^{-1}\varphi - A_i \tilde{A}(s)^{-1} \varphi} \leq \frac{\varepsilon}{2}.
			\]
			Hence, taking $\delta = \min\{\delta_1, \delta_2\}$ and substituting into Eq. \eqref{eq:thm-5.4-2} we have that for $|s_0 - s| < \delta$
			\[
				\norm{(t-s_0)^{-1}C(t,s_0)\varphi - (t-s)^{-1}C(t,s)\varphi} \leq \varepsilon
			\]
			which shows that hypothesis \autoref{enum:reedSimon-ii} is fulfilled.

			Regarding hypothesis \autoref{enum:reedSimon-iii} of \autoref{thm:reed-simon}, it is easy to see that $C(t) \varphi = \sum_{i=1}^{n+1} f_i'(t) A_i \tilde{A}(t)^{-1} \varphi$. Indeed, from \autoref{eq:thm-5.4-1} we get
			\[\begin{alignedat}{2}
				\norm{(t - s)^{-1} C(t,s) \varphi - \sum_{i=1}^{n+1} f_i'(t) A_i \tilde{A}(t)^{-1} \varphi}
				&= \norm{\sum_{i=1}^{n+1} \left[g_i(t,s) A_i \tilde{A}(s)^{-1} - f_i'(t) A_i \tilde{A}(t)^{-1}\right]\varphi} \\
				& \leq \sum_{i=1}^{n+1} |f_i'(t)|\norm{A_i\tilde{A}(s)^{-1}\varphi - A_i\tilde{A}(t)^{-1}\varphi} + \\
				&\phantom{\leq} + \sum_{i=1}^{n+1} |g_i(t,s) - f_i'(t)| \norm{A_i \tilde{A}(s)^{-1}\varphi}.
			\end{alignedat}\]
			Using again \autoref{lemma:1}, \autoref{lemma:2} and the fact that we are considering a compact subinterval, the continuity of every $f_i'$ and the definition of derivative implies the limit $C(t)\varphi = \lim_{s \to t} (t-s)^{-1} C(t,s)\varphi$ exists uniformly on $t$.

			Boundedness of $C(t)$ as an operator follows directly from \autoref{lemma:1} and continuity of $f_i'(t)$:
			\[
				\norm{C(t) \varphi} = \norm{\sum_{i=1}^{n+1} f_i'(t) A_i \tilde{A}(t)^{-1} \varphi}
				\leq \sum_{i=1}^{n+1} |f_i'(t)| \norm{A_i \tilde{A}(t)^{-1} \varphi}
				\leq 3K \sum_{i=1}^{n+1} |f_i'(t)| \norm{\varphi}.
			\]
		\end{proof}

		Besides existence of unitary propagators for Schrödinger equations associated with Hamiltonians of the type we are dealing with, we are going to need a result on how close the evolution induced by two of these Hamiltonians is when they are similar (in the precise sense introduced in \autoref{thm:aprox-Hamiltonians-aprox-sol}).

		\begin{thm}\label{thm:aprox-Hamiltonians-aprox-sol}
			Let $H_1(t) = \sum_{i = 1}^n f_i(t) H_i$ and $H_2(t) = \sum_{i = 1}^n g_i(t) H_i$, with common domain $\DD$, both satisfying the hypothesis of \autoref{thm:timedependent-linearCombination-Hamiltonian}. Then, for every $\psi \in \DD$, every $T > 0$ and every $\varepsilon > 0$ there exist $\delta_1, \delta_2, \dots, \delta_n > 0$ such that $\norm{f_i - g_i}_\infty < \delta_i$ implies $\norm{U_1(T, s) \psi - U_2(T, s) \psi} < \varepsilon$.
		\end{thm}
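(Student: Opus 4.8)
The plan is to compare the two evolutions through a Duhamel (variation-of-constants) identity and then estimate the resulting integral. By \autoref{thm:timedependent-linearCombination-Hamiltonian} both propagators $U_1(t,s)$ and $U_2(t,s)$ exist, are unitary, preserve $\DD$, and satisfy $\frac{d}{dt}U_j(t,s)\psi = -iH_j(t)U_j(t,s)\psi$ for $\psi\in\DD$. Differentiating the cocycle relation $U_j(a,b)U_j(b,c)=U_j(a,c)$ of \autoref{def:unitary-propagator} and inserting this forward equation yields the derivative in the second slot, namely $\frac{d}{dt}\,U_j(T,t)\zeta = i\,U_j(T,t)H_j(t)\zeta$ for $\zeta\in\DD$. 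Fixing $\psi\in\DD$ and $s\leq T$ in $I$, I would then introduce the interpolating curve
\[ \Phi(t) = U_2(T,t)\,U_1(t,s)\psi, \qquad t\in[s,T], \]
whose endpoints are $\Phi(s)=U_2(T,s)\psi$ and $\Phi(T)=U_1(T,s)\psi$. Since $U_1(t,s)\psi\in\DD$, the product rule (legitimate because $t\mapsto U_1(t,s)\psi$ is strongly differentiable, the second-slot derivative of $U_2$ is available, and the unitaries are uniformly bounded) gives
\[ \Phi'(t) = i\,U_2(T,t)\big[H_2(t)-H_1(t)\big]U_1(t,s)\psi. \]

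Once $\Phi'$ is known to be bounded and strongly measurable on $[s,T]$ (a pointwise limit of continuous difference quotients is measurable, so only boundedness is at issue, see below), the Bochner fundamental theorem of calculus turns $\Phi(T)-\Phi(s)=\int_s^T\Phi'(t)\,dt$ into
\[ U_1(T,s)\psi - U_2(T,s)\psi = i\int_s^T U_2(T,t)\big[H_2(t)-H_1(t)\big]U_1(t,s)\psi\,dt. \]
Using that $U_2(T,t)$ is an isometry and that $H_2(t)-H_1(t)=\sum_{i=1}^n\big(g_i(t)-f_i(t)\big)H_i$, I would take norms to obtain
\[ \norm{U_1(T,s)\psi - U_2(T,s)\psi} \leq \sum_{i=1}^n \norm{f_i-g_i}_\infty\, C_i, \qquad C_i \coloneqq \int_s^T \norm{H_i\,U_1(t,s)\psi}\,dt. \]
The decisive point is that every $C_i$ depends only on the \emph{fixed} propagator $U_1$ and not on the perturbed coefficients $g_i$; this is exactly why the interpolation is chosen with $U_1$ on the inside. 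Granting $C_i<\infty$, it then suffices to set $\delta_i = \varepsilon/(n\,C_i)$ (and $\delta_i$ arbitrary when $C_i=0$): if $\norm{f_i-g_i}_\infty<\delta_i$ for every $i$, the right-hand side is strictly below $\varepsilon$, which is the claim.

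It remains to control $C_i$, and this is the step I expect to be the main obstacle. It is enough to show that $t\mapsto H_i\,U_1(t,s)\psi$ is bounded on the compact interval $[s,T]$, for then $C_i<\infty$ and, simultaneously, $\Phi'$ is bounded so that the Bochner integration above is justified. Applying \autoref{lemma:1} to $H_1$ gives $\norm{H_i\varphi}\leq 3K\norm{\tilde A_1(t)\varphi}$ for all $\varphi\in\DD$ and all $t$, where $\tilde A_1(t)=iH_1(t)+I$ and $K$ is the constant of hypothesis \ref{enum:H3-thm-linCombHamiltonian} in \autoref{thm:timedependent-linearCombination-Hamiltonian}; taking $\varphi=U_1(t,s)\psi$ and using $\tilde A_1(t)U_1(t,s)\psi = U_1(t,s)\psi-\frac{d}{dt}U_1(t,s)\psi$ reduces everything to a uniform bound on the strong time-derivative $t\mapsto\frac{d}{dt}U_1(t,s)\psi$ over $[s,T]$. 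Strong differentiability by itself does not furnish such a bound, so the heart of the matter is an \emph{a priori energy estimate}. I would obtain it at the level of the piecewise-constant approximants $U_{1,k}$ of \autoref{thm:reed-simon}: on each subinterval the generator is constant and commutes with its semigroup, while the passage from one subinterval to the next contributes a factor $\tilde A_1(t_{j})\tilde A_1(t_{j-1})^{-1}=I+\sum_i\big(f_i(t_j)-f_i(t_{j-1})\big)\,iH_i\tilde A_1(t_{j-1})^{-1}$, whose norm is at most $1+3K\sum_i\abs{f_i(t_j)-f_i(t_{j-1})}$ by \autoref{lemma:1}. Multiplying these factors telescopes into a bound $\exp\!\big(3K\sum_i\int_s^T\abs{f_i'}\big)$ that is finite (since $f_i\in C^1$) and independent of $k$, giving $\norm{\tilde A_1(t)U_{1,k}(t,s)\psi}\leq e^{C}\norm{\tilde A_1(s)\psi}$ uniformly in $k$ and $t$; closedness of $\tilde A_1(t)$ then transfers this bound to the limit $U_1(t,s)\psi$. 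With the uniform bound in hand, every $C_i$ is finite and the estimate of the second paragraph completes the proof.
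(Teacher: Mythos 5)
Your proposal is correct, and it takes a genuinely different --- in fact more careful --- route than the paper. The paper's own proof differentiates $t \mapsto \norm{U_1(t,s)\psi - U_2(t,s)\psi}$, passes the derivative inside the norm, and then writes the integrand as $\norm{H_1(t)\psi - H_2(t)\psi}$, i.e.\ with the Hamiltonian difference evaluated on the \emph{fixed} initial vector $\psi$, arriving at $\delta_i = \varepsilon/\bigl(n(T-s)\norm{H_i\psi}\bigr)$. As written, that step is not justified: since $\Dt U_\ell(t,s)\psi = -iH_\ell(t)\,U_\ell(t,s)\psi$, the propagators cannot simply vanish from the integrand (and the derivative-into-the-norm identity should in any case be the inequality $\bigl|\Dt \norm{h(t)}\bigr| \leq \norm{h'(t)}$, which fortunately is the direction needed). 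Your Duhamel identity
\[
	U_1(T,s)\psi - U_2(T,s)\psi = i\int_s^T U_2(T,t)\bigl[H_2(t) - H_1(t)\bigr]U_1(t,s)\psi \, dt
\]
makes explicit what a correct estimate actually requires: a uniform bound on $\norm{H_i\,U_1(t,s)\psi}$ along the \emph{evolved} trajectory, not merely $\norm{H_i\psi}$. Your choice of interpolating curve, with $U_1$ inside and $U_2$ acting isometrically outside, isolates this quantity in the fixed system, and your a priori energy estimate --- telescoping the junction factors $\tilde{A}(t_j)\tilde{A}(t_{j-1})^{-1}$ across the piecewise-constant approximants via \autoref{lemma:1}, bounding the product by $\exp\bigl(3K\sum_i \int_s^T \abs{f_i'}\bigr)$ using $f_i \in C^1$, and transferring the bound to the limit by closedness of $\tilde{A}_1(t)$ --- supplies exactly the ingredient the paper's proof silently assumes; alternatively you could quote the uniform boundedness of $A(t)U(t,s)A(s)^{-1}$ established inside the proof of \autoref{thm:reed-simon} in \cite{reed_methods_1975}. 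In short: the paper's argument is shorter but has a genuine gap at the step where the propagators disappear from the integrand, while your route repairs it at the cost of the extra energy estimate, reaching the same conclusion with $\delta_i = \varepsilon/(n\,C_i)$, $C_i = \int_s^T \norm{H_i\,U_1(t,s)\psi}\,dt$, which plays the role of the paper's $(T-s)\norm{H_i\psi}$.
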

		\begin{proof}
			By \autoref{thm:timedependent-linearCombination-Hamiltonian}, there exist unitary propagators $U_1(t, s)$, $U_2(t, s)$ associated with $H_1(t)$, $H_2(t)$ respectively. Since for any $\psi \in \DD$, $t \mapsto U_\l(t, s) \psi$ is strongly differentiable ($\l = 1, 2$), we have $t \mapsto \norm{U_1 (t, s) \psi - U_2 (t, s) \psi}$ is differentiable and by the Fundamental Theorem of Calculus we have
			\[
				\norm{U_1 (T, s) \psi - U_2 (T, s) \psi}
				= \int_s^T \Dt \norm{U_1 (t, s) \psi - U_2 (t, s) \psi} dt.
			\]
			Strong differentiability implies that we can take the derivative into the norm and get
			\[ \begin{alignedat}{2}
				\norm{U_1 (T, s) \psi - U_2 (T, s) \psi}
				&= \int_s^T \norm{\Dt U_1 (t, s) \psi - \Dt U_2 (t, s) \psi} dt \\
				&= \int_s^T \norm{H_1(t) \psi - H_2(t) \psi} dt \\
				&\leq \sum_{i=1}^n \int_s^T \abs{f_i - g_i} \norm{H_i \psi} dt \\
				& \leq (T - s) \sum_{i=1}^n \norm{f_i - g_i}_\infty \norm{H_i \psi}
			\end{alignedat} \]
			Hence, it is enough to take
			\[ \delta_i = \frac{\varepsilon}{n (T - s) \norm{H_i \psi}}. \]
		\end{proof}

	\section{Boundary driven dynamics: a particular case} \label{sec:boundary-driven-dynamics}
		\subsection{Quasi-$\delta$-type boundary control systems}
			As pointed out in the introduction, the main goal of this work is to explore the possibility of controlling a free quantum particle by \emph{boundary controls}, that is, taking the space of self-adjoint extensions of the Laplacian as the space of controls. In this section we are going to prove that, for some cases, considering only a subset of the self-adjoint extensions space is enough to control the system. In particular, the one associated with time dependent quasi-$\delta$-type vertex conditions.

			Thus, we consider a family of quantum graphs whose Hamiltonians are standard Laplacians with time-dependent quasi-$\delta$-type vertex conditions such that
			\begin{equation}\label{eq:quasiperiodic-timedependent}
				\veval{\psi}_e(v) = e^{-i\veval{\chi}_e(v, t)} \veval{\psi}_{e_0}(v), \qquad (\forall e\in E_v),
			\end{equation}
			where again $e_0$ is an arbitrary reference edge in $E_v$ but now $\chi(t) = \{\veval{\chi}_e(v, t)\}_{e\in E,v\in V}$ forms a family of functions from $G$ to $\R$. One should notice that the time dependence of this Hamiltonians is subtle: usually one faces the problem where $\dom H(t)$ does not depend on time but the explicit, functional form of $H(t)$ does, while here we have $-\Delta$ for every $t$ and $\dom \Delta$ varying with time. That is, we are considering at each time a different self-adjoint extension of the Laplacian on our quantum graph. Therefore, looking for solutions of Schrödinger equation is harder than usual, but based on the equivalence we established in \autoref{subsec:quasi-periodic-BC} we will be able to transform some of these problems into equivalent ones with Hamiltonian $H(t)$ such that $\dom H(t)$ remains constant and time dependence appears explicitly in the form of $H(t)$.

			Now, if we are able to choose the way the time-dependent quasi-$\delta$-type vertex conditions change (i.e., if we can freely choose the functions $\chi(t)$) we can try to induce a state transition on the system choosing an appropriate time evolution for the boundary conditions. The control problem we are going to study is the following.
			\begin{defn}\label{def:quasi-periodic-BCS}
				Given a metric graph $G$, we call quasi-$\delta$-type boundary control system associated to $G$ to the family of quantum graphs $\{(G, -\Delta(t))\}_t$, with control $t \mapsto \Delta(t)$ a smooth curve on the space of self-adjoint extensions of $\Delta$ with quasi-$\delta$-type vertex conditions.
			\end{defn}

			Following the ideas exposed in Subsection \ref{subsec:quasi-periodic-BC}, we can find the natural equivalence between a quasi-$\delta$-type boundary control system and a regular magnetic controlled one:
			\begin{prop}\label{prop:quasi-periodic-BC-system-equivalence}
				Every quasi-$\delta$-type boundary control system is (unitarily) equivalent to a magnetic control system, that is, a system whose evolution is given by the Hamiltonian
				\[ H(t) = -\left[\left(\Dx - i A(t)\right)^2 + b'(t) + A'(t) x \right] \]
				with $\delta$-type vertex conditions and controls $A, b : I \subset \R \to \tilde{H}^1(G)$.
			\end{prop}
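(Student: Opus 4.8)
The plan is to promote the static gauge transformation of \autoref{subsec:quasi-periodic-BC} to a time-dependent one and to keep careful track of the extra term produced by its time-derivative. Concretely, at each $t$ I would write the boundary data $\veval{\chi}_e(v,t)$ in the linear form \eqref{eq:form-of-chi}, i.e. $\chi(x,t) = A(t)x + b(t)$ with the edge-constant potential $A_e(t) = \l_e^{-1}[\veval{\chi}_e(\partial_+ e,t) - \veval{\chi}_e(\partial_- e,t)]$ and the edge-constant shift $b_e(t) = \veval{\chi}_e(\partial_- e,t)$, and then introduce the family of multiplication operators $T(t)$ as in \eqref{eq:def-T}, namely $T(t)\colon \{\psi_e\} \mapsto \{e^{i\chi_e(x,t)}\psi_e\}$. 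Each $T(t)$ is an isometry of $L_2(G)$, and by \autoref{corol:magnetic-standard-equiv} it intertwines, at fixed $t$, the standard Laplacian carrying the quasi-$\delta$-type vertex conditions \eqref{eq:quasiperiodic-timedependent} with the magnetic Laplacian $(\Dx - iA(t))^2$ carrying the corresponding $\delta$-type vertex conditions; that is, $T(t)\Delta(t)T(t)^{-1} = (\Dx - iA(t))^2$ for each $t$.

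Next I would substitute $\varphi(t) = T(t)\psi(t)$, equivalently $\psi(t) = T(t)^{-1}\varphi(t)$, into the Schrödinger equation $i\Dt\psi(t) = -\Delta(t)\psi(t)$. Applying the product rule and then $T(t)$ on the left yields
\[ i\Dt\varphi(t) = -T(t)\Delta(t)T(t)^{-1}\varphi(t) - iT(t)\left(\Dt T(t)^{-1}\right)\varphi(t). \]
The first summand equals $-(\Dx - iA(t))^2\varphi(t)$ by the intertwining above. For the second, $T(t)^{-1}$ is multiplication by $e^{-i\chi(x,t)}$, so $\Dt T(t)^{-1}$ is multiplication by $-i(\Dt\chi)\,e^{-i\chi(x,t)}$, and hence $-iT(t)\left(\Dt T(t)^{-1}\right)$ is multiplication by $-\Dt\chi$. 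Since $\Dt\chi = A'(t)x + b'(t)$, this reproduces exactly the potential term and gives
\[ i\Dt\varphi(t) = -\left[\left(\Dx - iA(t)\right)^2 + b'(t) + A'(t)x\right]\varphi(t) = H(t)\varphi(t), \]
with $\varphi$ satisfying $\delta$-type vertex conditions and controls $A,b$. Because each $T(t)$ is unitary, this is a unitary equivalence of the two control systems, carrying the boundary control $\chi(t)$ to the magnetic control $(A(t),b(t))$ and preserving norms and state transitions, which is the asserted equivalence.

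The step I expect to require most care is the rigorous treatment of the \emph{moving domain}. In contrast with the usual situation, here the functional form $-\Delta$ is fixed while $\dom\Delta(t)$ genuinely varies with $t$, so one must justify two points. First, that $T(t)$ maps this time-dependent domain onto the domain verifying the $\delta$-type conditions at each instant; this reuses the pointwise identity $\tilde{P}_v^\perp = \veval{T}(v)P_v^\perp\veval{T}(v)^{-1} = (\deg v)^{-1}\ones$ obtained in \autoref{subsec:quasi-periodic-BC}, applied now for each fixed $t$. Second, that the product rule used for $\Dt[T(t)^{-1}\varphi(t)]$ is valid in the strong sense: this follows from the joint smoothness of $(x,t)\mapsto\chi(x,t)$, which is guaranteed by the linear ansatz in $x$ together with the smoothness of the control curve $t\mapsto\Delta(t)$ in \autoref{def:quasi-periodic-BCS}, so that $t\mapsto T(t)^{-1}$ is strongly differentiable with the stated multiplication operator as its strong derivative. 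Once these two points are settled, the displayed computation closes the proof and exhibits $H(t)$ in the explicit form required for the time-evolution analysis of \autoref{sec:time-evolution}.
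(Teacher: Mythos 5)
Your proposal is correct and follows essentially the same route as the paper's proof: introduce the time-dependent gauge transformation $T(t)$ of \eqref{eq:def-T} with $\chi(t) = A(t)x + b(t)$, differentiate the substitution $\varphi(t) = T(t)\psi(t)$ by the product rule so that the extra term $-\Dt\chi = -(A'(t)x + b'(t))$ appears, and use the intertwining $T(t)\Delta(t)T(t)^{-1} = \left(\Dx - iA(t)\right)^2$ with $\delta$-type vertex conditions. Your version is in fact slightly more careful than the paper's (which glosses over the domain mapping and the strong differentiability of $t \mapsto T(t)$, and contains a sign slip in the displayed intertwining relation that your computation gets right), but the underlying argument is identical.
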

			\begin{proof}
				Take the family of unitary transformations $T(t)$ as in \eqref{eq:def-T} with $\chi = \chi(t)$:
				\[
					T(t): \{\psi_e\}_{e \in E} \in L_2(G) \mapsto \{e^{i \chi_e(t)} \psi_e\}_{e \in E} \in L_2(G).
				\]
				and define $\varphi(t) = T(t)\psi(t)$. The chain rule implies\footnote{As pointed out in \autoref{subsec:preliminaries-hilbert}, here and everywhere in this work whenever a derivative of an operator-valued function appears, it must be understood in the strong sense.}
				\[ \Dt \varphi(t) = \D{T}{t}(t) \psi(t) + T(t) \Dt \psi(t) \]
				and therefore using the Schrödinger equation \eqref{eq:schrodinger} for $\psi$ and the definition of $T(t)$, we have
				\[ i \Dt \varphi(t) = \left[- \Dt \chi(t) - T(t) \Delta T(t)^{-1} \right] \varphi(t). \]
				If we take $\chi$ as in Equation \ref{eq:form-of-chi}, we find that the time dependence on the boundary conditions appears as a time dependence of the edge-constant functions $A$, $b$ defined there. Thus, we have $\chi(t) = A(t) x + b(t)$ and $T(t)\Delta T(t)^{-1} = -\left(\Dx - iA(t)\right)^2 \eqqcolon -D^2$ with $\delta$-type vertex conditions (see \autoref{eq:delta-vertex-cond}). Substituting this into the previous equation, it follows
				\[
					i \Dt \varphi(t) = -\left[\left(\Dx - i A(t)\right)^2 + b'(t) + A'(t) x \right] \varphi(t),
				\]
				with local vertex conditions at $v \in V$:
				\[ \left\{ \begin{alignedat}{2}
					&\varphi(t) \text{ continuous in } v, \\
					&\sum_{e \in E_v} \oveval{D \varphi}_e(v) = - \deg v \tan \frac{\delta}{2} \varphi(v).
				\end{alignedat} \right. \]
			\end{proof}

			This proposition shows how to treat the quasi-$\delta$-type boundary control system applying a unitary transformation which leads to a magnetic controlled system, where the time-dependence of the Hamiltonian's domain has been reduced or even removed. Note that the part of the vertex conditions related to the \emph{derivative} carries an implicit time-dependent term since $D = \left(\Dx - i A(t)\right)$. Hence, the part of the vertex conditions related to the \emph{derivative} is
			\[
				0 = \sum_{e \in E_v} \oveval{D\varphi}_e
				= \sum_{e \in E_v} \veval{\dot{\varphi}}_e + \sum_{e \in E_v} \oveval{A}_e(v,t),
			\]
			which makes clear that the domain of the family $H(t)$ in \autoref{prop:quasi-periodic-BC-system-equivalence} is constant only if for every vertex $v \in V$, $\sum_{e \in E_v} \oveval{A}_e(v,t) = 0$ for every $t$. From now on, we only consider quasi-$\delta$-type boundary control systems that satisfy this condition, and some particular examples will be exposed at the end of this section. For further references, let us fix a name for such systems.
			\begin{defn}\label{defn:simple-quasi-periodic-control-system}
				We say that an edge-constant magnetic vector potential $A$ is simple for a given graph $G$ if for every vertex $v$ it satisfies
				\[
					\sum_{e \in E_v} \oveval{A}_e(v) = 0.
				\]
				We say that a quasi-$\delta$-type boundary control system is simple if its underlying graph admits a simple vector potential
			\end{defn}

			\begin{loopexample}\label{example:0-3}
				Continuing the simple loop example series, let us consider the graph $G_0$ showed in \autoref{fig:example0} and let the $A=\alpha$ in part \ref{example:0-1} be a function of time. Then $\chi(t,x) = \alpha(t)x / \l_e$ and \autoref{prop:quasi-periodic-BC-system-equivalence} gives the equivalent magnetic Hamiltonian
				\[
					H(t) = -\left[\left(\Dx - i\alpha(t)\right)^{2} + \alpha'(t) x\right].
				\]
				Note that this is a simple quasi-$\delta$-type boundary control system for any $\alpha$ because
				\[
					\sum_{e \in E_v} \oveval{A}_e(v) = A_{e_+}(v) - A_{e_-}(v) = \alpha - \alpha = 0.
				\]
			\end{loopexample}

		\subsection{The controllability problem}
			The controllability problem consists on answering whether \emph{any transition} on the system can be induced by choosing an appropriate curve $\chi(t)$ or not. As we pointed out in \autoref{subsec:controllability-notions}, we address approximate controllability and thus, by \emph{induce any transition}, what we actually mean is \emph{reach as close as desired to a target state from a given one}.

			Relying on \autoref{prop:quasi-periodic-BC-system-equivalence}, approximate controllability of the system by choosing the way quasi-$\delta$-type vertex conditions change is equivalent to approximate controllability of a quantum system with Hamiltonian
			\[ H(t) = -\left[\left(\Dx - i A(t)\right)^2 + b'(t) + A'(t) x \right] \]
			and $\delta$-type vertex conditions.

			Unlike control on a finite dimensional Hilbert space, control on an infinite dimensional Hilbert space has no general result giving necessary and sufficient conditions for (approximate) controllability. However, it will be enough for us to base on a theorem by T. Chambrion et al.\ \cite{chambrion_controllability_2009}, giving sufficient conditions to prove approximate controllability for the quasi-$\delta$-type boundary control system associated to some graphs. In the referenced work, Chambrion et al.\ study the approximate controllability of some bilinear control systems; that is, systems whose evolution is given by
			\begin{equation}\label{eq:bilinear-schrodinger-control}
				i\Dt \psi(t) = H_0 \psi(t) + u(t) H_1 \psi(t),
			\end{equation}
			with $u: \R \to (0, c)$. Moreover, they assume that:
			\begin{enumerate}[%
					label=\textit{(A\arabic*)},
					nosep,
					labelindent=0.5\parindent,
					leftmargin=*
					]
				\item\label{H1} $H_0, H_1$ are self-adjoint operators not depending on $t$,
				\item\label{H2} there exists an orthonormal basis $\{\phi_n\}_{n \in \N}$ of $\H$ made of eigenvectors of $H_0$, and
				\item\label{H3} $\phi_n \in \dom H_1$ for every $n \in \N$.
			\end{enumerate}
			Bilinear control systems satisfying conditions \autoref{H1}--\autoref{H3} will be called normal bilinear systems. For them, the following theorem is proven:
			\begin{thm}[Chambrion et al.\ {\cite[Thm. 2.4]{chambrion_controllability_2009}}.] \label{thm:chambrion-controllability}
				Consider a normal bilinear quantum control system, with $c > 0$ as described above. Let $\{\lambda_n\}_{n \in \N}$ denote the eigenvalues of $H_0$, each of them associated to the eigenfunction $\phi_n$. Then, if the elements of the sequence $\{\lambda_{n+1} - \lambda_n\}_{n \in \N}$ are $\Q$-linearly independent and if $\langle \phi_{n+1}, H_1 \phi_n \rangle \neq 0$ for every $n \in \N$, the system is approximately controllable.
			\end{thm}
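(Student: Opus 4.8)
The plan is to combine a finite-dimensional (Galerkin) reduction with a uniform approximation estimate. Fix $N$ and let $\H_N = \mathrm{span}\{\phi_1, \dots, \phi_N\}$, with $P_N$ the orthogonal projector onto $\H_N$ and $H_0^{(N)} = P_N H_0 P_N$, $H_1^{(N)} = P_N H_1 P_N$ the compressions. First I would prove that each finite-dimensional bilinear system $i\dot\xi = (H_0^{(N)} + u H_1^{(N)})\xi$ on $\H_N$ is controllable on the unit sphere; then I would show that, for controls in a suitable class, the solutions of this truncation approximate those of the full system uniformly, so that steering the truncation to a projection of the target steers the true system approximately to the target.

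For the finite-dimensional step I would invoke the Lie-algebra rank condition: the relevant control Lie algebra is the one generated by $iH_0^{(N)}$ and $iH_1^{(N)}$, and controllability on the sphere follows once this algebra acts transitively (e.g.\ contains $\mathfrak{su}(N)$). Here both hypotheses enter. Passing to the interaction picture $\psi = e^{-iH_0 t}\xi$ turns the coupling into the matrix with entries $e^{i(\lambda_m - \lambda_n)t}\langle\phi_m, H_1\phi_n\rangle$; the $\Q$-linear independence of the consecutive gaps $\lambda_{n+1}-\lambda_n$ forces all positive transition frequencies $\lambda_m - \lambda_n$ ($m > n$) to be pairwise distinct, so tuning an oscillating control to the single frequency $\lambda_{k+1}-\lambda_k$ resonantly addresses \emph{only} the pair $(\phi_k,\phi_{k+1})$ while every other matrix element averages out. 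The condition $\langle\phi_{k+1},H_1\phi_k\rangle\neq 0$ guarantees this resonant coupling is nonzero, so the transition $k\leftrightarrow k+1$ can genuinely be driven; equivalently, in Lie-algebra language, the distinct gaps let one isolate each adjacent off-diagonal generator and the nonvanishing couplings make it nonzero. Since the adjacent transitions $1\!-\!2\!-\!3\!-\cdots$ form a connected chain spanning all levels, composing these two-level rotations generates the full transitive action. The positivity constraint $u\in(0,c)$ is harmless at this stage because $SU(N)$ is compact, so accessibility upgrades to controllability even with controls confined to a positive interval.

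The remaining, and most delicate, step is the passage from the truncations to the infinite-dimensional system. The plan is to establish a \emph{uniform} Galerkin estimate: for controls of bounded total variation (or bounded $L^1$ norm), the propagator of the full equation and that of the $N$-dimensional truncation differ, on any fixed state, by an amount tending to $0$ as $N\to\infty$ uniformly over the admissible class. Granting this, approximate controllability follows by a three-$\varepsilon$ argument: given $\psi_0,\psi_T$ and $\varepsilon>0$, choose $N$ so large that $P_N\psi_0$ and $P_N\psi_T$ lie within $\varepsilon/3$ of $\psi_0,\psi_T$ and that the truncation error stays below $\varepsilon/3$, then use finite-dimensional controllability to steer the truncated dynamics from $P_N\psi_0$ to within $\varepsilon/3$ of $P_N\psi_T$.

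I expect the main obstacle to be exactly this uniform approximation estimate together with making the resonant/averaging step rigorous: one must control, uniformly in $N$ and in the oscillating control, both the error committed by the rotating-wave approximation and the leakage of probability into the high-energy tail $\{\phi_n : n > N\}$. This is where one needs quantitative control of the off-resonant terms, again furnished by the separation of transition frequencies coming from the $\Q$-linear independence, and, typically, an a priori bound ensuring that admissible controls cannot pump unbounded energy into the tail over the relevant time horizon.
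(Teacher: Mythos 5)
A preliminary remark: the paper does not prove this theorem at all --- it is imported verbatim from Chambrion et al.\ \cite{chambrion_controllability_2009} and used as a black box --- so your proposal can only be measured against the original proof in the cited reference. Your finite-dimensional half is essentially sound: $\Q$-linear independence of the consecutive gaps does force all transition frequencies $\lambda_m - \lambda_n$ to be pairwise distinct (a coincidence would yield a nontrivial integer relation among the gaps, since each $\lambda_m-\lambda_n$ is a sum of consecutive gaps), iterated brackets $\mathrm{ad}_{iH_0^{(N)}}^k\bigl(iH_1^{(N)}\bigr)$ then isolate the individual matrix entries, the nonvanishing chain couplings generate $\mathfrak{su}(N)$, and the constraint $u \in (0,c)$ is indeed absorbed by the Jurdjevic--Sussmann theorem on the compact group. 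None of this is where the difficulty lies.

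The genuine gap is the step you explicitly ``grant'': the uniform Galerkin estimate is false at this level of generality, and it is precisely the obstruction the cited proof is built to circumvent. The hypotheses give only $\phi_n \in \dom H_1$, so $H_1$ need not be bounded, let alone compact; even for bounded $H_1$ one computes $\frac{d}{dt}\norm{P_N^\perp\psi}^2 = 2u\,\mathrm{Im}\langle P_N^\perp\psi, P_N^\perp H_1 P_N\psi\rangle$ (the block $P_N^\perp H_1 P_N^\perp$ contributes nothing by Hermiticity), and since $\norm{P_N^\perp H_1 P_N}$ does not tend to $0$ unless $H_1$ is compact, an $L^1$ or BV bound on $u$ gives no $N$-uniform control of tail leakage. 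Worse, the controls produced by your Lie-bracket argument carry no $N$-uniform $L^1$/BV bound, so the two halves of your plan do not mesh: there is no admissible class over which the estimate could hold uniformly and within which the truncations remain steerable. Chambrion et al.\ avoid any such estimate. They first exploit the positivity of $u$ --- which in your sketch is dismissed as harmless, but is in fact structural --- to reparametrize time and pass to $\frac{d\psi}{d\tau} = (vA + B)\psi$ with $v = 1/u$ large; they then construct explicit piecewise-constant control sequences whose propagators converge \emph{entrywise} on the block $\mathrm{span}\{\phi_1,\dots,\phi_N\}$ to a \emph{unitary} matrix of that block (a rotation in the $(\phi_k,\phi_{k+1})$-plane, obtained by a non-resonant averaging that uses the full $\Q$-linear independence via equidistribution, not merely distinctness of frequencies). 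Because the limiting compression is unitary, weak convergence plus norm preservation upgrades to strong convergence, and no probability escapes to the high-energy tail \emph{in the limit} --- unitarity of the target matrix replaces the uniform tail bound you would need. Without substituting this weak-convergence/unitarity mechanism (or proving a genuinely new tail estimate, which you correctly identify as the hard point but do not supply), your argument does not close.
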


			Based on this theorem we will prove the main result of this work which ensures the approximate controllability of the quasi-$\delta$-type boundary control problem. This is the first instance in which controllability of a system using boundary controls is considered. Before doing that it is convenient to introduce the following lemma:
			\begin{lemma}\label{lemma:chambrion-thm-applies}
				Consider a normal bilinear quantum control system $i \Dt \psi = H_0 \psi + u(t) H_1 \psi$ with $H_0$, $H_1$ such that $H(t) = H_0 + u(t)H_1$ satisfies hypothesis of \autoref{thm:timedependent-linearCombination-Hamiltonian}. Then given any $\varepsilon > 0$ there exist perturbed Hamiltonians $\tilde{H}_0$, $\tilde{H}_1$ with the same domain as $H_0$ such that they satisfy the conditions of \autoref{thm:timedependent-linearCombination-Hamiltonian} and also those of \autoref{thm:chambrion-controllability} and such that for every $t > s$ and every $C^1$ piecewise function $u:[s,t] \to \R$, it holds:
				\[
					\norm{U(t, s) \psi - \tilde{U}(t,s) \psi} < \varepsilon, \qquad
					(\forall \psi \in \dom H_0),
				\]
				where we denote by $U(t,s)$ and $\tilde{U}(t,s)$ the unitary propagators associated with $H(t)$ and $\tilde{H}(t) = \tilde{H}_0 + u(t) \tilde{H}_1$ respectively.
			\end{lemma}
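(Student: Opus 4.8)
The plan is to obtain $\tilde{H}_0,\tilde{H}_1$ from $H_0,H_1$ by adding \emph{bounded}, self-adjoint perturbations that are diagonal (resp. tridiagonal) in the eigenbasis $\{\phi_n\}$ of $H_0$. Boundedness is the crucial feature: it guarantees at once that the common domain, self-adjointness and the relative-boundedness hypothesis \autoref{enum:H3-thm-linCombHamiltonian} of \autoref{thm:timedependent-linearCombination-Hamiltonian} are inherited, and it is what will let the final estimate be uniform in $\psi$. Writing $P=\tilde{H}_0-H_0$ and $Q=\tilde{H}_1-H_1$, the argument splits into (i) choosing $P,Q$ so that the two genericity conditions of \autoref{thm:chambrion-controllability} hold while $\norm{P},\norm{Q}$ are as small as we wish, and (ii) transferring the resulting closeness of the generators to closeness of the propagators through a Duhamel identity.

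For $\tilde{H}_0$ I would set $\tilde{H}_0=H_0+P$ with $P=\sum_n \eta_n\langle\phi_n,\cdot\rangle\phi_n$, so that the eigenvectors are unchanged (hence hypotheses \autoref{H1}--\autoref{H2} persist with the \emph{same} orthonormal basis) and the eigenvalues become $\lambda_n+\eta_n$. The shifts $\eta_n$ are chosen inductively inside a small interval $[0,\rho]$: having fixed $\eta_1,\dots,\eta_N$, the requirement that the gaps $\{\lambda_{k+1}-\lambda_k+\eta_{k+1}-\eta_k\}_{k\le N}$ stay $\Q$-linearly independent forbids, for each rational tuple, a single value of $\eta_{N+1}$; since there are only countably many such tuples and $[0,\rho]$ is uncountable, a valid $\eta_{N+1}$ always exists, keeping $\norm{P}=\sup_n\abs{\eta_n}\le\rho$. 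For $\tilde{H}_1=H_1+Q$ I would let $Q$ carry only nearest-neighbour entries $\langle\phi_{n+1},Q\phi_n\rangle=q_n$ (and their conjugates), taking $q_n=0$ where $\langle\phi_{n+1},H_1\phi_n\rangle\neq0$ and a small $q_n\neq0$ otherwise, so that $\langle\phi_{n+1},\tilde{H}_1\phi_n\rangle\neq0$ for every $n$; such a tridiagonal operator with entries bounded by $\rho'$ has $\norm{Q}\le 2\rho'$, and $\phi_n\in\dom\tilde{H}_1=\dom H_1$ yields \autoref{H3}.

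To compare propagators I would use the Duhamel identity (valid on $\DD$ since both propagators are strongly differentiable and preserve $\DD$ by \autoref{thm:timedependent-linearCombination-Hamiltonian}):
\[ \tilde{U}(t,s)\psi-U(t,s)\psi=i\int_s^t U(t,r)\bigl[H(r)-\tilde{H}(r)\bigr]\tilde{U}(r,s)\psi\,dr, \]
where $H(r)-\tilde{H}(r)=-\bigl(P+u(r)Q\bigr)$ is bounded. Since $U(t,r)$ and $\tilde{U}(r,s)$ are unitary, taking norms gives
\[ \norm{U(t,s)\psi-\tilde{U}(t,s)\psi}\le\norm{\psi}\left[(t-s)\norm{P}+\norm{Q}\int_s^t\abs{u(r)}\,dr\right]. \]
In contrast with \autoref{thm:aprox-Hamiltonians-aprox-sol}, where the difference of generators is \emph{unbounded} and the bound depends on $\norm{H_i\psi}$, here the boundedness of $P,Q$ makes the right-hand side depend on $\psi$ only through $\norm{\psi}$, so for unit states it suffices to take $\norm{P},\norm{Q}$ small.

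The main obstacle is the simultaneous, \emph{countable} list of requirements: every spectral gap must be $\Q$-linearly independent and every nearest-neighbour coupling nonzero, yet the perturbation norms must be arbitrarily small. I expect the $\Q$-linear independence to be the delicate point, and I would handle it by the inductive ``avoid countably many resonances'' construction above (a Baire-category phenomenon). A second, more bookkeeping difficulty is the uniformity claimed over all $t>s$ and all controls $u$: the displayed bound grows with $t-s$ and with $\int\abs{u}$, so making it $<\varepsilon$ genuinely relies on the admissible controls satisfying $u\in(0,c)$ on a fixed finite horizon — precisely the regime of \autoref{thm:chambrion-controllability} — in which case $\int_s^t\abs{u}\le c\,(t-s)$ is controlled and choosing $\rho,\rho'$ small enough closes the argument.
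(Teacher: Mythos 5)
Your proposal is correct and follows the same overall architecture as the paper's proof: perturb $H_0$ by a small bounded operator diagonal in the eigenbasis $\{\phi_n\}$ (so hypotheses \autoref{H1}--\autoref{H2} of \autoref{thm:chambrion-controllability} survive with the same basis), perturb $H_1$ by a small bounded nearest-neighbour operator to make all couplings $\langle\phi_{n+1},\tilde{H}_1\phi_n\rangle$ nonzero, and then transfer smallness of the perturbation to closeness of the propagators. The differences are in the execution of two sub-steps, and in both cases your version is arguably cleaner. For the spectral gaps, the paper fixes in advance a rationally independent sequence of irrationals $\nu_k<2^{-k}$ and sets $\tilde{H}_0=H_0+\mu_0\sum_k\nu_k\phi_k\phi_k^\dagger$ with $\mu_0\in\Q$, whereas you choose the shifts $\eta_n$ inductively by avoiding, at each stage, the countably many values forbidden by rational relations; your Baire-type argument is self-contained and sidesteps the (slightly glib) claim that the perturbed gaps $\lambda_{k+1}-\lambda_k+\mu_0(\nu_{k+1}-\nu_k)$ inherit $\Q$-independence, as well as the paper's internally inconsistent requirement that $\{\nu_k\}$ be increasing while $\nu_k<2^{-k}$. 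For $H_1$, the paper's perturbation $H_{1,p}=\sum_{n\in N}\alpha_n\phi_{n+1}\phi_n^\dagger$ is not symmetric as written; your inclusion of the conjugate entries (with $\norm{Q}\le 2\rho'$) repairs this. For the propagator comparison, the paper invokes \autoref{thm:aprox-Hamiltonians-aprox-sol}, whose bound involves $\norm{H_i\psi}$ (and whose proof differentiates $U_\l(t,s)\psi$ somewhat loosely), while your Duhamel identity exploits that $H(r)-\tilde{H}(r)=-(P+u(r)Q)$ is bounded, yielding an estimate depending on $\psi$ only through $\norm{\psi}$ and hence uniform over unit states; the Duhamel differentiation itself is standard and justified at the paper's level of rigor by \autoref{thm:timedependent-linearCombination-Hamiltonian}. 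Finally, you correctly flag that the estimate cannot hold uniformly over all $t>s$ and all real-valued $u$ with a fixed perturbation, since it grows like $(t-s)\norm{P}+\norm{Q}\int_s^t\abs{u}$; the paper's proof has exactly the same implicit restriction (its $\mu_0,\mu_1$ are chosen after the interval is fixed), and the lemma is only ever applied on a fixed horizon $[0,T]$ with $u\in(0,c)$, which is the regime your bound covers.
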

			\begin{proof}
				Let $\lambda_k, \phi_k$ denote the eigenpairs of $H_0$. If it is such that $\Q$-linear independence condition of \autoref{thm:chambrion-controllability} hold, we set $\tilde{H}_0 = H_0$; otherwise take an increasing sequence of positive irrational numbers $\nu_k$ such that they are rationally independent and $\nu_k < 2^{-k}$. Then define
				\[ H_{0,p} = \sum_{k \in \N} \nu_k \phi_k \phi_k^\dagger. \]
				Obviously $H_{0,p}$ has the same domain as $H_0$ because $\norm{H_{0,p}\psi}^2 \leq \sum_k 2^{-2k}  |\langle \phi_k, \psi \rangle |^2$, from what we get $\norm{H_{0,p}\psi} \leq \norm{\psi}$ and $H_{0,p}$ can be chosen to have the same domain as $H_0$.

				Define $\tilde{H}_0 = H_0 + \mu_0 H_{0,p}$ with $\mu_0 \in \Q$. Then $\tilde{H}_0$ has eigenvalues $\lambda_k + \mu_0 \nu_k$ satisfying the rationally independence condition of \autoref{thm:chambrion-controllability}.

				If $H_1$ is such that $\langle \phi_{n+1}, H_1 \phi_n \rangle = 0$ for $n \in N \subset \N$, take a sequence of positive, non-vanishing terms $\{\alpha_n\}_{n \in N}$ such that $\alpha_n < 2^{-n}$ and define
				\[ H_{1,p} = \sum_{n \in N} \alpha_n \phi_{n+1} \phi_n^\dagger. \]
				Again the domain of $H_{1,p}$ can be chosen to be $\dom H_0$, since $\norm{H_{1,p} \psi}^2 \leq \sum_{n \in N} 2^{-2n} |\langle \phi_n, \psi \rangle|^2$ and thus $\norm{H_{1,p}\psi} \leq \norm{\psi}$.

				Defining $\tilde{H}_1 = H_1 + \mu_1 H_{1,p}$ with $\mu_1$ real it is clear that $\tilde{H}_1$ satisfies $\langle \phi_{n+1}, \tilde{H}_1 \phi_n \rangle \neq 0$ for any $n \in \N$.

				From what we already said, taking into account that $H_{0,p}$ and $H_{1,p}$ are bounded, it follows that if $H(t)$ satisfies the hypothesis of \autoref{thm:aprox-Hamiltonians-aprox-sol}, so does $\tilde{H}(t)$ on each interval in which $u(\tau)$ is $C^1$ and therefore, taking $\mu_0$ and $\mu_1$ small enough we have
				\[
					\norm{U(t, s) \psi - \tilde{U}(t,s) \psi} < \varepsilon, \qquad
					\text{for all } \psi \in \dom H_0.
				\]
			\end{proof}

			Before introducing the next theorem we will proof the following lemma, wich shows that the families of Hamiltonians $H(t)$ that we consider (i.e., those on \autoref{prop:quasi-periodic-BC-system-equivalence}) satisfy hypothesis $\autoref{enum:H3-thm-linCombHamiltonian}$ of \autoref{thm:timedependent-linearCombination-Hamiltonian} and therefore have well defined evolutions. It is clear that if $A$ is a simple magnetic vector potential, $H(t)$ fulfills all the hypothesis but \autoref{enum:H3-thm-linCombHamiltonian}, which requires some work to prove. The following result shows that hypothesis $\autoref{enum:H3-thm-linCombHamiltonian}$ also holds and thus \autoref{thm:timedependent-linearCombination-Hamiltonian} can be actually applied to $H(t)$.

			\begin{lemma}\label{lemma:magnetic-Laplacian-satisfy-thm5.3}
			    Let $A: G \to \mathbb{R}$ be an edge-constant magnetic vector potential over a metric graph $G$, and denote by $-D_A^2$ a self-adjoint extension of the associated magnetic Laplacian, whose domain we denote by $\mathcal{D} \subset \tilde{H}^2(G)$. For every $r > 0$, there exists a constant $K$ (not depending on $A$) such that if $\max_{e \in E} |A_e| < r$ then
			    \[
			        \norm{\D[2]{\psi}{x}} \leq K \left(\norm{D_A^2 \psi} + \norm{\psi}\right)
			    \]
			    for all $\psi \in \mathcal{D}$.
			\end{lemma}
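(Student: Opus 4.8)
The plan is to reduce the statement to a one–dimensional interpolation inequality on each edge, using crucially that $A$ is edge-constant. Since $A_e'=0$, expanding the magnetic Laplacian edge by edge gives the pointwise identity
\[
	D_A^2\psi = \psi'' - 2iA\psi' - A^2\psi,
\]
valid for every $\psi\in\tilde{H}^2(G)$ (in particular for $\psi\in\mathcal{D}$), so that $\psi'' = D_A^2\psi + 2iA\psi' + A^2\psi$. Because $|A_e|<r$ on every edge, $\norm{A\psi'}\le r\norm{\psi'}$ and $\norm{A^2\psi}\le r^2\norm{\psi}$, and the triangle inequality yields
\[
	\norm{\psi''} \le \norm{D_A^2\psi} + 2r\norm{\psi'} + r^2\norm{\psi}.
\]
All the difficulty is now concentrated in the term $\norm{\psi'}$, which a priori is of the same order as $\norm{\psi''}$; note that no vertex condition has been used so far.

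Next I would invoke the classical interpolation inequality for intermediate derivatives on a bounded interval (see \cite{adams_sobolev_2003}), which holds for every $H^2$ function without any boundary condition: for each edge $I_e$ and every $\epsilon>0$ there is a constant $C(\epsilon,\l_e)$ with
\[
	\norm{\psi_e'}_{L_2(I_e)} \le \epsilon\,\norm{\psi_e''}_{L_2(I_e)} + C(\epsilon,\l_e)\,\norm{\psi_e}_{L_2(I_e)}.
\]
Since $E$ is finite, squaring and summing over $e\in E$ gives an inequality of the same shape on $G$, namely $\norm{\psi'} \le \epsilon'\norm{\psi''} + C'\norm{\psi}$, where $\epsilon'$ is a fixed multiple of $\epsilon$ and $C' = \max_{e\in E} C(\epsilon,\l_e)$ depends only on $\epsilon$ and on the (fixed) edge lengths of $G$.

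The final step is an absorption argument. Choosing $\epsilon$ as a function of $r$ alone so that $2r\epsilon' \le \tfrac{1}{2}$, substituting the interpolation bound into the previous display and moving the resulting $\tfrac{1}{2}\norm{\psi''}$ to the left-hand side gives
\[
	\norm{\psi''} \le 2\norm{D_A^2\psi} + K\norm{\psi}, \qquad K = 2\bigl(2rC' + r^2\bigr),
\]
which is the claim. The step that requires care, and the only real content beyond routine manipulation, is precisely this uniformity in $A$: the potential enters only through the coefficients $2iA$ and $A^2$, both controlled by $r$, while $\epsilon$ and $C'$ were chosen from $r$ and the edge lengths of $G$ alone. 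Hence $K$ depends on $r$ and on $G$ but never on the particular $A$ — exactly what is needed when the lemma is applied to a time-dependent simple potential $A(t)$ in order to verify hypothesis \autoref{enum:H3-thm-linCombHamiltonian} of \autoref{thm:timedependent-linearCombination-Hamiltonian} along the whole evolution.
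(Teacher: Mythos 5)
Your proof is correct, and it takes a genuinely different --- and more elementary --- route than the paper. The paper argues softly in three steps: for each fixed $A$ it obtains a constant $K_A$ from the Closed Graph Theorem (boundedness of $\Dx[2](D_A^2+iI)^{-1}$, which uses self-adjointness of $D_A^2$ and closedness of the maximal second-derivative operator), then proves a perturbation estimate making the constant uniform on a small ball around each $A$, and finally removes the $A$-dependence by covering the compact set $\{\max_{e\in E}|A_e|\le r\}\subset\R^{|E|}$ with finitely many such balls and taking the largest constant. You instead expand $D_A^2\psi=\psi''-2iA\psi'-A^2\psi$ (valid precisely because $A$ is edge-constant), control the middle term by the $\epsilon$-form of the interpolation inequality for intermediate derivatives --- which is in fact the actual content of \cite[Thm.~5.2]{adams_sobolev_2003}, the very result the paper invokes in its perturbation step, though only in its fixed-constant form --- and then absorb $\norm{\psi''}$ into the left-hand side. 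This yields an explicit constant depending only on $r$ and the edge lengths, needs neither self-adjointness, nor the vertex conditions, nor any compactness argument, and actually proves the stronger statement that the bound holds for every $\psi\in\tilde{H}^2(G)$, not just on the domain of a self-adjoint extension. (One cosmetic point: your final bound $\norm{\psi''}\le 2\norm{D_A^2\psi}+K\norm{\psi}$ gives the form stated in the lemma after replacing $K$ by $\max\{2,K\}$.) The paper's softer argument would survive in settings where no explicit expansion is available, but for edge-constant potentials your absorption argument is shorter, fully quantitative, and essentially streamlines the computation that the paper's own Equation \eqref{eq:magneticLaplacian-lowerBound-3} carries out in a more roundabout way.
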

			\begin{proof}
			    We will prove this lemma in three steps. First, we will show that for every vector potential $A: G \to \mathbb{R}$ the bound in the lemma stands with constant $K_A$ depending on $A$. In fact, since the magnetic Laplacian is self-adjoint, $-D_A^2 + iI$ is invertible with bounded inverse. Also $-\Dx[2]$ is closed in $\tilde{H}^2(G)$ since it is the adjoint of the standard Laplacian with the minimal symmetric domain ($i.e.$, with domain $\mathcal{D}_0 = \{\psi \in \tilde{H}^2(G): \veval{\psi} = 0, \veval{\dot{\psi}} = 0\}$) as is well-known (see, e.g., \cite{lions_problemes_1968}). This two facts imply $\Dx[2] (D_A^2 + iI)^{-1}$ is a bounded operator in $L_2(G)$. Therefore, for any $\psi \in \mathcal{D}$,
			    \begin{equation}\label{eq:magneticLaplacian-lowerBound-1}
			        \norm{\D[2]{\psi}{x}} = \norm{\Dx[2] (D_A^2 + iI)^{-1} (D_A^2 + iI) \psi}
			        \leq K_A \norm{(D_A^2 + iI) \psi} \leq K_A \left(\norm{D_A^2 \psi} + \norm{\psi}\right)
			    \end{equation}
			    where we have defined $K_A = \norm{\Dx[2] (D_A^2 + iI)^{-1}}$.

			    Once we have proved Equation \eqref{eq:magneticLaplacian-lowerBound-1}, we can prove that for every edge-constant vector potential $A$ there exists an $\varepsilon_A > 0$ such that for any edge-constant magnetic vector potential $B$ satisfying $\max_{e \in E} |A_e - B_e| \leq \varepsilon_A$ it holds
			    \[
			        \norm{\D[2]{\psi}{x}} \leq \tilde{K}_A \left(\norm{D_B^2 \psi}+\norm{\psi}\right),
			    \]
			    with $\tilde{K}_A > 0$ not depending on $B$. Indeed, from Equation \eqref{eq:magneticLaplacian-lowerBound-1} we have
			    \begin{equation}\label{eq:magneticLaplacian-lowerBound-2}
			        \norm{\D[2]{\psi}{x}} \leq K_A \left(\norm{D_A^2 \psi} + \norm{\psi}\right)
			        \leq K_A \left(\norm{D_A^2 \psi - D_B^2 \psi} + \norm{D_B^2\psi} + \norm{\psi} \right).
			    \end{equation}
			    Let us examine the first term in the parenthesis. By definition of the norm in $L_2(G)$,
			    \begin{equation}\label{eq:magneticLaplacian-lowerBound-3}
			        \begin{alignedat}{2}
			            \norm{D_A^2 \psi - D_B^2 \psi}^2
			            &= \sum_{e \in E} \norm{(D_A^2 \psi)_e - (D_B^2 \psi)_e}^2_{L_2(I_e)} \\
			            &= \sum_{e \in E} \norm{(A_e^2 - B_e^2) \psi_e + 2i(A_e - B_e)\D{\psi_e}{x_e}}^2_{L_2(I_e)}.
			        \end{alignedat}
			    \end{equation}
			    Denoting $\varepsilon_e = |A_e - B_e|$ and using the triangle inequality one gets
			    \[\begin{alignedat}{2}
			        \norm{(A_e^2 - B_e^2) \psi_e + 2i(A_e - B_e)\D{\psi_e}{x_e}}_{L_2(I_e)}
			        &\leq \varepsilon_e(2|A_e| + \varepsilon_e) \norm{\psi_e}_{L_2(I_e)} + 2 \varepsilon_e \norm{\D{\psi_e}{x_e}}_{L_2(I_e)} \\
			        &\leq \varepsilon(2\max_e|A_e| + \varepsilon) \norm{\psi_e}_{L_2(I_e)} + 2 \varepsilon \norm{\D{\psi_e}{x_e}}_{L_2(I_e)},
			    \end{alignedat}\]
			    where we take $\varepsilon = \max_e \varepsilon_e$.
			    Now, using the well-known fact \cite[Thm. 5.2]{adams_sobolev_2003} that
			    \[
			        \norm{\D{\psi_e}{x_e}}_{L_2(I_e)} \leq \tilde{K} \left(\norm{\D[2]{\psi_e}{x_e}}_{L_2(I_e)} + \norm{\psi_e}_{L_2(I_e)} \right),
			    \]
			    we get
			    \[\begin{alignedat}{2}
			        \norm{(A_e^2 - B_e^2) \psi_e + 2i(A_e - B_e)\D{\psi_e}{x_e}}_{L_2(I_e)}
			        &\leq \varepsilon(2\max_e|A_e| + \varepsilon + 2 \tilde{K}) \norm{\psi_e}_{L_2(I_e)} + 2 \varepsilon \tilde{K} \norm{\D[2]{\psi_e}{x_e}}_{L_2(I_e)} \\
			        &\leq \varepsilon(2\max_e|A_e| + \varepsilon + 2 \tilde{K}) \left(
			            \norm{\psi} + \norm{\D[2]{\psi}{x}}
			        \right),
			    \end{alignedat}\]
			    where we have used that $\norm{\psi}^2 = \sum_{e \in E} \norm{\psi_e}^2$.

			    Let us denote $\kappa(\varepsilon) = \varepsilon(2\max_e|A_e| + \varepsilon + 2 \tilde{K})$, which is a continuous monotone function of $\varepsilon$ with range $[0, \infty)$. Substituting into \autoref{eq:magneticLaplacian-lowerBound-3} and taking the square root
			    \[
			        \norm{D_A^2 \psi - D_B^2 \psi} \leq \kappa(\varepsilon)\sqrt{|E|} \left(
			            \norm{\psi} + \norm{\D[2]{\psi}{x}}
			        \right).
			    \]

			    Hence, from Equation \eqref{eq:magneticLaplacian-lowerBound-2} we get
			    \[
			        \left(1 - \kappa(\varepsilon) K_A\sqrt{|E|}\right) \norm{\D[2]{\psi}{x}}
			        \leq K_A\left(1 + \kappa(\varepsilon) \sqrt{|E|}\right) \left(\norm{D_B^2\psi} + \norm{\psi}\right).
			    \]
			    Obviously we can choose $\varepsilon_A$ such that $\kappa(\varepsilon_A) K_A \sqrt{|E|} = 1/2$, and then
			    \[
			    	\norm{\D[2]{\psi}{x}} \leq 2K_A\left(1 + \kappa(\varepsilon_A) \sqrt{|E|}\right)
			    	\left(\norm{D_B^2\psi} + \norm{\psi}\right)
			    	\eqqcolon \tilde{K}_A \left(\norm{D_B^2\psi} + \norm{\psi}\right).
			    \]

			    The proof can be finished by a compacity argument. Since we are only considering edge-constant vector potentials, each potential $A$ defines a point in $\mathbb{R}^{|E|}$. The subset $\mathcal{K}$ of $\mathbb{R}^{|E|}$ associated to the set of vector potentials satisfying $\max_{e \in E} A_e \leq r$ is a compact subset. Now, define $U_A = \{B \in \mathbb{R}^{|E|} \mid \max_{e \in E} |B_e - A_e| < \varepsilon_A\}$; the family $\{U_A\}_{A \in U}$ forms a covering of $\mathcal{K}$ and by compacity it admits a finite subcovering $\{U_{A_i}\}_i$. Taking the maximum of the associated constants,
			    \[
			    	K = \max_i \tilde{K}_{A_i},
			    \]
			    concludes the proof.
			\end{proof}

			\begin{thm}\label{thm:controllability-piecewise-smooth}
				Every simple quasi-$\delta$-type boundary control system is approximately controllable with $t \mapsto \chi(t) = A(t)x + b(t)$ (see \autoref{eq:form-of-chi}) piecewise $C^2$ and $b'(t), A'(t) \in L^2(G)$ such that for every $e,e' \in E$, $b'_e(t) = b'_{e'}(t)$ and for every vertex $v$, $\sum_{e \in E_v} \oveval{A'}_e(v,t) = 0$.
			\end{thm}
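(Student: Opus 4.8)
The plan is to reduce the boundary control problem to a bilinear control system of the type covered by Chambrion's theorem (\autoref{thm:chambrion-controllability}) and then transfer approximate controllability back through the unitary equivalences established earlier. First I would invoke \autoref{prop:quasi-periodic-BC-system-equivalence} to replace the simple quasi-$\delta$-type boundary control system by its unitarily equivalent magnetic control system, whose evolution is generated by
\[
    H(t) = -\left[\left(\Dx - i A(t)\right)^2 + b'(t) + A'(t) x\right]
\]
with $\delta$-type vertex conditions. The simplicity hypothesis, together with the constraint $\sum_{e\in E_v}\oveval{A'}_e(v,t) = 0$, guarantees that $\dom H(t)$ is independent of $t$, and \autoref{lemma:magnetic-Laplacian-satisfy-thm5.3} supplies the relative bound $\norm{H_i\psi}\leq K(\norm{H(t)\psi}+\norm{\psi})$; hence \autoref{thm:timedependent-linearCombination-Hamiltonian} applies and $H(t)$ generates a strongly differentiable unitary propagator. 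Since the argument compares propagators, I would fix a compact time interval and work on $C^2$-pieces, on which \autoref{thm:aprox-Hamiltonians-aprox-sol} is available.

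Next I would exhibit the bilinear structure underlying $H(t)$. Expanding the edge-constant magnetic term gives $-(\Dx - iA)^2 = -\Dx[2] + 2iA\Dx + A^2$; in the exponential eigenbasis $\{\phi_n\}$ of the Laplacian $H_0 = -\Dx[2]$ the derivative $\Dx$ is diagonal, so the entire $A$-dependent part $2iA\Dx + A^2$ is diagonal, while the only term producing transitions between eigenstates is the position coupling $-A'(t)x$. Because the constraint $b'_e = b'_{e'}$ makes $-b'(t)$ a multiple of the identity, I would use the scalar control $b$ to cancel the $A^2$ contribution and the remaining scalar terms, and then organise the residual diagonal drift together with the position coupling into the bilinear form $H_0 + u(t)H_1$ with $H_1 = x$. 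I would verify Chambrion's structural hypotheses \autoref{H1}--\autoref{H3}: self-adjointness is already secured, $\{\phi_n\}$ is an orthonormal eigenbasis of $H_0$, and each $\phi_n$ lies in $\dom H_1$ since $x$ is bounded on the compact graph $G$.

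With the bilinear system in place, I would turn to Chambrion's spectral hypotheses, which need not hold for the bare system: the gaps $\lambda_{n+1}-\lambda_n$ may fail to be $\Q$-linearly independent and some couplings $\langle\phi_{n+1}, H_1\phi_n\rangle$ may vanish. This is precisely the situation addressed by \autoref{lemma:chambrion-thm-applies}. Applying it, I obtain perturbed operators $\tilde H_0,\tilde H_1$ on the same domain, still satisfying the hypotheses of \autoref{thm:timedependent-linearCombination-Hamiltonian}, now meeting both the $\Q$-independence of gaps and the nonvanishing-coupling conditions, and whose propagator $\tilde U(t,s)$ satisfies $\norm{U(t,s)\psi - \tilde U(t,s)\psi} < \varepsilon$ for every piecewise-$C^1$ control $u$. \autoref{thm:chambrion-controllability} then yields approximate controllability of the perturbed system, so that for any $\psi_0,\psi_T$ there are $T$ and $u$ with $\norm{\tilde U(T,0)\psi_0 - \psi_T} < \varepsilon$.

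Finally I would transfer the conclusion back: a triangle inequality combined with the propagator estimate of \autoref{lemma:chambrion-thm-applies} gives $\norm{U(T,0)\psi_0 - \psi_T} < 2\varepsilon$, establishing approximate controllability of the magnetic control system, and since the equivalence of \autoref{prop:quasi-periodic-BC-system-equivalence} is implemented by an isometry it preserves norms and carries this back to the original system, with the control realised by the curve $\chi(t) = A(t)x + b(t)$. The main obstacle I anticipate is the reduction to the strictly bilinear form $H_0 + u(t)H_1$: the control enters $H(t)$ both quadratically (through $A^2$) and through its time-derivative ($A'$), and the diagonal drift $2iA(t)\Dx$ is itself $A(t)$-dependent, so making the position coupling the single effective control while keeping the drift genuinely time-independent requires the careful joint choice of $A(t)$ and the compensating $b(t)$ (together with the resonant driving that Chambrion's framework encodes), as well as a direct verification that the position operator couples consecutive eigenstates, $\langle\phi_{n+1}, x\phi_n\rangle \neq 0$, so that \autoref{lemma:chambrion-thm-applies} has nonzero couplings to exploit.
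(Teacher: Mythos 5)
Your outer architecture (reduce via \autoref{prop:quasi-periodic-BC-system-equivalence}, invoke \autoref{thm:chambrion-controllability} through the perturbation \autoref{lemma:chambrion-thm-applies}, compare propagators with \autoref{thm:aprox-Hamiltonians-aprox-sol}) matches the paper, but the central step---manufacturing a normal bilinear system out of $H(t)$ itself---has a genuine gap that your own closing paragraph flags without resolving. Three things go wrong with your proposed reduction. First, $\Dx$ is \emph{not} diagonal in the eigenbasis of the Laplacian with $\delta$-type vertex conditions: already on a single interval with Neumann conditions the eigenfunctions are cosines and $\Dx$ maps them to sines, and on a general compact graph the eigenfunctions are edge-by-edge real combinations of $e^{\pm ikx}$, so the term $2iA\Dx$ genuinely couples eigenstates and your claim that the $A$-dependent part is diagonal fails. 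Second, the constraint $b'_e(t)=b'_{e'}(t)$ forces $b'(t)$ to act as a multiple of the identity, so it can only remove a global phase; it cannot cancel the edge-dependent function $\{A_e^2\}_{e\in E}$ unless all $A_e$ coincide in absolute value, which is false in general (cf.\ \autoref{example:3}). Third, even granting both points, the drift $-\Dx[2]+2iA(t)\Dx$ would remain time-dependent through $A(t)$, violating hypothesis \autoref{H1} of Chambrion's theorem, which demands time-independent $H_0,H_1$ and a single scalar control---and $A$ and $A'$ cannot be prescribed independently, which is exactly the difficulty the paper identifies at the outset of its proof.

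The paper's resolution, which is the actual substance of the proof and is absent from your proposal, is to \emph{freeze} the potential rather than diagonalize it. One introduces an auxiliary system $\tilde{H}(t) = -D^2 - u(t)\beta x$ whose drift is the magnetic Laplacian with a \emph{fixed} simple potential $a$, so the system is genuinely bilinear with $H_0=-D^2$, $H_1=\beta x$; \autoref{lemma:chambrion-thm-applies} and \autoref{thm:chambrion-controllability} then give a piecewise-constant control $u:[0,T]\to(0,c)$ steering within $\varepsilon/2$ of the target (note this also disposes of your worry about verifying $\langle\phi_{n+1},x\phi_n\rangle\neq 0$ directly: the lemma perturbs $H_1$ precisely to create nonvanishing couplings). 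The original system is then matched to the auxiliary one by choosing $A(t)$ piecewise on intervals of length $\tau=T/N$ as $A_k(t)=\{a_e+\beta_e\int_{k\tau}^{t}u(s)\,ds\}_{e\in E}$, resetting to $a$ at each $k\tau$---this resetting is why the theorem only asserts piecewise $C^2$ controls---so that $A'(t)=u(t)\beta$ exactly while $\norm{A-a}_\infty\leq c\tau$ stays small. On a common refinement of the partitions, \autoref{thm:aprox-Hamiltonians-aprox-sol} bounds each subinterval discrepancy by $\varepsilon_2$, and telescoping over the $n+1$ pieces gives total error $(n+1)\varepsilon_2=\varepsilon/2$; here \autoref{lemma:magnetic-Laplacian-satisfy-thm5.3} is needed precisely because the constant $K$ in hypothesis \autoref{enum:H3-thm-linCombHamiltonian} of \autoref{thm:timedependent-linearCombination-Hamiltonian} must be uniform over the compact ball of potentials swept out by $A(t)$, a point your proposal does not address. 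Without this freeze-and-approximate mechanism, the reduction to Chambrion's framework does not go through.
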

			\begin{proof}
				By \autoref{prop:quasi-periodic-BC-system-equivalence} the quasi-$\delta$-type boundary control system is controllable if and only if so is the magnetic controlled system with
				\begin{equation}\label{eq:original-hamiltonian}
					H(t) = - \left[\left(\Dx - iA(t)\right)^2 + b'(t) + A'(t) x \right]
				\end{equation}
				and $\delta$ vertex conditions. We will proof that this equivalent system is approximately controllable using \autoref{thm:chambrion-controllability}. The main problem to do this is the fact that $A(t)$ and $A'(t)$ are not independent, and to avoid this problem we need to proceed in two steps. First we define an auxiliary system to which \autoref{thm:chambrion-controllability} applies and then we use \autoref{thm:aprox-Hamiltonians-aprox-sol} to show that for any controls on the auxiliary system, its evolution is approximately the same as the evolution of the original system with some controls related to those on the auxiliary system.

				Let us start with the first step. Take a simple magnetic vector potential $a: G \to \R$ with associated magnetic Laplacian $-D^2 = -\left(\Dx - ia\right)^2$ and fix\footnote{If one is interested in having $b(t) \neq 0$, this proof can be extended straightforwardly to any $b(t)$ such that $b'(t)$ is edge and time independent.} $b(t) = 0$. We consider
				\begin{equation}\label{eq:def-betas}
					A'(t) = \{u(t)\beta_e\}_{e \in E},
				\end{equation}
				with $u$ a control and $\beta_e \in \mathbb{R}$, and define the auxiliary system with Hamiltonian
				\begin{equation}\label{eq:controllability-1}
					\tilde{H}(t) = -D^2 - u(t) \beta x,
				\end{equation}
				where $\beta: G \to \mathbb{R}$ denotes de edge-constant function with values $\beta_e$ ($e \in E$).

				It is easy to check that the assumptions made by Chambrion et al.\ are satisfied in our case: $H_0 = -D^2$ and $H_1 = \beta x$\footnote{Here $H_1$ must be understood edge by edge, that is, $H_1 \psi = \{\beta_e x_e \psi_e\}_{e \in E}$} are self-adjoint operators not depending on $t$, there exists an orthonormal basis of the Hilbert space $\H$ made of eigenfunctions of any magnetic Laplacian over $G$ provided that $G$ is compact \cite[Thm. 3.1.1]{berkolaiko2013introduction}, and $H_1$ is a bounded operator (since $G$ is compact) and thus $\dom H_1 = \H$.

				By \autoref{lemma:chambrion-thm-applies}, \autoref{thm:chambrion-controllability} can be applied (either to $\tilde{H}(t)$ or to a perturbed system with evolution as \emph{closed} as desired) and so the system is approximately controllable. Hence, for every initial state $\psi_0$, every target state $\psi_T$, every $\varepsilon > 0$ and every $c > 0$ there exists $T>0$ and $u(t): [0, T] \to (0, c)$ piecewise constant such that the evolution induced by $\tilde{H}(t)$, $\tilde{\psi}(t)$ satisfies $\tilde{\psi}(0) = \psi_0$ and $\norm{\tilde{\psi}(T) - \psi_T} < \varepsilon / 2$. Denote by $\tilde{U}(t,s)$ the unitary propagator associated to $\tilde{H}$ with that function $u(t)$.

				Now, choosing the vector potential from the original system \eqref{eq:original-hamiltonian} in such a way that its induced evolution is close enough to that of the auxiliary system, one guarantees that the evolved state reaches near the target state at time $T$. In order to do that, we split the time interval $[0, T]$ into $N$ pieces of length $\tau = T/N$, and for each of those subintervals define $A_k: [k\tau, (k+1)\tau) \to \mathbb{R}$ as
                \[
                    A_k(t) = \left\{a_e + \beta_e\int_{k\tau}^t u(s) \, ds\right\}_{e \in E} \in L^2(G),
                \]
                with $u(t)$ the piecewise control given by Chambrion et al.'s theorem. Taking
                \[
                    A(t) = \sum_{k = 0}^{N - 1} \chi_{[k \tau, (k+1) \tau)}(t) A_k(t),
                \]
                it is clear that $A'(t) = u(t)$ and $A \in \mathcal{C}_p(0,T)$. Also, by the mean value theorem,
                \begin{equation}\label{eq:controllability-3}
                    \norm{A - a}_\infty =
                    \adjustlimits\max_{k < N} \sup_{k\tau \leq t < (k+1)\tau}
                    \int_{k\tau}^t u(s) \,ds \leq c \tau
                \end{equation}
				For the moment, $\tau$ is arbitrary but later on we will need to choose it small enough. Note that if we take $\beta_e$ in \autoref{eq:def-betas} such that $\sum_{e \in E_v} \oveval{A'}_e(v,t) = 0$, $A(t)$ is a simple magnetic vector potential since so is $a$.

				Expanding the square on $D^2$ and having into account \autoref{lemma:magnetic-Laplacian-satisfy-thm5.3}, it is easy to check that, with the chosen $A(t)$,
				\begin{equation}\label{eq:controllability-4}
					H(t) = - \left( \Dx - iA(t)\right)^2 - A'(t) x
				\end{equation}
				fulfills the hypothesis of \autoref{thm:timedependent-linearCombination-Hamiltonian} in every interval $[k\tau, (k+1)\tau)$ and thus there exists a unitary propagator $U_k(t,s)$ describing the evolution induced by it for $t, s \in [k\tau, (k+1)\tau)$. For $t \in [k\tau, (k+1)\tau]$, $s \in [\ell \tau,
				(\ell + 1)\tau)$ with $\ell < k$ the unitary propagator is constructed multiplying them:
				\[
					U(t, s) = U_k(t, k\tau) U_{k-1}(k\tau, (k-1)\tau) \cdots U_\ell((\ell + 1)\tau, s).
				\]
				In what follows we omit the subscript on $U_k$ since the values of its arguments $t,s$ identify the index $k$ unambiguously.

				Finally, let $\{I_j\}$ with $I_j = [t_j, t_{j -1})$ be the coarser partition of $[0, T]$ which is a common refinement of both the partition $\{[k\tau, (k+1)\tau)\}_{k}$ and that given by the piecewise definition of $u(t)$. It is clear that the state of the system at time $T \in I_n$, assumed the evolution induced by $H(t)$ (defined in Equation \eqref{eq:controllability-4}) starting at $\psi_0$, can be written as
				\[
					\psi(T) = U(T, t_n) U(t_n, t_{n-1}) \cdots U(t_1, 0) \psi_0.
				\]
				And similarly for the state $\tilde{\psi}(T)$ if we assume evolution by $\tilde{H}$ defined in Equation \eqref{eq:controllability-1} (using the unitary propagator $\tilde{U}$ instead of $U$).

				It is straightforward to check that in every $I_j$ both Hamiltonians satisfy the hypothesis of \autoref{thm:aprox-Hamiltonians-aprox-sol}: the domain of magnetic Laplacians is fixed by $\delta$ vertex conditions independently of $t$, and the $\beta x$ operator is bounded. Both Hamiltonians satisfy the hypothesis of \autoref{thm:timedependent-linearCombination-Hamiltonian} (see \autoref{lemma:magnetic-Laplacian-satisfy-thm5.3}); and finally, $u(t)$ being $C^1(I_j)$ implies that the functions giving the time dependence of the Hamiltonians (after expanding the magnetic Laplacians) are also $C^1(I_j)$. Hence, for any $t, s \in I_j,$ and any $\varepsilon_2 > 0$ we can chose $\delta_1, \delta_2$ (defined in Eq. \eqref{eq:controllability-3} and Eq. \eqref{eq:controllability-2} respectively) as in \autoref{thm:aprox-Hamiltonians-aprox-sol} so that
				\[ \norm{\tilde{U}(t, s) \psi_0 - U(t, s) \psi_0} < \varepsilon_2. \]
				Hence, we have
				\[ \begin{alignedat}{2}
					\norm{\tilde{\psi}(T) - \psi(T)}
					&= \norm{\tilde{U}(T, t_n) \cdots \tilde{U}(t_1, s) \psi_0 -
							 U(T, t_n) \cdots U(t_1, s) \psi_0} \\
					&\leq \norm{\tilde{U}(T, t_n) \cdots \tilde{U}(t_2, t_1)U(t_1, s) \psi_0 - U(T, t_n) \cdots U(t_1, s) \psi_0} + \varepsilon_2 \\
					&\vdots \\
					&\leq (n+1) \varepsilon_2.
				\end{alignedat} \]
				Taking $\varepsilon_2 = \varepsilon / (2n+2)$, we have
				\[ \norm{\tilde{\psi}(T) - \psi(T)} \leq \frac{\varepsilon}{2}. \]

				Using that $\norm{\tilde{\psi}(T) - \psi_T} < \frac{\varepsilon}{2}$ we conclude
				\[ \norm{\psi(T) - \psi_T} < \varepsilon; \]
				that is, we have found controls $A(t)$ and $b(t)$ piecewise $C^2$ such that from any $\psi_0$ we can reach as close as we want to any $\psi_T$ and so the system is approximately controllable.
			\end{proof}
			Using \autoref{thm:controllability-piecewise-smooth} and an approximating argument similar to that in its proof, is easy to show that controls can also be smooth (not only piecewise smooth) functions of time.
			\begin{corol}\label{corol:controllability-smooth}
				Every simple quasi-$\delta$-type boundary control system is approximately controllable with $t \mapsto \chi(t) = A(t)x + b(t)$ smooth
				such that for every $e,e' \in E$, $b'_e(t) = b'_{e'}(t)$ and for every vertex $v$, $\sum_{e \in E_v} \oveval{A'}_e(v,t) = 0$.
			\end{corol}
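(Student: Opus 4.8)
The plan is to deduce smoothness of the controls from \autoref{thm:controllability-piecewise-smooth} by a density-plus-continuity argument: piecewise-$C^2$ controls are approximated by smooth ones in a topology for which the control-to-final-state map is continuous, with the two structural constraints preserved under the approximation. Fix $\psi_0$, $\psi_T$ and $\varepsilon>0$. First I would invoke \autoref{thm:controllability-piecewise-smooth} to obtain piecewise-$C^2$ controls $t\mapsto\chi(t)=A(t)x+b(t)$, satisfying $b'_e=b'_{e'}$ for all $e,e'$ and $\sum_{e\in E_v}\oveval{A'}_e(v,t)=0$ for every $v$, such that the solution $\psi$ of the equivalent magnetic system of \autoref{prop:quasi-periodic-BC-system-equivalence} reaches $\norm{\psi(T)-\psi_T}<\varepsilon/2$. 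These controls are genuinely $C^2$ except at a finite set of junction times $t_1<\dots<t_m$ in $[0,T]$.

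Next I would build smooth $\tilde A,\tilde b$ that coincide with $A,b$ off the windows $N_\eta=\bigcup_j(t_j-\eta,t_j+\eta)$ and interpolate smoothly across each $t_j$ inside the corresponding window of width $2\eta$. The decisive point is that both constraints are \emph{linear} in the controls: $b'_e-b'_{e'}=0$ and, for each $v$, $\sum_{e\in E_v}\oveval{A'}_e(v,t)=0$ are fixed (time-independent) linear functionals of the edge components. Since $A(t)$ lives, for every $t$, in the linear subspace $\mathcal{S}\subset\R^{|E|}$ of simple edge-constant potentials (because $a$ and $\beta$ are simple, see \autoref{eq:form-of-chi} and the construction in \autoref{thm:controllability-piecewise-smooth}), I would perform the bridging \emph{within} $\mathcal{S}$ — convex interpolation of the one-sided values $A(t_j^-),A(t_j^+)\in\mathcal{S}$ stays in $\mathcal{S}$, and differentiating a curve in a linear subspace keeps $\tilde A'(t)\in\mathcal{S}$. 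Hence $\tilde A'$ is automatically simple and $\tilde b'$ automatically edge-independent. Because $A$ is bounded, $\tilde A$ remains in a fixed ball $\max_e|\tilde A_e|<r$ independent of $\eta$, so \autoref{lemma:magnetic-Laplacian-satisfy-thm5.3} furnishes a \emph{uniform} constant $K$; consequently $\tilde H(t)=-\bigl[(\Dx-i\tilde A(t))^2+\tilde b'(t)+\tilde A'(t)x\bigr]$ satisfies the hypotheses of \autoref{thm:timedependent-linearCombination-Hamiltonian}, and its propagator $\tilde U$ exists with $t\mapsto\tilde U(t,s)\psi$ strongly differentiable.

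To compare the two evolutions I would not use \autoref{thm:aprox-Hamiltonians-aprox-sol} in its stated $L^\infty$ form but rather the intermediate, time-integrated bound inside its proof, $\norm{U(T,s)\psi-\tilde U(T,s)\psi}\le\sum_i\norm{H_i\psi}\int_s^T|f_i-g_i|\,dt$, applied on each subinterval where both Hamiltonians are $C^1$ and then concatenated by the triangle inequality exactly as in the proof of \autoref{thm:controllability-piecewise-smooth} (producing a factor of the number of pieces). The coefficient functions are the edge components $A_e,\ A_e^2,\ A'_e$ and $b'$ and their bridged versions; by construction $f_i-g_i$ is supported in $N_\eta$, of total measure $2m\eta$, and is bounded there by the one-sided junction data, uniformly in $\eta$. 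Thus each $\int_0^T|f_i-g_i|\,dt=O(\eta)$, and choosing $\eta$ small yields $\norm{\psi(T)-\tilde\psi(T)}<\varepsilon/2$; the triangle inequality then gives $\norm{\tilde\psi(T)-\psi_T}<\varepsilon$ with smooth controls. Finally, since $\tilde\chi=\tilde A x+\tilde b$ is smooth and satisfies both constraints, an appeal to \autoref{prop:quasi-periodic-BC-system-equivalence} transfers the conclusion back to the quasi-$\delta$-type boundary control system.

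The hard part will be the behavior at the junctions. One cannot expect sup-norm closeness, because a smooth function cannot uniformly approximate a control whose derivative jumps; the whole argument hinges on trading the sup-norm for the $L^1$-in-time (Duhamel) estimate and confining the unavoidable discrepancy to the small-measure windows $N_\eta$. This, in turn, requires that the bridging controls and their derivatives remain bounded on $N_\eta$ uniformly in $\eta$ — which is clean precisely when the piecewise-$C^2$ controls are continuous across the $t_j$ (only kinked, not jumping), so that $\tilde A'$ is squeezed between the two one-sided slopes; verifying that \autoref{thm:controllability-piecewise-smooth} can be run so as to deliver such continuous controls, and that the uniform bound of \autoref{lemma:magnetic-Laplacian-satisfy-thm5.3} is not lost during the smoothing, is the technical heart of the proof.
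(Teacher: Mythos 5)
Your overall route is the same as the paper's: invoke \autoref{thm:controllability-piecewise-smooth}, replace the piecewise-$C^2$ controls by smooth ones, and bound the resulting evolution discrepancy by the Hamiltonian-comparison estimate applied on the pieces and telescoped. Your refinements are genuine improvements in rigor over the paper's short sketch: you check that the two linear constraints survive the smoothing (by bridging inside the subspace of simple potentials, something the paper's appeal to \cite[\S5.3]{evans_partial_1998} never verifies), you keep the constant of \autoref{lemma:magnetic-Laplacian-satisfy-thm5.3} uniform, and, most importantly, you correctly observe that the sup-norm hypothesis of \autoref{thm:aprox-Hamiltonians-aprox-sol} is unattainable near the junctions and replace it by the $L^1$-in-time (Duhamel) bound from inside its proof.

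However, the caveat you defer at the end is a genuine gap, and it does not resolve in your favour. The controls produced by the proof of \autoref{thm:controllability-piecewise-smooth} are \emph{not} continuous across the junctions: $A(t)$ is reset to the reference potential $a$ at each time $k\tau$, so $A\bigl((k+1)\tau^-\bigr)=a+\beta\int_{k\tau}^{(k+1)\tau}u(s)\,ds$ while $A\bigl((k+1)\tau\bigr)=a$, a jump $\Delta_k$ of nonzero size since $u$ takes values in $(0,c)$. Across such a jump any smooth bridge $\tilde A$ coinciding with $A$ outside the window $N_\eta$ satisfies $\int_{N_\eta}\lvert\tilde A'-A'\rvert\,dt\geq\lvert\Delta_k\rvert-O(\eta)$ \emph{independently of} $\eta$, so your $O(\eta)$ bound fails precisely for the coefficient of the term $x$; dynamically, each bridge inserts an approximate gauge factor $e^{\pm i\Delta_k x}$ absent from the concatenated piecewise propagator, costing roughly $\lvert\Delta_k\rvert\,\norm{x\psi}$ per junction. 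Since $\sum_k\lvert\Delta_k\rvert=\lvert\beta_e\rvert\int_0^T u(s)\,ds$ is of order one (it does not shrink with $\tau$), the accumulated error cannot be made small by refining $\tau$ or $\eta$. Nor can the theorem be ``run so as to deliver continuous controls'': removing the resets destroys the bound $\norm{A-a}_\infty\leq c\tau$ on which the comparison with the fixed Chambrion system rests. (A smaller inherited issue: the telescoping applies the comparison estimate to the intermediate states, whose $\norm{H_i\,\cdot\,}$-bounds are also left implicit, as in the paper.) To be fair, the paper's own proof — mollify to get $\norm{A-\tilde A}<\delta_1$, $\norm{A'-\tilde A'}<\delta_2$ and reuse \autoref{thm:aprox-Hamiltonians-aprox-sol} piecewise — silently suffers from the identical obstruction, since no smooth function is sup-norm (or derivative-$L^1$) close to a jumping control; your proposal has the merit of locating exactly where the difficulty sits, but as written it is conditional on a continuity hypothesis that the cited construction does not supply.
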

			\begin{proof}
				Again, the quasi-$\delta$-type boundary control system is controllable if and only if so is the magnetic controlled system with
				\[ H(t) = - \left[\left(\Dx - iA(t)\right)^2 + b'(t) + A'(t) x \right] \]
				and $\delta$ vertex conditions.

				From \autoref{thm:controllability-piecewise-smooth}, for every initial state $\psi_0$, every target state $\psi_T$ and every $\varepsilon > 0$, we have piecewise $C^2$ controls $\tilde{A}(t)$ and $b(t) = 0$ such that the evolution $\tilde{\psi}(t)$ induced by
				\[ \tilde{H}(t) = - \left[\left(\Dx - i\tilde{A}(t)\right)^2 + b'(t) + \tilde{A}'(t) x \right], \]
				satisfies $\tilde{\psi}(0) = \psi_0$ and $\norm{\tilde{\psi}(T) - \psi_T} < \varepsilon / 2$. Denote by $\tilde{U}(t,s)$ the unitary propagator associated to $\tilde{H}(t)$.

				Using some well-known approximation result (see, for example, \cite[\S5.3]{evans_partial_1998}) one can find $A(t)$ smooth such that $\norm{A - \tilde{A}} < \delta_1$ and $\norm{A' - \tilde{A}'} < \delta_2$. Taking $\{I_j\}_j$ the partition of $[0, T]$ given by the subintervals on which $\tilde{A}(t)$ is $C^2$ and using the same argument as in the proof of \autoref{thm:controllability-piecewise-smooth}, one can use \autoref{thm:aprox-Hamiltonians-aprox-sol} to show that the evolution induced by $H(t)$ satisfies
				\[ \norm{\psi(T) - \psi_T} < \varepsilon.\]
			\end{proof}

			\subsection{Examples}
			Before we finish this section, let us consider some examples of quantum circuits to illustrate the power of the results presented above. First, let us conclude the example we have been developing in the previous sections.

			\begin{loopexample}
				In the part \ref{example:0-3} of this example we showed that every vector potential is simple for $G_0$, and therefore \autoref{thm:controllability-piecewise-smooth} and \autoref{corol:controllability-smooth} apply. Thus the quasi-$\delta$-type boundary control system associated with $G_0$ (see Fig. \ref{fig:example0}) is approximately controllable with control $\alpha$ smooth (or piecewise smooth).
			\end{loopexample}

			\begin{figure}[b]
				\centering
				\begin{subfigure}[b]{0.32\textwidth}
					\centering
					\includegraphics[width=\columnwidth]{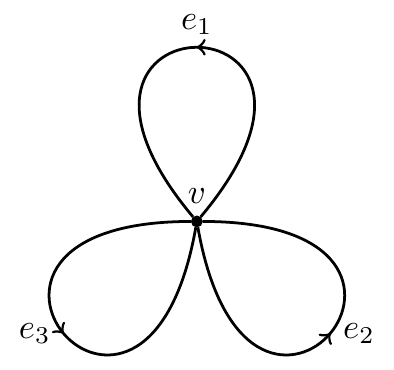}
					\caption{\autoref{example:1}: the bouquet $B_3$.}\label{fig:ex1}
				\end{subfigure}
				\begin{subfigure}[b]{0.32\textwidth}
					\centering
					\includegraphics[width=\columnwidth]{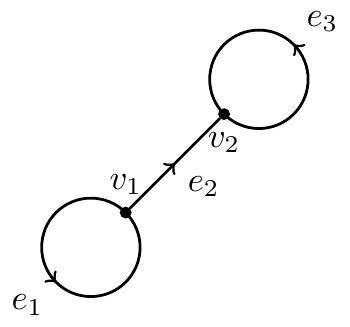}
					\caption{Graph $G_1$ in \autoref{example:2}.}\label{fig:ex2}
				\end{subfigure}
				\begin{subfigure}[b]{0.32\textwidth}
					\centering
					\includegraphics[width=\columnwidth]{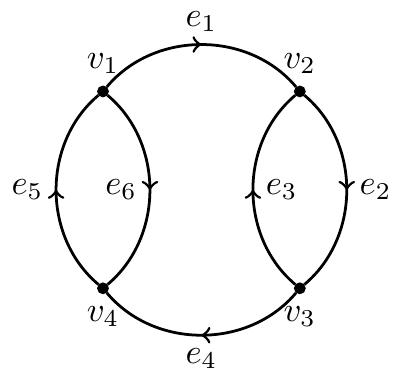}
					\caption{Graph $G_2$ in \autoref{example:3}.}\label{fig:ex3}
				\end{subfigure}
				\caption{Examples of quantum circuits that can be dealt with our study.} \label{fig:examples}
			\end{figure}
			\begin{example}\label{example:1}
				The first example we are going to study is the easiest one, the bouquet of order $n$, $B_n$, which is a graph with one vertex and $n$ loops (see \autoref{fig:ex1} for a graphical representation of $B_3$). In this case, it is clear that for any edge-constant magnetic vector potential $A$, we have
				\[
					\sum_{e \in E_v} \oveval{A}_e(v) = \sum_{e \in E_v^+} A_{e} - \sum_{e \in E_v^-} A_{e} = 0.
				\]
				Hence, any edge-constant vector potential is simple for $B_n$ and both \autoref{thm:controllability-piecewise-smooth} and \autoref{corol:controllability-smooth} apply:  the associated quasi-$\delta$-type boundary control system is approximately controllable with smooth (or piecewise smooth) associated vector potential whose derivative $A'(t) = \{A_e'(t)\}_{e \in E}$
				satisfies
				\[
					A'_e(t) = \beta_e A'_{e_0}(t) \qquad (\forall e \in E)
				\]
				for some reference edge $e_0 \in E$ and some real numbers $\beta_e$.
			\end{example}

			\begin{example}\label{example:2}
				In the previous example we studied graphs with just one vertex and we have shown that in such a case every edge-constant vector potential is simple, which leads to controllability with no constrain on $A'$. Let us consider now the graph $G_1$ represented in \autoref{fig:ex2}, consisting on two vertices joined by an edge, both of them with a loop. As we have seen in \autoref{example:1}, loops (i.e., edges starting and ending on the same vertex) do not contribute to the sum $\sum_{e \in E_v} \oveval{A}_e$ because the term corresponding to the edge \emph{entering} on the vertex cancels with the one \emph{leaving} the vertex. In this example, that makes that an edge-constant magnetic vector potential $A$ is simple for $G_1$ if and only if $A_{e_2} = 0$.

				In this case, applying \autoref{thm:controllability-piecewise-smooth} or \autoref{corol:controllability-smooth} one gets approximate controllability with smooth (or piecewise smooth) vector potentials such that $A_{e_2}(t) = 0$ for every $t$.
			\end{example}

			\begin{example}\label{example:3}
				For the last example, consider the graph $G_2$ showed in \autoref{fig:ex3}. In this case we study a graph with several vertices and no loops while having several closed paths. In this example, in order to be a simple edge-constant vector potential, $A$ needs to satisfy the following four equations:
				\[\begin{alignedat}{4}
					\text{Vertex $v_1$:} \quad & A_{e_5} = A_{e_1} + A_{e_6}; \qquad&
					\text{Vertex $v_2$:} \quad& A_{e_2} = A_{e_1} + A_{e_3}; \\
					\text{Vertex $v_3$:} \quad& A_{e_2} = A_{e_3} + A_{e_4}; \qquad&
					\text{Vertex $v_4$:} \quad& A_{e_5} = A_{e_4} + A_{e_6}.
				\end{alignedat}\]
				Thus, it is clear that $A$ has to be such that $A_{e_1} = A_{e_4}$. Then, by \autoref{thm:controllability-piecewise-smooth} or \autoref{corol:controllability-smooth} the system is approximately controllable with $A_{e_1}(t) = A_{e_4}(t)$ for all $t$.
			\end{example}

			Every generalized quasi-$\delta$-type boundary control system we have presented in this section is simple. We were not able to find an example which is not simple. That makes us conjecture that most graphs will be simple. However, even if we are not considering that in this work, not every magnetic vector potential is \emph{physically feasible} (i.e., can be obtained from a physical magnetic field). The condition for a vector potential to be \emph{physical} can be stated as follows: the sum of the vector potential over edges in a closed path needs to be equal to the flux of the associated magnetic field over that path. This imposes some conditions on the values of $A = \{A_e\}_{e \in E_v}$. We believe that this restriction is the one which is going to limit the set of graphs and vector potentials admissible.

	\section{Conclusions}
		We conclude this dissertation summarizing the ideas exposed. The main purpose of this work is to show that the formalism of Quantum Control at the Boundary (QCB) is indeed feasible and allows to control some quantum systems. To that end, we have been able to prove that some quantum circuits (that is, one-dimensional free quantum systems) can be approximately controlled interacting with them only through a change of the boundary conditions. Note that this way of interacting with the system is weaker than interacting applying external fields, and because of that it is expected to avoid decoherence (i.e., the lost of quantum effects in a composite system due to \emph{phase-shifts} induced by the interaction between the system and the environment). In addition, one-dimensional systems are the worst case scenario from the point of view of controllability since the space of controls, that is, the space of self-adjoint extensions, is parametrized by isometries acting on a finite dimensional space while in higher dimension the space of self-adjoint extensions is parametrized by isometries on infinite dimensional Hilbert spaces and is therefore much bigger. Moreover, the controllability result is obtained for a one-parameter subfamily. This supports the believe that QCB is feasible for general situations beyond the one considered here.

		Since the very idea of Quantum Control at the Boundary is quite novel and relies on properties intrinsic to unbounded operators, such that self-adjoint extensions and the dynamics induced by a family of self-adjoint extensions of the same symmetric operator, we had very few analytical results to rely on and we had to develop those analytical tools to prove the desired controllability results.

		In order to prove the sought approximate controllability, we first reviewed some known quantum graph theory \cite{berkolaiko2013introduction,kostrykin2003quantum} in sections \ref{sec:quantum-graphs}, \ref{sec:magnetic-laplacian}. Then, we were able to identify the proper analytical conditions in out setting in order to being able to apply some results in the book of M. Reed and B. Simon \cite[\S X.12]{reed_methods_1975}, leading us to results about the evolution of systems whose Hamiltonians depends on time through smooth functions, providing sufficient conditions for the existence of unitary propagators (\autoref{thm:timedependent-linearCombination-Hamiltonian}) and relating the evolution of two such systems with similar Hamiltonians (\autoref{thm:aprox-Hamiltonians-aprox-sol}). Finally we establish the equivalence between quasi-$\delta$-type boundary control systems (see \autoref{def:quasi-periodic-BCS}) and magnetic controlled quantum circuits, which allows to use the results of Chambrion et al.\ as an intermediate step to prove the approximate controllability.

		We want to remark that although we have proven controllability only for simple quasi-$\delta$-type boundary control systems, that does not imply that non-simple ones are not controllable. Non-simple quasi-$\delta$-type boundary control systems might be controllable but in order to deal with them one cannot use the theorems by Reed and Simon (\autoref{thm:reed-simon}) or by Chambrion et al.\ (\autoref{thm:chambrion-controllability}) because the domains of the associated Hamiltonians defining the dynamics are time-dependent. For that reason, to deal with non-simple systems one has to rely on results for non-constant domains such as that by J. Kisy\'nsky \cite{kisynski_sur_1964}. That seems the most natural step to generalize the results presented here.

		From this point, some lines of research may follow. Once controllability is achieved, it is interesting to address the optimal control problem as well as studying ways to apply QCB to quantum computation. On the other hand, it is also interesting to study how to generalize this results to the case of higher dimensions.

	\newpage
	\bibliographystyle{acm}
	\bibliography{bibliografia}
	\addcontentsline{toc}{section}{References}
\end{document}